\documentclass[english]{article}
\usepackage{authblk}

\usepackage[utf8]{inputenc}
\usepackage{microtype}

\usepackage[T1]{fontenc}
\usepackage{libertine}
\usepackage[table]{xcolor}

\usepackage{amssymb,amsthm}
\usepackage{bm}

\newtheorem{thm}{Theorem}[section]
\newtheorem*{thm*}{Theorem}
\newtheorem{lme}{Lemma}[section]
\newtheorem{prop}{Proposition}[section]

\newtheorem{cor}{Corollary}[section]

\usepackage[format=plain,font=it]{caption}
\usepackage{algorithm}
\usepackage{algpseudocode}
\makeatletter
\def\plist@algorithm{Alg.\space} 
\makeatother
\usepackage{pdflscape}
\usepackage{enumitem}
\usepackage{float}
\usepackage{subcaption}
\usepackage{amsmath}
\usepackage{amsfonts}
\usepackage{comment}
\usepackage{dsfont}
\newtheorem*{remark}{Remark}

\usepackage{mathtools} 
\usepackage{booktabs} 
\usepackage{tikz} 
\usetikzlibrary{arrows.meta, positioning, calc}
\usepackage{multirow}

\usepackage{array}
\newcolumntype{C}[1]{>{\centering\arraybackslash}m{#1}}
\newcolumntype{R}[1]{>{\raggedleft\arraybackslash}m{#1}}
\newcolumntype{L}[1]{>{\raggedright\arraybackslash}m{#1}}

\usepackage{graphicx,pstricks,psfrag}
\definecolor{grisclair}{HTML}{A7A7A7}
\definecolor{bleu}{RGB}{0,101,189}
\definecolor{vert}{HTML}{004D40}
\definecolor{rose}{HTML}{D81B60}
\definecolor{bleuTOL}{HTML}{332288}
\definecolor{wongBlack}{RGB}{0,0,0}
\definecolor{wongGold}{RGB}{230, 159, 0}
\definecolor{wongLightBlue}{RGB}{86, 180, 233}
\definecolor{wongGreen}{RGB}{0, 158, 115}
\definecolor{wongYellow}{RGB}{240, 228, 66}
\definecolor{wongBlue}{RGB}{0, 114, 178}
\definecolor{wongOrange}{RGB}{213, 94, 0}
\definecolor{wongPurple}{RGB}{204, 121, 167}
\definecolor{colUncalibrated}{RGB}{191, 191, 191}
\definecolor{colRecalibrated}{RGB}{197, 214, 231}

\definecolor{bleuTOL}{HTML}{332288}
\definecolor{vertTOL}{HTML}{117733}
\definecolor{vertClairTOL}{HTML}{44AA99}
\definecolor{bleuClairTOL}{HTML}{88CCEE}
\definecolor{sableTOL}{HTML}{DDCC77}
\definecolor{parmeTOL}{HTML}{CC6677}
\definecolor{magentaTOL}{HTML}{AA4499}
\definecolor{roseTOL}{HTML}{882255}

\definecolor{wongPurple}{RGB}{204, 121, 167}
\definecolor{wongLightBlue}{RGB}{86, 180, 233}
\definecolor{gris}{HTML}{A9A9A9}

\DeclareTextFontCommand{\texttt}{\ttfamily\small}

\usepackage{hyperref}
\hypersetup{
    plainpages=false,
    bookmarksopen,
    bookmarksnumbered,
    unicode=true,          
    pdftoolbar=true,        
    pdfmenubar=true,        
    pdffitwindow=false,     
    pdfstartview={FitH},    
    pdftitle={Sequential Conditional Transport on Probabilistic Graphs for Interpretable Counterfactual Fairness},    
    pdfauthor={Fernandes Machado, Agathe and Voitsitska, Iryna and Charpentier, Arthur and Gallic, Ewen},     
    pdfkeywords={fairness, individual fairness, knothe's rearrangement, optimal transport, sequential transport, causality, DAG}, 
    pdfnewwindow=true,      
    colorlinks=true,       
    linkcolor=black,          
    citecolor={wongBlue},        
    filecolor={rose},      
    urlcolor={wongBlue},           
}

\usepackage{doi}
\usepackage{natbib}

\begin{document}



  \author[1]{Agathe {Fernandes Machado}}
  \author[2]{Iryna Voitsitska}
  \author[1,3]{Arthur Charpentier} 
  \author[4,5]{Ewen Gallic}
  \affil[1]{Universit\'e du Qu\'ebec \`a Montr\'eal (UQAM), Canada}
  \affil[2]{Ukrainian Catholic University, Ukraine}
  \affil[3]{Kyoto University, Japan}
  \affil[4]{Aix Marseille Université, CNRS, AMSE, Marseille, France}
   \affil[5]{CNRS - Université de Montréal CRM -- CNRS, Montréal, Canada}
  \title{Sequential Transport for Causal Mediation Analysis}


\maketitle

\abstract{We propose sequential transport (ST), a distributional framework for mediation analysis that combines optimal transport (OT) with a mediator directed acyclic graph (DAG). Instead of relying on cross-world counterfactual assumptions, ST constructs unit-level mediator counterfactuals by minimally transporting each mediator—marginally or conditionally—toward its distribution under an alternative treatment while preserving the causal dependencies encoded by the DAG. For numerical mediators, ST uses monotone (conditional) OT maps based on conditional CDF/quantile estimators; for categorical mediators, it extends naturally via simplex-based transport.
We establish consistency of the estimated transport maps and of the induced unit-level decompositions into mutatis mutandis direct and indirect effects under standard regularity and support conditions. When the treatment is randomized or ignorable (possibly conditional on covariates), these decompositions admit a causal interpretation; otherwise, they provide a principled distributional attribution of differences between groups aligned with the mediator structure. Gaussian examples show that ST recovers classical mediation formulas, while additional simulations confirm good performance in nonlinear and mixed-type settings. An application to the COMPAS dataset illustrates how ST yields deterministic, DAG-consistent counterfactual mediators and a fine-grained mediator-level attribution of disparities.}

\section{Introduction}

In causal reasoning, estimating the effect of a treatment on an outcome often requires accounting for intermediate variables, known as {\em mediators}, that transmit part of the causal influence. Causal mediation analysis provides a formal framework for quantifying these indirect pathways  \citep{robins1992identifiability,vanderweele2015explanation}. Classical applications include evaluating gender-based discrimination in hiring decisions while accounting for differences in qualifications, or assessing medical treatments whose side effects trigger behavioral changes that subsequently affect disease progression \citep{pearl2001direct, pearl2014interpretation}. Mediation allows the total effect of a treatment on an outcome to be decomposed into a {\em direct} effect and an {\em indirect} effect operating through mediators.

This decomposition is  important, in practice. Direct and indirect effects may act in opposite directions, so examining only the total effect can obscure meaningful risks or benefits. 
Moreover, the size of the indirect pathway can guide interventions: strong mediator effects suggest that acting on the mediator may lead to stronger effects than modifying the treatment itself, {\em mutatis mutandis}---that is, ``changing what must be changed'' to remain consistent with the causal structure, rather than a strict {\em ceteris paribus} comparison in which everything else is held fixed \citep{imai2010general}.
Beyond policy analysis and epidemiology, mediation analysis is central in algorithmic fairness, where causal definitions of discrimination rely on isolating pathways from sensitive attributes to predictions \citep{Kusner17}.

Traditional mediation approaches rely on structural equation models (SEM), often in linear form, yielding decompositions such as the Kitagawa--Oaxaca--Blinder representation \citep{kitagawa1955components,oaxaca1973male, blinder1973wage}. Semiparametric or nonparametric alternatives extend these ideas while avoiding strict functional assumptions  \citep{imai2010identification, pearl2009causality}. However, such methods (i) typically target {\em population}-level effects rather than unit-level decompositions, and (ii) may require specifying assumptions on the distributional expressions of a SEM to define mediator counterfactuals, even though the underlying functional forms are generally not identifiable from observational data. In a fully non-parametric setting, deep learning techniques, in particular normalizing flows (NF) have been used to construct counterfactuals and estimate individual-level effects \citep{NEURIPS2020_0987b8b3, javaloy2023causal, zhou2025causalmediationanalysismultiple}. While these techniques offer flexibility in learning the functions of a SEM, they require a sufficient amount of data and may fail to approximate simple functional forms.

Recently, optimal transport (OT) has emerged as a powerful tool for constructing counterfactual representations directly from empirical distributions, without imposing a parametric structural model, and even recovering structural counterfactuals under specific assumptions. OT-based counterfactuals have been used to map individuals from one treatment distribution to another \citep{charpentier2023optimal, de2024transport}, and sequential transport (ST) algorithms have been proposed to align transport operations with the structure of a probabilistic graphical model \citep{machado2025sequential}. 
The core idea is to build, for each untreated unit, a deterministic counterfactual mediator profile that is (i) distributionally aligned with the treated group and (ii) consistent with the mediator DAG, by transporting mediators sequentially along a topological order.
Extensions further allow applying transport maps to categorical mediators by treating them as compositional objects transported on the simplex \citep{machado2025categorical}.
In fairness applications, similar distributional transformations have been used to define counterfactually fair predictors \citep{black2020fliptest, plevcko2021fairadapt}. 

A related line of work investigates the extent to which counterfactuals can be nonparametrically identified when the underlying structural equations are unknown or only partially specified. In particular, \citet{balakrishnan2025conservative} develop conservative inference procedures that characterize {\em sets} of compatible counterfactuals, rather than committing to a single structural model. The approach we consider here is complementary: instead of bounding counterfactuals, we construct a specific distributionally grounded counterfactual, via optimal transport, that satisfies a {\em mutatis mutandis} principle and remains compatible with the observable mediator distributions. Sequential transport may therefore be viewed as selecting one interpretable element from the larger counterfactual ambiguity set considered in this literature.

Finally, recent work has pushed OT-based causal methodologies beyond graph-constrained formulations. \citet{ehyaei2025scot} propose {\em structural causal optimal transport}, which leverages not only the causal DAG but also the {\em structural equations} of a structural causal model to restrict the set of admissible distributions.
They also introduce a {\em relaxed} version where hard causal constraints are replaced by a regularization term, leading to an efficient algorithmic approach, and they provide finite-sample guarantees when the structural equations must be estimated. While their primary focus is ambiguity-set design for distributionally robust optimization, this line of work is complementary to ours: sequential transport exploits a mediator DAG to build conditional transports (monotone in the continuous case, simplex-based in the categorical case), whereas structural causal OT shows how additional functional information (when available) can further refine OT-based causal constraints.

\medskip
\paragraph*{Contributions.}
We integrate these transport-based ideas into causal mediation analysis.\footnote{The replication package, including all data, code, and a replication ebook containing additional results, is available at: \href{https://github.com/fer-agathe/causal_OT}{https://github.com/fer-agathe/causal\_OT}.} For this, we consider the simple but widely used causal ordering
$
A \to \boldsymbol{X} \to Y,
$
where $A$ is a binary treatment, $\boldsymbol{X}$ a (possibly high-dimensional) vector of mediators structured by a directed acyclic graph (DAG), and $Y$ an outcome. Our key idea is to define mediator counterfactuals in a distributional {\em mutatis mutandis} sense: when changing the treatment, each mediator is modified only as necessary to reach the target distribution, while respecting the conditional structure of the joint probability distribution encoded in the DAG.

Technically, we implement this through a {\em sequential transport} scheme that moves mediators one node at a time along a topological ordering \citep{kahn1962topological}. For continuous mediators, each step applies either univariate transport or conditional transport, depending on the parents of the node in the DAG. This yields deterministic, interpretable mediator counterfactuals that remain in-distribution and scale linearly with the number of mediators. For categorical mediators, ST incorporates simplex-based transport to preserve label structure.

Our main contributions are as follows:
\begin{itemize}
\item \textbf{Unit-level mediation via sequential transport.}  
We provide a DAG-aligned OT framework (namely, {\em sequential transport}, ST) to compute individual-level direct and indirect effects, by producing mediator counterfactuals consistent with the causal structure.
    \item \textbf{Theoretical guarantees.}  
    We establish convergence results for the conditional transport maps underlying ST, extending classical multivariate OT theory to sequential, graph-structured settings.
    \item \textbf{Empirical validation.}  
    Gaussian simulations demonstrate the agreement between the ST approach and classical mediation decompositions in linear SEMs, while a fairness application using this transport-based approach on the \texttt{COMPAS} dataset illustrates the interpretability and mediator-level attribution of the indirect effect when using ST.
\end{itemize}

The paper is organized as follows. Section~\ref{sec:2:causalmediation} reviews causal mediation analysis. Section~\ref{sec:3:transportcounterfactuals} introduces OT-based counterfactuals and the {\em mutatis mutandis} perspective. Sequential transport (ST) on DAGs is developed in Section~\ref{sec:general:DAG}. Section~\ref{sec:theory} provides theoretical guarantees for the ST approach. Section~\ref{sec:6:gaussian-sims} presents Gaussian illustrations, and Section~\ref{sec:7:realdata} reports results on real-world data. Section~\ref{sec:discussion} discusses the overall methodology and its possible extensions.

\section{Causal mediation analysis}\label{sec:2:causalmediation}

We use uppercase letters for random variables (e.g., $A,\boldsymbol{X},Y$) and lowercase letters for their realizations (e.g., $a,\boldsymbol{x},y$); boldface denotes vectors, and $i$ indexes units/individuals while $j$ indexes mediators. Let $(A,\boldsymbol{X},Y) \in \mathcal{A} \times \mathcal{X} \times \mathcal{Y}$ be a random vector, where 
$A$ is a binary treatment variable with $\mathcal{A} = \{0,1\}$, 
$\boldsymbol{X} = (X_1,\dots,X_d)$ is a vector of post-treatment mediators and $Y$ is the outcome variable.  We assume $(A,\boldsymbol{X},Y) \sim \mathbb{P}$, where 
$\mathbb{P}$ is a probability measure on the product space $\mathcal{A} \times \mathcal{X} \times \mathcal{Y}$. For each unit $i$, let $(A_i,\boldsymbol{X}_i,Y_i) \sim \mathbb{P}$ denote the $i$-th random observation, and let $(a_i,\boldsymbol{x}_i,y_i)$ denote its realization. We consider the causal ordering
\[
A\to \boldsymbol{X} \to Y  .
\]
This structure appears in many applications and corresponds to the canonical mediation setting described in \citet{robins1992identifiability, pearl2001direct, pearl2014interpretation}. The mediators may themselves be causally related, a dependence that we represent through a DAG $\mathcal{G}$. Let $\mathrm{parents}_{\mathcal{G}}(X_j)$ denote the parents of the random variable $X_j$ in the mediator DAG $\mathcal{G}$. The conditional joint mediator distribution given the treatment value $a$ factorizes as
\[
\mathbb{P}(\boldsymbol{X}\mid A=a)
\;=\;
\prod_{j=1}^d
\mathbb{P}\!\left(X_j \mid \mathrm{parents}_{\mathcal G}(X_j), A=a\right),
\]
consistent with the recursive structure of a DAG, from Markov's property \citep{koller2009probabilistic}. 
Throughout, we distinguish the definition of sequential transport (ST) from its causal interpretation.
ST is defined from the observable conditional laws $\mathbb P(X_j \mid \mathrm{parents}_{\mathcal G}(X_j),A=a)$ (or $\mathbb P(\cdot\mid A=a,C=c)$ if pre-treatment covariates $C$ are used) and from the mediator DAG $\mathcal G$; hence it yields a well-defined distributional transformation regardless of whether $A$ is exogenous.
A causal interpretation requires an additional assumption: either $A$ is randomized, or $A$ is ignorable given observed pre-treatment covariates $C$, so that $\mathbb P(\cdot \mid A=a, C=c)=\mathbb P(\cdot \mid do(A=a), C=c)$.
Under this condition, ST-based decompositions can be interpreted as ({\em mutatis mutandis}) direct and indirect causal effects.

To identify direct and indirect effects, we require the sequential ignorability assumption from \cite{imai2010}, extended to multiple mediators. This entails the absence of unobserved confounding in the treatment-outcome, treatment-mediator, and mediator-outcome relationships, as well as the absence of treatment-induced confounding (see \cite{zhou2025causalmediationanalysismultiple}). When the treatment is randomized, this guarantees no confounding in the treatment-outcome and treatment-mediator relationships, but not necessarily in the other assumptions. In observational settings where $A$ is a sensitive attribute (e.g., race), it is common to treat $A$ as a root node by convention (i.e., we do not model its upstream causes). In this case, identifying causal effects additionally requires assuming that there are no unobserved variables affecting both the outcome and the mediators. When the sensitive attribute is not a root node (e.g., credit score), we must additionally control for confounding in pre-treatment variables. Thus, unless these assumptions can be justified, we do not claim identification of a causal effect of $A$ when using transport plans, in contrast to approaches based on SCMs. Rather, ST should be interpreted as a distributional pathway decomposition of differences between the $A = 0$ and $A = 1$ groups, aligned with $\mathcal{G}$ (see also Section~\ref{sec:general:DAG} for a more detailed discussion).


\subsection{Potential outcomes and effect decomposition}

For each unit $i$, let $\boldsymbol{X}_i(a)$ denote its potential mediator vector under treatment value $a \in \{0,1\}$, and $Y_i(a, \boldsymbol{x})$ its potential outcome under treatment value $a$ and mediator vector value~$\boldsymbol{x}$. The individual total effect of the treatment for unit $i$ is
\[
\tau_i \;=\; Y_i(1, \boldsymbol{X}_i(1)) - Y_i(0, \boldsymbol{X}_i(0)) .
\]

Following \citet{robins1992identifiability} and \citet{pearl2001direct}, this total effect can be decomposed into a {\em natural indirect effect} (NIE),
\[
\delta_i(a) \;=\; Y_i(a, \boldsymbol{X}_i(1)) - Y_i(a, \boldsymbol{X}_i(0)) ,
\]
and a {\em natural direct effect} (NDE),
\[
\zeta_i(a) \;=\; Y_i(1, \boldsymbol{X}_i(a)) - Y_i(0, \boldsymbol{X}_i(a)),
\]
In general (without a no-interaction restriction), the total effect admits the decomposition
$\tau_i = \delta_i(1) + \zeta_i(0)$; under the no-interaction assumption,
$\delta_i(0)=\delta_i(1)$ and $\zeta_i(0)=\zeta_i(1)$, hence $\tau_i=\delta_i+\zeta_i$. These definitions rely on so-called {\em cross-world} (or {\em nested}) counterfactuals, in which potential outcomes from two different hypothetical treatment worlds are combined, as in $Y_i(0,\boldsymbol{X}_i(1))$, as coined in the mediation literature
\citep{andrews2021crossworld,tchetgen2014identification,naimi2015boundless}.

In causal mediation analysis, we are also interested in the average causal mediation effects, defined as
\[
\bar{\tau} \;=\; \mathbb{E}\bigl[ Y(1, \boldsymbol{X}(1)) - Y(0, \boldsymbol{X}(0)) \bigr]  ,
\]
\[
\bar{\delta} \;=\; 
\mathbb{E}\bigl[ Y(0, \boldsymbol{X}(1)) - Y(0, \boldsymbol{X}(0)) \bigr] ,
\]
\[
\bar{\zeta} \;=\;
\mathbb{E}\bigl[ Y(1, \boldsymbol{X}(1)) - Y(0, \boldsymbol{X}(1)) \bigr].
\]


Classical approaches resolve this difficulty by specifying a structural causal model (SCM) for the data-generating process, often using linear SEM. Such models yield tractable decompositions, including Kitagawa--Oaxaca--Blinder formulas, under functional and distributional assumptions \citep{kitagawa1955components,oaxaca1973male, blinder1973wage}. However, individual-level mediator counterfactuals remain difficult to define outside a fully specified SCM, especially in high-dimensional settings or when mediators consist of mixed data types. Some work handles multiple types of a single mediator arranged in a causal graph and enables the identification of individual-level non-parametric effects via Monte Carlo simulation \citep{imai2010general, imai2010identification}. In addition, \citet{zhou2025causalmediationanalysismultiple} extend this work to handle multiple mediators in the graph, using normalizing flows in a non-parametric setting to flexibly learn the structural functions within the SCM, also referred to as {\em causal normalizing flows} \citep{NEURIPS2020_0987b8b3, javaloy2023causal, ijcai2024p907}.






Throughout the paper, we distinguish between the {\em definition} of
transport-based counterfactuals and their {\em causal interpretation}.
The ST construction is {\em purely distributional}:
it is built from the observed conditional laws
$\mathbb{P}(X_j \mid \mathrm{parents}_{\mathcal{G}}(X_j), A=a)$ and from the structure of
the mediator DAG. Hence, ST defines a transported mediator vector as soon as
these conditional distributions are well-defined/estimable, regardless of
whether $A$ is exogenous.
A \emph{causal} interpretation requires additional assumptions. First, when $A$ is a root node in the causal graph (e.g., randomized or ignorable given observed covariates), there is no unobserved confounding between treatment and outcome or between treatment and mediators. In this case, observational and interventional distributions coincide if additional sequential ignorability assumptions from \cite{imai2010, zhou2025causalmediationanalysismultiple}, such as the ignorability of the mediators, are satisfied. Under these conditions, $\mathbb{P}(\cdot \mid A = a) = \mathbb{P}(\cdot \mid do(A = a))$ (or conditionally on covariates). In that case, the resulting ST-based decompositions admit a causal interpretation as {\em mutatis mutandis} direct and indirect effects. In particular, when $A$ is a sensitive attribute such as race in our application, we may represent $A$ as parentless in the graph by convention (that is, we do not model its upstream causes). This is a notational/expositional choice, not an exogeneity assumption. Accordingly, without further assumptions, we do not interpret the resulting decomposition as an identified {\em causal} effect of $A$; rather, it should be read as a {\em distributional} pathway decomposition of between-group differences. Moreover, in observational settings where $A$ may be confounded, ST does not claim to identify causal effects (i.e., it does not in general recover quantities under $do(A=a)$); instead, it provides a principled {\em distributional} decomposition of observed differences between the $A=0$ and $A=1$ groups aligned with the assumed mediator structure.


\subsection{Motivation for distributional counterfactuals}

Our goal is to construct mediator counterfactuals without assuming a parametric SCM.
Instead, we seek a distributional {\em mutatis mutandis} interpretation: when changing the treatment, the mediators should be modified in the minimal way required to place the individual in the appropriate mediator distribution, while respecting the DAG structure. This perspective connects directly to transport-based constructions developed for counterfactual fairness \citep{charpentier2023optimal, de2024transport,machado2025categorical,machado2025sequential}. It provides a principled, nonparametric alternative to specifying a structural model, and motivates the use of OT in the next section.

\section{Optimal transport theory for a {\em mutatis mutandis} counterfactual mechanism}
\label{sec:3:transportcounterfactuals}

Classical causal mediation analysis relies on structural models to define mediator counterfactuals, typically via functional equations linking treatment, mediators, and outcomes. In contrast, transport-based approaches construct counterfactual versions of observed variables directly from their empirical distributions, without committing to a parametric SCM and ensuring in-distribution counterparts. This distributional point of view is particularly appealing when mediators are high-dimensional, 
and when interest lies in {\em individual-level} counterfactuals consistent with observed data geometry.

In this section we present transport–based counterfactual constructions and explain why they provide a natural {\em mutatis mutandis} semantics for mediation analysis. Our goal is to modify an individual's mediator vector only to the extent necessary to represent what their mediators would have been under a different treatment assignment, while preserving the structure of mediator causal relationships encoded in the DAG.


\subsection{Transport-based counterfactuals in the literature}

Hereafter, $\boldsymbol{X}=(X_1,\dots,X_d)$ denotes the (random) mediator vector,
possibly affected by $A$ (directly or through other mediators) and influencing
the outcome of interest $Y$.

The idea of using OT theory to construct realistic counterfactuals has emerged in several areas of causal inference and fairness. In \citet{charpentier2023optimal}, \citet{torous2024otdid} and \citet{de2024transport}, multivariate OT is used to map individuals from the distribution of $\boldsymbol{X}\mid A=0$ to that of $\boldsymbol{X}\mid A=1$, thereby defining nonparametric counterfactual outcomes and features, and without specifying causal relationships between mediators. Building on this viewpoint, \citet{machado2025sequential} develop sequential transport (ST), an algorithmic framework that aligns the transport operation with the causal structure of a probabilistic graphical model. This ensures that transported features remain coherent with respect to conditional dependencies between mediators. For categorical mediators, the simplex-based approach of \citet{machado2025categorical} provides deterministic transport maps that avoid random label switching and preserve interpretable marginal distributions across categories.

Transport-based counterfactuals (including OT and ST) share a common intuition: instead of specifying a full structural causal model, they map each individual to a counterfactual state that is (i) {\em in-distribution} under the target treatment value, and (ii) minimally different from the factual state according to a chosen transport cost. When transport is performed sequentially along a mediator DAG (as in ST), the construction additionally aims to (iii) preserve the DAG-implied conditional semantics by transporting each mediator conditional on (already transported) parent variables. 

\subsection{{\em Mutatis mutandis} semantics}

A central challenge in mediation analysis is the interpretation of cross-world counterfactuals such as $Y(0, \boldsymbol{X}(1))$. These expressions describe what the outcome would have been if the treatment had been set to $0$ ($a\leftarrow0$)
while the mediator had taken whatever value it would have had under treatment value $a=1$ \citep{nguyen2022clarifying}. Such counterfactuals cannot be jointly observed, and their definition typically relies on a SCM with unobserved disturbances.

As previously mentioned, OT theory provides an alternative that is {\em distributional} rather than structural. Instead of requiring a functional model for $\boldsymbol{X}(1)$, we ask:
\begin{quote}
    ``How should the mediator vector of a given individual change when switching the treatment, {\em in a way that makes minimal changes to this vector},
    while ensuring that the modified mediator vector lies in the distribution of mediators under the new treatment value?''
\end{quote}
This principle embodies the classical notion of {\em mutatis mutandis}: only the aspects of the mediator vector that are causally downstream of the treatment should be allowed to change, and only to the extent required to align with the target mediator distribution. OT naturally implements this idea by solving a minimal-cost distributional transportation problem.

\subsection{Transport maps and individual-level counterfactuals}

Let $\nu_0$ and $\nu_1$ denote the distributions of $\boldsymbol{X}$ under $A=0$ and $A=1$. A transport map $T:\mathbb{R}^d\to\mathbb{R}^d$ pushing $\nu_0$ toward $\nu_1$,
i.e. $T_{\#}\nu_0 = \nu_1$, provides an individual-level counterfactual assignment.
For the unit $i$, \[
\boldsymbol{x}_i(1) \;\approx\; T(\boldsymbol{x}_i).
\]
Under suitable regularity conditions, OT yields monotone (when $\nu_0$ and $\nu_1$ are univariate) or cyclical monotone (when their supports lie in $\mathbb{R}^d$) maps solving a minimal deviation principle. In the context of mediators, this gives a nonparametric analogue of the structural relation $\boldsymbol{x}_i(1) = f(a=1, \boldsymbol{u}_i)$, where $\boldsymbol{u}_i$ is unobserved noise for unit $i$ in a structural equation, and $f$ is a vector-valued function that assigns values to the mediators given the exogenous variables $\boldsymbol{U}$ (see \cite{Pearl2016CausalII} for further details on the SCM framework): instead of modeling $f$ explicitly ({\em e.g.}, as with normalizing flows), an OT map constructs the transformation directly from data.

A global multivariate 
OT mapping $T$ is defined as a single map transporting the entire mediator vector $\boldsymbol{x}$ at once, without treating its components individually. While such a map matches the joint target distribution (and thus respects, in a distributional sense, the dependence structure of $\nu_0$ and $\nu_1$), it is agnostic to the {\em direction} of causal relationships encoded by the mediator DAG.
In particular, when $\boldsymbol{X}$ contains causally ordered components, a single multivariate transport may alter the conditional mechanisms (e.g., the laws of $X_j \mid \mathrm{parents}_{\mathcal G}(X_j)$) in ways that are incompatible with the DAG factorization, thereby distorting the mediated causal relationships.
This motivates the sequential construction of Section~\ref{sec:general:DAG}, where mediators are transported one at a time following a topological order of the DAG, so as to preserve the DAG-implied conditional structure of the mediator variables.

\subsection{Transport-induced mutatis mutandis estimands}
\label{sec:transport-estimands}

Let $\nu_a = \mathcal{L}(X \mid A=a)$ denote the mediator distribution in group $A=a$.
More generally than any specific construction, consider a measurable transport mechanism $T_{0\to1}:\mathcal{X}\to\mathcal{X}$ that maps the $A=0$ mediator distribution toward the $A=1$ one (e.g., in the Monge sense $T_{0\to1\#}\nu_0=\nu_1$, or approximately in finite samples).
For an untreated unit, define the transported mediator vector
\[
\boldsymbol{X}^{\dagger}(1) := T_{0\to1}(\boldsymbol{X})\quad \text{under } A=0.
\]
This provides a {\em mutatis mutandis} counterfactual mediator profile: mediators are modified only as needed to align with the target distribution under $A=1$, according to the chosen transport criterion.

\paragraph*{Transport-induced direct and indirect effects.}
Let $\mu_a(\boldsymbol{x}) = \mathbb{E}[Y \mid A=a, \boldsymbol{X}=\boldsymbol{x}]$ for $a\in\{0,1\}$. The transport-induced (indirect, direct, total) effects for the $A=0$ group are defined as
\begin{align}
\delta^{\dagger}
&:= \mathbb{E}\!\left[ \mu_0\!\left(\boldsymbol{X}^{\dagger}(1)\right) - \mu_0(\boldsymbol{X})\,\middle|\, A=0 \right], \label{eq:delta-transport}\\
\zeta^{\dagger}
&:= \mathbb{E}\!\left[ \mu_1\!\left(\boldsymbol{X}^{\dagger}(1)\right) - \mu_0\!\left(\boldsymbol{X}^{\dagger}(1)\right)\,\middle|\, A=0 \right], \label{eq:zeta-transport}\\
\tau^{\dagger}
&:= \delta^{\dagger}+\zeta^{\dagger}
= \mathbb{E}\!\left[ \mu_1\!\left(\boldsymbol{X}^{\dagger}(1)\right) - \mu_0(\boldsymbol{X})\,\middle|\, A=0 \right]. \label{eq:tau-transport}
\end{align}

\paragraph*{Distributional vs.\ causal interpretation.}
The mapping $T_{0\to1}$ and the estimands above are defined purely from observable distributions and thus yield a {\em distributional} decomposition of between-group differences. When $A$ is randomized (or ignorable given pre-treatment covariates), observational and interventional distributions coincide, and the same decomposition admits a causal interpretation as mutatis mutandis direct and indirect effects.


\begin{enumerate}[label=(A\arabic*)]
\setcounter{enumi}{-1}
    \item\label{ass:overlap-outcome} Let $\boldsymbol{X}^{\dagger}(1)=T_{0\to1}(\boldsymbol X)\mid A=0$ and $\mathcal{X}_0=\mathrm{supp}(\boldsymbol  X\mid A=0)$.
We assume that $\mu_0(x)$ is estimable on the transported region in the sense that,
for some small $\eta\ge 0$,
\[
\mathbb{P}\!\left( \boldsymbol{X}^{\dagger}(1)\in \mathcal{X}_0 \,\middle|\, A=0 \right)\ge 1-\eta.
\]
When $\eta=0$, this reduces to $\mathrm{supp}(\boldsymbol{X}^{\dagger}(1))\subseteq \mathcal{X}_0$.
\end{enumerate}


This assumption is a common-support (positivity) condition ensuring that the transported mediators $\boldsymbol X^{\dagger}(1)=T_{0\to1}(\boldsymbol X)\mid A=0$ remain (up to a fraction $\eta$) within the observed support $\mathcal{X}_0=\mathrm{supp}(\boldsymbol X\mid A=0)$, thereby avoiding extrapolation when evaluating the outcome regression $\mu_0(\boldsymbol{x})$ on the transported region.

Although transport-based constructions avoid writing nested (cross-world) counterfactuals of the form $Y(a,\boldsymbol{X}(a'))$ (see, e.g., \citet{vanderweele2014unification,xu2023tutorial}), the plug-in decomposition still evaluates $\mu_0(\cdot)$ at transported mediator profiles.
Assumption~\ref{ass:overlap-outcome} therefore plays the role of a positivity/estimability condition ensuring that outcome regression is learnable on the transported region. In practice, we recommend reporting overlap diagnostics (e.g., propensity score ranges, nearest-neighbor distances from $\boldsymbol{X}^{\dagger}(1)$ to the $A=0$ sample) and, if needed, trimming observations with extreme lack of overlap.

\subsection{Strengths and limitations of transport-based approaches}

Transport-based counterfactuals present several advantages for causal mediation ana\-ly\-sis. First, they provide {\em unit-level} counterfactual mediators without specifying the functional forms within a SCM. Second, they guarantee that counterfactual mediators remain in-distribution. Third, they naturally incorporate causal structures when transport is performed sequentially along a DAG.


At the same time, any transport plan from OT theory constructs counterfactuals that are {\em distributionally} identified: they are defined from the observed group distributions and can be viewed as a guided interpolation between them. In the setting considered in this paper, $A$ is assumed exogenous (a root node), so that conditioning on $A$ coincides with intervention, and the resulting transport-based decompositions admit a causal interpretation.
It should be noted that in observational settings where $A$ may be confounded, transport-based counterfactuals remain well-defined as distributional objects, but interpreting them causally would require additional assumptions (e.g., ignorability given covariates) or appropriate adjustment.

The next section formalizes this idea by providing the detailed ST mechanism that respects mediator causal relationships encoded in a DAG. This ST approach extends naturally to categorical variables, and yields well-defined individual-level direct and indirect effects.

\section{Sequential transport on a mediator DAG}
\label{sec:general:DAG}

Section~\ref{sec:transport-estimands} introduced transport--based counterfactual mediators through a {\em generic} map $T_{0\to1}:\mathcal X\to\mathcal X$ that pushes $\nu_0:=\mathcal L(\boldsymbol X\mid A=0)$ toward $\nu_1:=\mathcal L(\boldsymbol X\mid A=1)$.
A direct multivariate construction of $T_{0\to1}$, however, need not respect the internal structure of $\boldsymbol X$: when mediators obey a directed acyclic graph (DAG), a global transformation may distort the conditional relationships encoded by the graph and produce counterfactual profiles that are misaligned with the intended parent--child semantics.

To obtain counterfactual mediators that are both distributionally valid and {\em graph-aligned}, we instantiate $T_{0\to1}$ through a {\em sequential} transport: we update the mediators one coordinate at a time, in a topological order of the mediator DAG, and condition each update on the already-transported parent values. This yields a deterministic transported profile $\boldsymbol x^\star_{1,i}$ for each observation $\boldsymbol x_i$ in group $A=0$.

Our construction can be viewed as a graphical analogue of the Knothe--Rosenblatt (KR) rearrangement, which itself is rooted in the Rosenblatt factorization (chain rule) \citep{rosenblatt1952remarks} and was formalized by \citet{knothe1957contributions}, with later probabilistic graphical formulations discussed in \citet{lauritzen2019lectures, cheridito2023optimal}. Closely related ideas have recently been used for ``{counterfactual fairness}'' in \citet{machado2025categorical, machado2025sequential}. Here we adapt and formalize them in our mediation setting, covering both numerical and categorical mediators.

\subsection{Mediator DAG and notation}
\label{sec:dag-notation}

Let $A\in\{0,1\}$ denote the treatment and $\boldsymbol{X}=(X_1,\dots,X_d)$ the mediator vector.
Let $\mathcal G$ be a mediator DAG whose nodes are $\{X_1,\dots,X_d\}$ and whose directed edges encode dependencies between mediators downstream of $A$.
For each mediator $X_j$, write $\boldsymbol Z_j=\mathrm{parents}_{\mathcal G}(X_j)$ for its parent set.
Because $\mathcal G$ is acyclic, there exists a topological ordering $X_{\pi(1)},\dots,X_{\pi(d)}$ such that all parents of $X_{\pi(j)}$ appear among $\{X_{\pi(1)},\dots,X_{\pi(j-1)}\}$ (see \citet{kahn1962topological} and for practical implementation, several algorithms are discussed in Section 20.4 in \citet{cormen2022introduction}).

For $a\in\{0,1\}$ and each $j$, denote by
\[
\mathbb P_{a,j}(\cdot \mid \boldsymbol z)
~ \text{the conditional law of}~
X_j \mid (\boldsymbol Z_j=\boldsymbol z, A=a).
\]
In the data, we observe $n_a$ units in group $A=a$, with index set $\mathcal I_a=\{i:A_i=a\}$ and mediator vectors $\boldsymbol x_i=(x_{i1},\dots,x_{id})$.
For each $j$, we form the empirical sample of child--parent pairs
\[
\chi_{a,j}
=
\left\{ \bigl(x_{ij},\,\boldsymbol x_{i,\boldsymbol Z_j}\bigr) : i\in\mathcal I_a \right\},
\]
which summarizes information about $\mathbb P_{a,j}(\cdot\mid \boldsymbol z)$.
When $\boldsymbol Z_j=\emptyset$, this reduces to the marginal sample of $X_j$ in group $A=a$.

\subsection{Sequential transport}
\label{sec:st-population}

\paragraph*{Population-based sequential transport}

Sequential Transport (ST) is defined by conditional transport maps that are composed along the ordering $\pi$. The key point is that, once parents have been transported, the {\em target} conditional law for a child should be evaluated at the {\em transported} parent values.
Accordingly, it is convenient to define a two-argument conditional transport: for each mediator $j$, let $T_j(\cdot\mid \boldsymbol z_0,\boldsymbol z_1)$ be a measurable map such that, for each $(\boldsymbol z_0,\boldsymbol z_1)$,
\[
\bigl(T_j(\cdot\mid \boldsymbol z_0,\boldsymbol z_1)\bigr)_{\#}\,
\mathbb P_{0,j}(\cdot\mid \boldsymbol z_0)
=
\mathbb P_{1,j}(\cdot\mid \boldsymbol z_1).
\]
Here $\boldsymbol z_0$ indexes the source conditional (from group $A=0$), whereas $\boldsymbol z_1$ indexes the target conditional (in group $A=1$). When $\boldsymbol Z_j=\emptyset$, we simply write $T_j$ for the
marginal transport.

For an individual $i$ with $A_i=0$ and factual mediator vector $\boldsymbol x_{0,i}$, the population ST counterfactual $\boldsymbol x^\dagger_{1,i}$ is constructed recursively:
\[
\begin{cases}
    x^\dagger_{1,i,\pi(1)} = T_{\pi(1)}\!\left(x_{0,i,\pi(1)}\right),
\\
x^\dagger_{1,i,\pi(j)} =
T_{\pi(j)}\!\left(
x_{0,i,\pi(j)} \,\middle|\,
\boldsymbol x_{0,i,\mathrm{parents}_{\mathcal{G}}(X_{\pi(j)})},\,
\boldsymbol x^\dagger_{1,i,\mathrm{parents}_{\mathcal{G}}(X_{\pi(j)})}
\right),
\quad j=2,\dots,d,
\end{cases}
\]
and we set $\boldsymbol x^\dagger_{1,i}=(x^\dagger_{1,i,\pi(1)},\dots,x^\dagger_{1,i,\pi(d)})$.

Intuitively, ST modifies each mediator just enough to match the appropriate conditional distribution in
group $A=1$, while conditioning on the already-transported parents so as to remain aligned with the DAG.
This contrasts with applying a single multivariate transport map to $\boldsymbol X$, which generally does not
preserve the intended parent--child conditioning relationships encoded by $\mathcal G$.

\paragraph*{Sample-based sequential transport}

In practice, we replace $\{T_j\}_{j=1}^d$ by estimates $\{\widehat T_j\}_{j=1}^d$ learned from the samples $\chi_{0,j}$ and $\chi_{1,j}$, yielding the estimated transported profile $\boldsymbol x^\star_{1,i}$ via the same recursion with $\widehat T_j$ in place of $T_j$.
The next subsections describe numerical and categorical constructions of $\widehat T_j$ and establish consistency results for the resulting sequentially transported mediators.

\subsection{Transport for numerical mediators}

When $X_j$ is numerical, we estimate a univariate or conditional transport map depending on whether $X_j$ has parents beyond the treatment variable.

\paragraph*{No parents.}  
If $\mathrm{parents}_{\mathcal{G}}(X_j)=\emptyset$, the mediator depends only on the treatment. In this case, we apply the classical univariate OT map based on smoothed cumulative distribution function (CDF) and quantile function:
\[
\forall x \in \mathcal{X}_j, \enspace T_{j}(x) = Q_{1,j}\!\circ F_{0,j}(x),
\]
where $F_{0,j}$ is a smoothed empirical CDF under $A=0$, and $Q_{1,j}$ a smoothed quantile function under $A=1$, i.e., $F_{0,j}(x) := \mathbb{P}(X_j \le x \mid A=0)$ while $Q_{1,j}(u) := \inf\{x\in\mathbb{R} : F_{1,j}(x) \ge u\}$, for $u\in(0,1)$, where $F_{1,j}(x):=\mathbb{P}(X_j\le x\mid A=1)$.

\paragraph*{Parents present.}

If $\mathrm{parents}_{\mathcal G}(X_j)\neq\emptyset$, we transport $X_j$ conditionally on its parents.
Let $\boldsymbol Z_j=\mathrm{parents}_{\mathcal G}(X_j)$. For each $\boldsymbol z_j$, we seek a map
$T_j(\cdot\mid \boldsymbol z_j)$ such that
\[
\bigl(T_j(\cdot\mid \boldsymbol z_j)\bigr)_{\#}\, \mathbb{P}\!\left(X_j \mid \boldsymbol Z_j=\boldsymbol z_j, A=0\right)
=
\mathbb{P}\!\left(X_j \mid \boldsymbol Z_j=T_{\boldsymbol Z_j}(\boldsymbol z_j), A=1\right).
\]

We implement this using the kernel-based conditional transport estimator of \citet{machado2025sequential}, which approximates the conditional CDF under $A=0$ and the corresponding conditional quantile under $A=1$. The mapping thus generalizes the Rosenblatt (or Knothe--Rosenblatt) construction to a conditional and treatment-indexed setting.

\subsection{Transport for categorical mediators}

When $X_j$ is categorical with $K$ levels, we write $X_j\in [K]=\{1,\dots,K\}$ and represent its conditional law given its parents as a probability vector on the $(K-1)$-simplex $\mathcal S_{K-1}=\{\boldsymbol p\in\mathbb R_+^K:\sum_{k=1}^K p_k=1\}$. Specifically, for a parent value $\boldsymbol z$ and group $A=a$, let
\[
\boldsymbol p_{a,j}(\boldsymbol z)
=
\bigl(\mathbb P(X_j=k\mid \boldsymbol Z_j=\boldsymbol z, A=a)\bigr)_{k\in[K]}
\in \mathcal S_{K-1},
\]
which we estimate in practice (e.g., via multinomial/softmax regression or local smoothing), yielding simplex-valued fitted probabilities for each unit.

We then transport {\em distributions on the simplex} rather than labels directly.
Following \citet{machado2025categorical}, we construct a deterministic OT map
$T_j:\mathcal S_{K-1}\to \mathcal S_{K-1}$ that pushes the source fitted probabilities
toward the target ones. For a unit with $A=0$ and parents $\boldsymbol z$, this produces a
transported probability vector $\tilde{\boldsymbol p}_j = T_j(\boldsymbol p_{0,j}(\boldsymbol z))$.

To obtain a categorical counterfactual in $[K]$, we finally map $\tilde{\boldsymbol p}_j$ to a vertex
of the simplex, i.e., to a unit vector $\boldsymbol e_k$ (the $k$th canonical basis vector)
corresponding to category $k$. A simple deterministic choice is argmax rounding,
$\hat x_j(1)=\arg\max_{k\in[K]} (\tilde{\boldsymbol p}_j)_k$.
However, argmax rounding does not in general preserve the target category proportions.
When we wish the induced categorical counterfactuals to match the target marginals in group $A=1$,
we instead perform a global (semi-discrete) OT allocation of the transported vectors
$\tilde{\boldsymbol p}_j$ onto the vertices $\{\boldsymbol e_1,\dots,\boldsymbol e_K\}$ so that the
resulting one-hot assignments have empirical frequencies equal (or close) to the target
category marginals; see Appendix~\ref{app:algorithms} and Algorithm~\ref{alg:split}.

When parents are present, the above steps are applied conditionally: we compute
$\boldsymbol p_{a,j}(\boldsymbol z)$ at the relevant parent values, transport using the already-transported
parents (as in the continuous case), and apply the rounding/allocation within the corresponding
conditional target distribution.

\subsection{Resulting counterfactual mediator vector}

Applying the above steps to each mediator in topological order yields a fully transported counterfactual vector
\[
\boldsymbol{x}_{1,i}^\star
    = \bigl( x^\star_{1,\pi(1)}, \dots, x^\star_{1,\pi(d)} \bigr),
\]
which (i) belongs to the support of the $A=1$ mediator distribution, (ii) respects the DAG structure between mediator variables, and (iii) differs ``minimally'' from $\boldsymbol{x}_{0,i}$ in transport cost. The mapping is deterministic and therefore provides a stable basis for individual-level analysis.

Algorithm~\ref{alg:0} summarizes the ST procedure; detailed implementations for
numerical and categorical mediators are provided in Appendix~\ref{app:algorithms}. We assume Assumption~\ref{ass:overlap-outcome} so that evaluating 
$\widehat{\mu}_0(\boldsymbol{x}_{1,i}^\star)$ does not rely on extrapolation.

\begin{algorithm}[t!]
    \caption{Sequential transport on causal graph $\mathcal{G}$ for untreated individual $\boldsymbol{x}_0$.}\label{alg:0}
    \begin{algorithmic}
        \Require graph $\mathcal{G}$ on $(a,\boldsymbol{x})$
        \State $(a,\boldsymbol{v})\gets$ the topological ordering of vertices (DFS)
        \State $T_j\gets\text{identity}$, $j\in\boldsymbol{v}$ and $\boldsymbol{x}^\star_1\gets\boldsymbol{x}_0$
        \Require model $\widehat{\mu}_0(\cdot)\gets\{y,\boldsymbol{x}\}$, on untreated individuals (group $A=0$)
        \Require model $\widehat{\mu}_1(\cdot)\gets\{y,\boldsymbol{x}\}$, on treated individuals (group $A=1$)
        \Statex \textbf{Assume} Assumption~\ref{ass:overlap-outcome}  when evaluating $\widehat{\nu}_0$ at transported mediator profiles
        \Require probabilistic models $\widehat{p}_j:\mathcal{X}\to\mathcal{S}_{d_j-1}$, for all categorical variables $x_{j}$, $\widehat{p}_j(\cdot)\gets\{x_{j},\boldsymbol{x}_{-j}\}$, 
        \Require {optimal categorical splitting function $C_j$, for all categorical variables $x_{j}$, Algorithm \ref{alg:split}}
        \Require smoothing parameters $\boldsymbol{h}$
        \For{$j\in \boldsymbol{v}$} 
            \If{$\text{parents}_{\mathcal{G}}(x_j)=\{a\}$}
                \State $\boldsymbol{w}_0\gets \boldsymbol{1}_{n_0}/n_0$ and $\boldsymbol{w}_1\gets \boldsymbol{1}_{n_1}/n_1$
            \Else
                \State $\boldsymbol{w}_{0,i}\gets \displaystyle K_{h_j}(\|\boldsymbol{x}_{0,\text{parents}_{\mathcal{G}}(x_j),i}-\boldsymbol{x}_{0,\text{parents}_{\mathcal{G}}(x_j)}\|)$
                \State $\boldsymbol{w}_{1,i}\gets \displaystyle K_{h_j}(\|\boldsymbol{x}_{1,\text{parents}_{\mathcal{G}}(x_j),i}-\boldsymbol{x}^\star_{1,\text{parents}_{\mathcal{G}}(x_j)}\|)$
            \EndIf
            \If{$x_j$ numeric}
                \State $\chi_{0,j}\gets\{x_{0,j,1},\cdots,x_{0,j,n_0}\}\in\mathbb{R}^{n_0}$ 
                \State $\chi_{1,j}\gets\{x_{1,j,1},\cdots,x_{1,j,n_1}\}\in\mathbb{R}^{n_1}$ 
                \State $x_{1,j}^\star\gets $ Algorithm \ref{alg:1} $(x_{0,j},\boldsymbol{w}_0,\boldsymbol{w}_1,\chi_{0,j},\chi_{1,j})$
            \ElsIf{$x_j$ categorical}
                \State $\chi_{0,j}\gets\{\widehat{p}_j(x_{0,j,1}),\cdots,\widehat{p}_j(x_{0,j,n_0)}\}\in\mathcal{S}_{d_j-1}^{n_0}$ 
                \State $\chi_{1,j}\gets\{\widehat{p}_j(x_{1,j,1}),\cdots,\widehat{p}_j(x_{1,j,n_1)}\}\in\mathcal{S}_{d_j-1}^{n_1}$ 
                \State $x_{1,j}^\star\gets $ Algorithm \ref{alg:2} $(x_{0,j},\boldsymbol{w}_0,\boldsymbol{w}_1,\chi_{0,j},\chi_{1,j},C_j)$
            \EndIf
            \State ${x}_{1,j}^\star\gets T_{j}(\boldsymbol{x}_{0})$
        \EndFor\\
        \Return $\boldsymbol{x}^\star_1$, counterfactual of $\boldsymbol{x}_0$ \\
        \Return indirect effect $\delta\gets \widehat{\mu}_0(\boldsymbol{x}^\star_1)-\widehat{\mu}_0(\boldsymbol{x}_0)$ \\
        \Return direct effect $\zeta \gets \widehat{\mu}_1(\boldsymbol{x}^\star_1)-\widehat{\mu}_0(\boldsymbol{x}^\star_1)$ \\
        \Return total effect $\tau\gets \delta+\zeta$ 
    \end{algorithmic}
\end{algorithm}

\subsection{Direct and indirect effects via sequential transport}\label{sec:st-effects-decomposition}

Given a predictive model $\widehat{\mu}_a(\cdot)$ for the outcome under treatment $A=a$, modeling $\mu_a(\boldsymbol{x}) = \mathbb{E}[Y \mid A=a,\boldsymbol{x}]$, ST yields unit-level decompositions of the total effect:
\[
\begin{cases}
\delta_i = \widehat{\mu}_0(\boldsymbol{x}_{1,i}^\star)
          - \widehat{\mu}_0(\boldsymbol{x}_{0,i}), & \text{(indirect)}\\
\zeta_i = \widehat{\mu}_1(\boldsymbol{x}_{1,i}^\star)
          - \widehat{\mu}_0(\boldsymbol{x}_{1,i}^\star),& \text{(direct)} \\
\tau_i = \delta_i + \zeta_i.
\end{cases}
\]
The indirect effect $\delta_i$ isolates the contribution of the mediator transformation, while the direct effect $\zeta_i$ captures the change in treatment holding the mediator counterfactual fixed in its transported state. Both quantities exist at the individual level, require no parametric SCM, and respect the causal ordering encoded in $\mathcal{G}$.

These expressions parallel the natural direct and indirect effects by replacing the cross-world structural counterfactuals $\boldsymbol{x}(1)$ with ST-constructed counterfactuals $\boldsymbol{x}_{1,i}^\star$. Although $\boldsymbol{x}_{1,i}^\star$ remains within the observed distribution of the data, an individual characterized by $(A = 0, \boldsymbol{x}_{1,i}^\star)$ is inherently cross-world and is never observed in practice, since it combines information from both the $A = 0$ and $A = 1$ worlds.


Section~\ref{sec:6:gaussian-sims} illustrates the construction of the ST-based counterfactuals in Gaussian settings, while Section~\ref{sec:7:realdata} applies it to a real-world fairness problem.

\section{Theoretical guarantees for sequential transport}\label{sec:theory}

ST constructs mediator counterfactuals by composing a series of univariate or conditional transport maps along a topological ordering of the mediator DAG. This section establishes statistical guarantees for these maps and, by extension, for the counterfactual mediators and effect decompositions they induce. The results extend classical convergence analyses for monotone OT to a conditional and graph-structured setting.

Throughout, we work under assumptions compatible with the exogeneity of the treatment $A$ (a root node in the DAG) and with the absolute continuity or discrete-support structure of the mediators discussed previously. All proofs are deferred to Appendix~\ref{app:proofs}.

\subsection{Setting and assumptions}

Let $X_j$ be a mediator in the DAG with parent set $\boldsymbol{Z}_j = \mathrm{parents}_{\mathcal{G}}(X_j)$. We denote by
\[
F_{a,j}(\cdot \mid \boldsymbol{z}), \qquad Q_{a,j}(\cdot \mid \boldsymbol{z})
\]
the conditional cumulative distribution and quantile functions of $X_j$ given $\boldsymbol{Z}_j = \boldsymbol{z}$ under treatment $A=a$. 
ST uses kernel-based estimators $\widehat{F}_{0,j}(\cdot \mid \boldsymbol{z})$ and
$\widehat{Q}_{1,j}(\cdot \mid \boldsymbol{z})$ to form the estimated conditional transport.
To make explicit the sequential nature of ST, we distinguish the parent values used for the
source conditional CDF and the target conditional quantile. For $(\boldsymbol{z}_0,\boldsymbol{z}_1)$,
define
\[
\widehat{T}_{j}(x \mid \boldsymbol{z}_0,\boldsymbol{z}_1)
=
\widehat{Q}_{1,j}\!\left( \widehat{F}_{0,j}(x \mid \boldsymbol{z}_0) \mid \boldsymbol{z}_1 \right).
\]
In the sequential procedure, we take $\boldsymbol{z}_0$ to be the observed parent values in group $A=0$
and $\boldsymbol{z}_1$ to be their already-transported counterparts, i.e.,
$\boldsymbol{z}_1=\widehat{T}_{\boldsymbol{Z}_j}(\boldsymbol{z}_0)$, where
$\widehat{T}_{\boldsymbol{Z}_j}$ denotes the vector of transport maps for the components of
$\boldsymbol{Z}_j$ (in the chosen topological order). Equivalently, writing
$\widehat{T}(\boldsymbol{z})$ for the transported parent vector, we have the shorthand
\[
\widehat{T}_{j}(x \mid \boldsymbol{z})
=
\widehat{Q}_{1,j}\!\left( \widehat{F}_{0,j}(x \mid \boldsymbol{z}) \mid \widehat{T}(\boldsymbol{z}) \right).
\]
as in \cite{machado2025sequential}. Similarly, categorical mediators are handled via simplex-based transport, leading to estimators $\widehat{T}_j : \mathcal{S}_{K} \to \mathcal{S}_{K}$ following \citet{machado2025categorical}.

Regularity assumptions are necessary here. Namely, we assume that \ref{A1} conditional mediator distributions admit densities or finite supports; \ref{A2} conditional CDFs are strictly increasing; \ref{A3} kernel estimators of conditional distributions converge uniformly; \ref{A4} the mediator graph is a known DAG. Formally,

\begin{enumerate}[label=(A\arabic*)]
    \item\label{A1}
    For each $j$, the conditional law of $X_j \mid \boldsymbol{Z}_j=\boldsymbol{z}, A=a$ has a density with respect to the Lebesgue measure (numerical case) or a probability mass function on a finite set (categorical case).
    \item\label{A2}
    The conditional CDFs $F_{a,j}(\cdot \mid \boldsymbol{z})$ are strictly increasing on the support of $X_j \mid \boldsymbol{Z}_j=\boldsymbol{z}, A=a$.
    \item\label{A3}
    Kernel estimators for the conditional distribution satisfy uniform convergence:
    \[
    \sup_{x,\boldsymbol{z}} \bigl| \widehat{F}_{0,j}(x \mid \boldsymbol{z})-F_{0,j}(x \mid \boldsymbol{z}) \bigr|
        \xrightarrow[]{p} 0,
    \text{ and }
    \sup_{u,\boldsymbol{z}} \bigl| \widehat{Q}_{1,j}(u \mid \boldsymbol{z})-Q_{1,j}(u \mid \boldsymbol{z}) \bigr|
        \xrightarrow[]{p} 0.
    \]
    \item\label{A4}
    The DAG $\mathcal{G}$ is known and acyclic, allowing a valid topological ordering.
\end{enumerate}

Assumption~\ref{A3} is standard in nonparametric kernel estimation and holds under mild smoothness and bandwidth conditions; see, e.g., \citet{einmahl_mason_2005} or \citet[Ch.~6]{li2007nonparametric}.
Moreover, under a conditional density bounded away from zero around the relevant quantiles, uniform convergence of $\widehat F_{a,j}$ implies uniform convergence of the corresponding conditional quantile estimator by standard inversion arguments (e.g., \citet[Thm.~21.2]{vandervaart1998}). 
The other assumptions are standard in transport literature, and facilitate the consistency of ST mappings.

For numeric mediator, we will also require regularity in the transported covariate argument \ref{A6}, 

\begin{enumerate}[label=(A\arabic*)]
\setcounter{enumi}{4}
\item\label{A6}
There exists $\varepsilon\in(0,1/2)$ such that, for any compact $\mathcal K_j$ contained in the interior of
$\mathrm{supp}(\boldsymbol Z_j)$, the map
$\boldsymbol z\mapsto Q_{1,j}(u\mid \boldsymbol z)$ is uniformly continuous on $\mathcal K_j$ uniformly over
$u\in[\varepsilon,1-\varepsilon]$; for instance, a Lipschitz condition:
\[
\sup_{u\in[\varepsilon,1-\varepsilon]}\bigl|Q_{1,j}(u\mid \boldsymbol z)-Q_{1,j}(u\mid \boldsymbol z')\bigr|
\le L_j\|\boldsymbol z-\boldsymbol z'\|\qquad\forall\,\boldsymbol z,\boldsymbol z'\in\mathcal K_j.
\]
\end{enumerate}

Assumption~\ref{A6} is also mild and standard in quantile regression: it holds, for instance, if the conditional density is bounded away from zero around the relevant quantiles and the conditional CDF varies smoothly in $\boldsymbol z_1$, ensuring stability of inversion; see, e.g., \citet[Sec.~2.2]{koenker2005quantile} and related regularity conditions in \citet{kato2012estimation}.

We use the following standard stability condition for discrete OT on the simplex, with \ref{C1}, continuity/stability of the simplex transport operator; and \ref{C2}, consistency of conditional probability estimators.

\begin{enumerate}[label=(A\arabic*)]
\setcounter{enumi}{5}
\item\label{C1}
For each $j$ and $(\boldsymbol z_0,\boldsymbol z_1)$ in the interior of $\mathrm{supp}(\boldsymbol Z_j)$, the optimal simplex transport
operator $T_j(\cdot\mid \boldsymbol z_0,\boldsymbol z_1)$ is continuous at
$\bigl(p_{0,j}(\cdot\mid \boldsymbol z_0),p_{1,j}(\cdot\mid \boldsymbol z_1)\bigr)$; e.g., the optimal plan/map is unique
(or a deterministic continuous selection is fixed).
\item\label{C2}
For $a\in\{0,1\}$,
$\displaystyle\sup_{k,\boldsymbol z}\bigl|\widehat p_{a,j}(k\mid \boldsymbol z)-p_{a,j}(k\mid \boldsymbol z)\bigr|\xrightarrow{p}0$.
\end{enumerate}

Assumption~\ref{C1} is a stability requirement for the discrete OT solution on the simplex: for fixed costs, the simplex transport is the solution of a linear program whose optimal solution mapping is (piecewise) affine and is continuous at points where the optimal plan/map is unique and nondegenerate. This type of local stability is classical in sensitivity/perturbation analysis of linear programs \citep{BonnansShapiro1998,BonnansShapiro2000}. When the OT problem admits multiple optimal plans (a measure-zero ``degeneracy'' situation under generic costs), continuity may fail for an arbitrary selection; in practice, one may either fix a deterministic continuous selection (as done in \citet{machado2025categorical}) or introduce a small strongly convex regularization (e.g., entropic) which enforces uniqueness and smooth dependence on the marginals as discussed in Appendix~\ref{app:pen:match}.

Assumption~\ref{C2} concerns estimation of the conditional probability vectors $p_{a,j}(\cdot\mid \boldsymbol z)$ on a finite support. In simple parametric models such as the multinomial logit, $p_{a,j}(k\mid \boldsymbol z)=\pi_k(\boldsymbol z;\beta_a)$ with a smooth link in $(\boldsymbol z,\beta_a)$, the MLE $\widehat\beta_a$ is consistent under standard regularity/identifiability conditions for generalized linear models (e.g., no separation and finite Fisher information), and hence $\widehat p_{a,j}(k\mid \boldsymbol z)=\pi_k(\boldsymbol z;\widehat\beta_a)$ converges to $p_{a,j}(k\mid \boldsymbol z)$.
Moreover, on compact subsets of the covariate space (bounded regressors), smoothness of $\pi_k$ implies that this convergence can be strengthened to uniform convergence in $\boldsymbol z$ (a standard continuous-mapping/Glivenko--Cantelli argument for parametric classes); see, e.g.,  \citet{FahrmeirKaufmann1985} and textbook treatments of multinomial logit such as \citet{Agresti2002}.

Finally, note that multinomial logit probabilities are strictly positive for finite parameters, so the conditional laws typically lie in the interior of the simplex; however, probabilities can become arbitrarily close to $0$ if linear predictors diverge, which may lead to numerical instability of the discrete OT solution. A common practical safeguard is to restrict attention to regions where $\min_k p_{a,j}(k\mid \boldsymbol z)\ge c>0$ (or to apply mild regularization), ensuring the transported assignment is stable.

\subsection{Pointwise consistency of conditional transport maps}\label{sec:5:2}

\paragraph*{Root mediator}

\begin{lme}[Consistency for numeric root mediators]
\label{lem:base-root}
Let $j$ be such that $\mathrm{parents}_{\mathcal{G}}(X_j)=\emptyset$ (so $\boldsymbol Z_j$ is empty).
Under Assumptions~\ref{A1}--\ref{A3}, the estimated marginal transport map is consistent:
in the numerical case,
\[
\sup_{x\in\mathcal X_j}\bigl|\widehat T_j(x)-T_j(x)\bigr|\xrightarrow[]{p}0,
\text{ where }
\begin{cases}
    T_j(x)=Q_{1,j}(F_{0,j}(x)),
\\
\widehat T_j(x)=\widehat Q_{1,j}(\widehat F_{0,j}(x)),
\end{cases}
\]
for any compact $\mathcal X_j$ contained in the interior of $\mathrm{supp}(X_j\mid A=0)$.
\end{lme}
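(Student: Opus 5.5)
The plan is the usual triangle-inequality decomposition. For $x\in\mathcal X_j$, write
\[
\bigl|\widehat T_j(x)-T_j(x)\bigr|
\le \bigl|\widehat Q_{1,j}(\widehat F_{0,j}(x))-Q_{1,j}(\widehat F_{0,j}(x))\bigr|
+\bigl|Q_{1,j}(\widehat F_{0,j}(x))-Q_{1,j}(F_{0,j}(x))\bigr| .
\]
Call the two terms (I) and (II). I will bound each uniformly in $x\in\mathcal X_j$ and show that each tends to zero in probability.

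\textbf{Term (I).} Its supremum is at most $\sup_u|\widehat Q_{1,j}(u)-Q_{1,j}(u)|$, since $\widehat F_{0,j}(x)$ is some point $u$. This is $o_p(1)$ by Assumption~\ref{A3} with $\boldsymbol Z_j$ empty. If \ref{A3} is read as uniform only over $u\in[\varepsilon,1-\varepsilon]$, I will instead restrict to the event that $\widehat F_{0,j}(\mathcal X_j)\subset[\varepsilon,1-\varepsilon]$ and show below that this event has probability tending to one.

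\textbf{Term (II).} This is the continuity step. The set $\mathcal X_j$ is compact and lies in the interior of $\mathrm{supp}(X_j\mid A=0)$. By \ref{A1} $F_{0,j}$ is continuous, and by \ref{A2} it is strictly increasing there. Hence $F_{0,j}(\mathcal X_j)\subset[2\varepsilon,1-2\varepsilon]$ for some $\varepsilon>0$: the endpoints of $\mathcal X_j$ are interior points, so they carry strictly positive mass on either side. By \ref{A3}, $\sup_x|\widehat F_{0,j}(x)-F_{0,j}(x)|\xrightarrow{p}0$, so with probability tending to one $\widehat F_{0,j}(\mathcal X_j)\subset[\varepsilon,1-\varepsilon]$. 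On the compact interval $[\varepsilon,1-\varepsilon]$, the quantile function $Q_{1,j}$ is continuous, because $F_{1,j}$ is strictly increasing on its support (\ref{A2}) and has no atoms (\ref{A1}), so its generalized inverse has no jumps in the interior of $(0,1)$. A continuous function on a compact interval is uniformly continuous, with some modulus $\omega$. Then (II) is at most $\omega\bigl(\sup_x|\widehat F_{0,j}(x)-F_{0,j}(x)|\bigr)$, which tends to zero in probability by the continuous mapping theorem.

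\textbf{Main obstacle.} I expect the delicate point to be justifying continuity of $Q_{1,j}$. In particular, a flat piece of $F_{1,j}$ would create a jump in $Q_{1,j}$, so I need the support of $X_j\mid A=1$ to be connected. I will argue that strict monotonicity on the support (\ref{A2}) rules out such flat pieces on the relevant range. I will also need to keep the argument away from $u\in\{0,1\}$, where $Q_{1,j}$ may be unbounded; the interior-compactness device above handles this.

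Combining the bounds on (I) and (II) gives $\sup_{x\in\mathcal X_j}|\widehat T_j(x)-T_j(x)|\xrightarrow{p}0$.
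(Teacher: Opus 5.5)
Your proposal is correct and follows the paper's proof essentially step for step. You use the same split into $(I)$, bounded by $\sup_u|\widehat Q_{1,j}(u)-Q_{1,j}(u)|$ via \ref{A3}, and $(II)$, controlled by a modulus of continuity of $Q_{1,j}$ on a compact subinterval of $(0,1)$ containing $F_{0,j}(\mathcal X_j)$. Your version is in fact slightly more careful than the paper's in two places. You explicitly restrict to the event $\widehat F_{0,j}(\mathcal X_j)\subset[\varepsilon,1-\varepsilon]$, whose probability tends to one. You also note that continuity of $Q_{1,j}$ requires a gap-free support for $X_j\mid A=1$, which does follow from \ref{A1}--\ref{A2}: an atomless $F_{1,j}$ that is strictly increasing on its support cannot be flat across a gap.
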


In the categorical case (finite support), the plug-in simplex transport map is consistent in the same sense.

\paragraph*{Inductive intuition (sequential plug-in).}

Because sequential transport (ST) follows a topological ordering of the mediator DAG, each step $j$ only depends on quantities associated with nodes $k<j$ (i.e., already-transported ancestors/parents).
For root mediators (no parents), consistency was discussed in the previous sub-section.
Moving to a general node $j$, the only additional difficulty is that the conditional transport map
$\widehat T_j(\cdot\mid \boldsymbol z_0,\boldsymbol z_1)$ is evaluated at a {\em random} covariate value
$\widehat{\boldsymbol z}_1=\widehat T_{\boldsymbol Z_j}(\boldsymbol z_0)$ produced by earlier stages.
This is handled by a standard plug-in decomposition: (i) consistency of $\widehat T_j(\cdot\mid \boldsymbol z_0,\boldsymbol z_1)$ at fixed
$(\boldsymbol z_0,\boldsymbol z_1)$, and (ii) regularity (e.g., continuity/Lipschitz) of the population map in $\boldsymbol z_1$.
When working at step $j$, we thus assume that all previous consistency statements (for $k< j$) have already been established, i.e.,

\begin{enumerate}[label=($\text{H}(j)$)]
    \item\label{A7} At step $j$, for all $k< j$, and any $\boldsymbol{z}_0$ in the interior of $\mathrm{supp}(Z_k)$
    \[
\widehat T_{\boldsymbol Z_k}(\boldsymbol z_0)\xrightarrow[]{p} T_{\boldsymbol Z_k}(\boldsymbol z_0).
\]
\end{enumerate}

\ref{A7} is an induction hypothesis used for intermediate plug-in results; it is proved recursively in Theorem~\ref{thm:seq:consistency}.
Now, we can discuss consistency of $\widehat{T}_j(x\mid \boldsymbol{z}_0)$ when $X_j$ is either numerical or categorical.

\paragraph*{Numerical mediator}

We begin with the case where $X_j$ is a numerical mediator.

\begin{prop}[Consistency of conditional numerical transport]
\label{prop:cv:Tx}
Under Assumptions~\ref{A1}--\ref{A3}, for any mediator $X_j$ and any $(x,\boldsymbol{z}_0,\boldsymbol{z}_1)$
in the interior of the support of $(X_j,\boldsymbol{Z}_j)$,
\[
\widehat{T}_{j}(x \mid \boldsymbol{z}_0,\boldsymbol{z}_1)
    \xrightarrow[]{p}
    T_{j}(x \mid \boldsymbol{z}_0,\boldsymbol{z}_1)
    := Q_{1,j}\!\left( F_{0,j}(x \mid \boldsymbol{z}_0 ) \mid \boldsymbol{z}_1 \right).
\]
\end{prop}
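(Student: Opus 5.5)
The plan is a two-term plug-in decomposition at the fixed point $(x,\boldsymbol z_0,\boldsymbol z_1)$. Set $u:=F_{0,j}(x\mid \boldsymbol z_0)$ and $\widehat u:=\widehat F_{0,j}(x\mid \boldsymbol z_0)$. Then
\[
\bigl|\widehat T_j(x\mid \boldsymbol z_0,\boldsymbol z_1)-T_j(x\mid \boldsymbol z_0,\boldsymbol z_1)\bigr|
\le
\bigl|\widehat Q_{1,j}(\widehat u\mid \boldsymbol z_1)-Q_{1,j}(\widehat u\mid \boldsymbol z_1)\bigr|
+\bigl|Q_{1,j}(\widehat u\mid \boldsymbol z_1)-Q_{1,j}(u\mid \boldsymbol z_1)\bigr| .
\]
The first term is an estimation error for the quantile at a random level. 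It is bounded by $\sup_{v,\boldsymbol z}|\widehat Q_{1,j}(v\mid\boldsymbol z)-Q_{1,j}(v\mid\boldsymbol z)|$. This supremum tends to zero in probability by \ref{A3}, and the randomness of $\widehat u$ does no harm because the bound is uniform in the level.

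The second term requires continuity of the population quantile map $v\mapsto Q_{1,j}(v\mid \boldsymbol z_1)$ at $v=u$, combined with $\widehat u\xrightarrow{p}u$. The latter follows directly from the uniform CDF convergence in \ref{A3}, evaluated at the fixed point $(x,\boldsymbol z_0)$.

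For continuity of the quantile, I would argue as follows. By \ref{A2}, $F_{1,j}(\cdot\mid\boldsymbol z_1)$ is strictly increasing on its support, so its generalized inverse has no jumps inside $(0,1)$. By \ref{A1}, the law has a density, so $F_{1,j}(\cdot\mid \boldsymbol z_1)$ is continuous. Together these make $Q_{1,j}(\cdot\mid\boldsymbol z_1)$ a continuous, strictly increasing bijection from $(0,1)$ onto the interior of the support. The continuous mapping theorem then gives $Q_{1,j}(\widehat u\mid\boldsymbol z_1)\xrightarrow{p}Q_{1,j}(u\mid\boldsymbol z_1)$.

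The main obstacle is ensuring $u$ lies strictly inside $(0,1)$, since continuity of the quantile can fail at the endpoints, where the support may be unbounded. This is where the interiority hypothesis on $x$ is used. Because $x$ is in the interior of the conditional support and $F_{0,j}(\cdot\mid\boldsymbol z_0)$ is strictly increasing there (\ref{A2}), we get $0<u<1$. Hence $u\in[\varepsilon,1-\varepsilon]$ for some $\varepsilon>0$, and with probability tending to one $\widehat u$ also lies in $[\varepsilon/2,1-\varepsilon/2]$. On this compact level range, continuity of $Q_{1,j}(\cdot\mid\boldsymbol z_1)$ is uniform, which cleanly controls the second term.

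Combining the two terms via $\mathbb P(|A+B|>\eta)\le \mathbb P(|A|>\eta/2)+\mathbb P(|B|>\eta/2)$ gives the claim. Note that, since $\boldsymbol z_1$ is fixed here, no regularity in $\boldsymbol z_1$ such as \ref{A6} is needed. That assumption only enters later, when $\boldsymbol z_1$ is replaced by the random transported parents under \ref{A7}.
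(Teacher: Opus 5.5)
Your proposal is correct and follows the paper's proof essentially verbatim. It uses the same split into two terms: an estimation term bounded by $\sup_{u,\boldsymbol z}|\widehat Q_{1,j}(u\mid\boldsymbol z)-Q_{1,j}(u\mid\boldsymbol z)|=o_p(1)$ via Assumption~\ref{A3}, and a continuity term handled by the continuous mapping theorem using continuity of $u\mapsto Q_{1,j}(u\mid\boldsymbol z_1)$ on $(0,1)$ from Assumptions~\ref{A1}--\ref{A2}; your added check that $F_{0,j}(x\mid\boldsymbol z_0)\in(0,1)$, with $\widehat u$ bounded away from $\{0,1\}$ with probability tending to one, makes explicit a step the paper leaves implicit.
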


\begin{cor}[Sequential (parent-transported) version]
\label{cor:cv:Tx-seq}
In the sequential procedure, we set $\boldsymbol{z}_1 = T_{\boldsymbol{Z}_j}(\boldsymbol{z}_0)$,
so that
\[
\widehat{T}_{j}\!\left(x \mid \boldsymbol{z}_0,{T}_{\boldsymbol{Z}_j}(\boldsymbol{z}_0)\right)
\xrightarrow[]{p}
T_{j}\!\left(x \mid \boldsymbol{z}_0, T_{\boldsymbol{Z}_j}(\boldsymbol{z}_0)\right).
\]
\end{cor}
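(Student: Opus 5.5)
The corollary is a direct specialization of Proposition~\ref{prop:cv:Tx}. Fix $x$ and $\boldsymbol z_0$ in the interior of the support, and set $\boldsymbol z_1^\ast := T_{\boldsymbol Z_j}(\boldsymbol z_0)$. Here $T_{\boldsymbol Z_j}$ is the \emph{population} parent transport, so $\boldsymbol z_1^\ast$ is a deterministic point. The triple $(x,\boldsymbol z_0,\boldsymbol z_1^\ast)$ is therefore a fixed argument, and Proposition~\ref{prop:cv:Tx} applies verbatim once we check that $\boldsymbol z_1^\ast$ lies in the interior of $\mathrm{supp}(\boldsymbol Z_j\mid A=1)$. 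This gives
\[
\widehat T_j(x\mid \boldsymbol z_0,\boldsymbol z_1^\ast)\xrightarrow{p} Q_{1,j}\bigl(F_{0,j}(x\mid \boldsymbol z_0)\mid \boldsymbol z_1^\ast\bigr)=T_j(x\mid \boldsymbol z_0,\boldsymbol z_1^\ast),
\]
which is the claim.

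The main step is the support check. Each component of $T_{\boldsymbol Z_j}$ is, inductively, a composition of maps of the form $Q_{1,k}(F_{0,k}(\cdot\mid\cdot)\mid\cdot)$. Under \ref{A1}--\ref{A2}, $F_{0,k}(\cdot\mid \boldsymbol z)$ maps interior points of the source support to values in $(0,1)$. The quantile $Q_{1,k}(u\mid\cdot)$ then returns, for $u\in(0,1)$, interior points of the target conditional support, since strict monotonicity rules out flat pieces and atoms. I would argue componentwise along the topological order that an interior $\boldsymbol z_0$ is sent to an interior point of the $A=1$ parent support. I would make this explicit by restricting to $u\in[\varepsilon,1-\varepsilon]$ as in \ref{A6}, choosing $\varepsilon$ so that $F_{0,j}(x\mid\boldsymbol z_0)\in[\varepsilon,1-\varepsilon]$ for the given interior $x$.

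If needed, I would recall the mechanism of Proposition~\ref{prop:cv:Tx} at the fixed point, splitting
\[
\bigl|\widehat Q_{1,j}(\widehat u\mid \boldsymbol z_1^\ast)-Q_{1,j}(u\mid \boldsymbol z_1^\ast)\bigr|\le \sup_{v}\bigl|\widehat Q_{1,j}(v\mid\boldsymbol z_1^\ast)-Q_{1,j}(v\mid\boldsymbol z_1^\ast)\bigr| + \bigl|Q_{1,j}(\widehat u\mid\boldsymbol z_1^\ast)-Q_{1,j}(u\mid\boldsymbol z_1^\ast)\bigr|,
\]
with $u=F_{0,j}(x\mid\boldsymbol z_0)$ and $\widehat u=\widehat F_{0,j}(x\mid\boldsymbol z_0)$. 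The first term vanishes by \ref{A3}. The second vanishes by continuity of $Q_{1,j}(\cdot\mid\boldsymbol z_1^\ast)$ at $u$, which holds under \ref{A1}--\ref{A2} for interior $u$, combined with $\widehat u\xrightarrow{p}u$ from \ref{A3} and the continuous mapping theorem.

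Note that no Lipschitz condition \ref{A6} and no induction hypothesis \ref{A7} are needed here, because the conditioning value $\boldsymbol z_1$ is the population one. Those assumptions become relevant only for the fully plug-in version with $\widehat T_{\boldsymbol Z_j}(\boldsymbol z_0)$, which the paper treats in Theorem~\ref{thm:seq:consistency}.
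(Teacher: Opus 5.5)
Your proposal is correct and follows the paper's proof, which is exactly the one-line specialization of Proposition~\ref{prop:cv:Tx} at the deterministic point $\boldsymbol z_1=T_{\boldsymbol Z_j}(\boldsymbol z_0)$, obtained from the population rather than the estimated parent transport. The paper leaves your additional interior-support check implicit, and you only sketch it heuristically; it is harmless, since interiority of $\boldsymbol z_1$ matters only insofar as the conditional law at $\boldsymbol z_1$ must be well defined and satisfy \ref{A1}--\ref{A2}, while the proof of Proposition~\ref{prop:cv:Tx} otherwise uses just the uniform bound in \ref{A3} and continuity of $u\mapsto Q_{1,j}(u\mid\boldsymbol z_1)$ at $F_{0,j}(x\mid\boldsymbol z_0)\in(0,1)$.
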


\begin{prop}[Plug-in along estimated transported parents]\label{prop:5:2}
Assume \ref{A1}--\ref{A6} and \ref{A7}. Fix $\boldsymbol z_0$ in the interior of $\mathrm{supp}(Z_j)$ and set $\boldsymbol z_1=T_{\boldsymbol Z_j}(\boldsymbol z_0)$ and $\hat{\boldsymbol{z}}_1=\widehat{T}_{\boldsymbol Z_j}(\boldsymbol z_0)$. Moreover, if \ref{A6} holds for some $\varepsilon>0$ and if $F_{0,j}(x\mid \boldsymbol z_0)\in[\varepsilon,1-\varepsilon]$, by consistency of the sequential transport estimator for $\boldsymbol Z_j$,
\[
\widehat{T}_{j}\!\left(x \mid \boldsymbol{z}_0\right)
:=\widehat{T}_j\!\left(x\mid \boldsymbol z_0,\widehat{T}_{\boldsymbol Z_j}(\boldsymbol z_0) \right)
\xrightarrow{p}T_j\!\left(x\mid \boldsymbol{z}_0,{T}_{\boldsymbol Z_j}(\boldsymbol z_0)\right)=:{T}_{j}\!\left(x \mid \boldsymbol{z}_0\right).
\]
\end{prop}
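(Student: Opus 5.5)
The plan is a three-term plug-in decomposition. Write $u=F_{0,j}(x\mid\boldsymbol z_0)$ and $\hat u=\widehat F_{0,j}(x\mid\boldsymbol z_0)$, and split the error as
\[
\widehat Q_{1,j}(\hat u\mid\hat{\boldsymbol z}_1)-Q_{1,j}(u\mid\boldsymbol z_1)
=\underbrace{\bigl[\widehat Q_{1,j}(\hat u\mid\hat{\boldsymbol z}_1)-Q_{1,j}(\hat u\mid\hat{\boldsymbol z}_1)\bigr]}_{(I)}
+\underbrace{\bigl[Q_{1,j}(\hat u\mid\hat{\boldsymbol z}_1)-Q_{1,j}(u\mid\hat{\boldsymbol z}_1)\bigr]}_{(II)}
+\underbrace{\bigl[Q_{1,j}(u\mid\hat{\boldsymbol z}_1)-Q_{1,j}(u\mid\boldsymbol z_1)\bigr]}_{(III)}.
\]
Each term will be shown to be $o_p(1)$ on an event whose probability tends to one.

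\textbf{Localization.} The induction hypothesis \ref{A7} gives $\hat{\boldsymbol z}_1\xrightarrow{p}\boldsymbol z_1$. The point $\boldsymbol z_1$ lies in the interior of $\mathrm{supp}(\boldsymbol Z_j)$, which is the interior-of-support requirement carried over from Proposition~\ref{prop:cv:Tx}. Fix a closed ball $\mathcal K_j=\bar B(\boldsymbol z_1,r)$ contained in that interior. Then $E_n=\{\hat{\boldsymbol z}_1\in\mathcal K_j\}$ has $\mathbb P(E_n)\to1$.

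By \ref{A3} at the fixed point $(x,\boldsymbol z_0)$, $\hat u\xrightarrow{p}u$. Since $u\in[\varepsilon,1-\varepsilon]$, I would apply \ref{A6} with a slightly smaller $\varepsilon'<\varepsilon$. Then the event $\{\hat u\in[\varepsilon',1-\varepsilon']\}$ also has probability tending to one.

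\textbf{Term (I).} This is bounded by $\sup_{v,\boldsymbol z}|\widehat Q_{1,j}(v\mid\boldsymbol z)-Q_{1,j}(v\mid\boldsymbol z)|$, which is $o_p(1)$ by \ref{A3}. The supremum is uniform in its arguments, so plugging in the random points $(\hat u,\hat{\boldsymbol z}_1)$ causes no difficulty.

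\textbf{Term (III).} On $E_n$, the Lipschitz (or uniform continuity) property in \ref{A6} gives $|(III)|\le L_j\|\hat{\boldsymbol z}_1-\boldsymbol z_1\|=o_p(1)$, which is exactly where \ref{A7} is used.

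\textbf{Term (II): the main obstacle.} This term requires continuity of $v\mapsto Q_{1,j}(v\mid\boldsymbol z)$ uniformly over $\boldsymbol z\in\mathcal K_j$ and $v\in[\varepsilon',1-\varepsilon']$. Assumption \ref{A6} only controls regularity in $\boldsymbol z$, not in $v$.

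My first attempt would be a compactness argument. By \ref{A1}--\ref{A2}, each $F_{1,j}(\cdot\mid\boldsymbol z)$ is continuous and strictly increasing, so each $Q_{1,j}(\cdot\mid\boldsymbol z)$ is continuous on $[\varepsilon',1-\varepsilon']$. Joint continuity of $(v,\boldsymbol z)\mapsto Q_{1,j}(v\mid\boldsymbol z)$ then follows from continuity in $v$ combined with the continuity in $\boldsymbol z$ that is uniform in $v$ (from \ref{A6}). A jointly continuous function on the compact set $[\varepsilon',1-\varepsilon']\times\mathcal K_j$ is uniformly continuous. Hence $|(II)|\le\omega(|\hat u-u|)$ on the good event, where $\omega$ is the modulus of uniform continuity, and this tends to $0$ in probability.

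If pointwise continuity in $v$ seems too weak to justify from \ref{A1}--\ref{A2} alone, I would state the mild reinforcement mentioned after \ref{A6}: the conditional density is bounded below near the relevant quantiles, uniformly on $\mathcal K_j$. That makes $Q_{1,j}$ Lipschitz in $v$ there.

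\textbf{Conclusion.} Combining the three bounds on the intersection of the good events, and using that the probability of the complement tends to zero, yields the claimed convergence in probability. Corollary~\ref{cor:cv:Tx-seq} is recovered as the special case $\hat{\boldsymbol z}_1=\boldsymbol z_1$.
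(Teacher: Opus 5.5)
Your proposal is correct and uses the same decomposition as the paper's proof: your (I), (II), (III) are exactly its (I.a), (I.b), (II), with \ref{A3}, continuity in $u$, and \ref{A6} invoked at the same places. The one improvement is in the middle term, where the paper only cites continuity of $u\mapsto Q_{1,j}(u\mid\widehat{\boldsymbol z}_1)$ at the random index $\widehat{\boldsymbol z}_1$, whereas your joint-continuity/compactness argument actually supplies the needed equicontinuity in $\boldsymbol z$. To tighten it, note that \ref{A6} is granted only for its stated $\varepsilon$, so shrinking to $\varepsilon'<\varepsilon$ is an extra assumption; this is avoidable when $F_{0,j}(x\mid\boldsymbol z_0)\in(\varepsilon,1-\varepsilon)$ by working on a compact neighbourhood of it inside $[\varepsilon,1-\varepsilon]$.
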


This result generalizes classical consistency of univariate monotone transport to a conditional setting. The quality of the approximation propagates through the sequence of transport operations, yielding consistent mediator counterfactuals for ST.

\paragraph*{Categorical mediator}

We now turn to categorical mediators. Let $p_{a,j}(\cdot \mid \boldsymbol{z})$ denote the conditional multinomial probability vector of $X_j$ under $A = a$, given the parent values $\boldsymbol{z}$.

\begin{prop}[Pointwise consistency of simplex transport for categorical mediators]
\label{prop:cat:consistency:pointwise}
Under Assumptions~\ref{A1}--\ref{A4}, for any $(\boldsymbol z_0,\boldsymbol z_1)$ in the interior of the
support of $\boldsymbol Z_j$,
\[
\widehat{T}_j \bigl(p_{0,j}(\cdot \mid \boldsymbol{z}_0) \mid \boldsymbol z_0,\boldsymbol z_1 \bigr)
    \xrightarrow[]{p}
    T_j \bigl(p_{0,j}(\cdot \mid \boldsymbol{z}_0) \mid \boldsymbol z_0,\boldsymbol z_1 \bigr),
\]
where $T_j(\cdot\mid \boldsymbol z_0,\boldsymbol z_1)$ is the deterministic optimal simplex transport map
(from the conditional law under $A=0$ at $\boldsymbol z_0$ to that under $A=1$ at $\boldsymbol z_1$).
\end{prop}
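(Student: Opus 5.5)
The argument is a continuous-mapping argument. I view the simplex transport as a deterministic functional $\Phi$ of the pair of conditional probability vectors it connects. For fixed $(\boldsymbol z_0,\boldsymbol z_1)$, the population map $T_j(\cdot\mid \boldsymbol z_0,\boldsymbol z_1)$ is the output of the discrete/semi-discrete OT problem on $\mathcal S_{K-1}$ with source $p_{0,j}(\cdot\mid \boldsymbol z_0)$ and target $p_{1,j}(\cdot\mid \boldsymbol z_1)$. The plug-in $\widehat T_j(\cdot\mid \boldsymbol z_0,\boldsymbol z_1)$ is the same operator applied to $\widehat p_{0,j}(\cdot\mid \boldsymbol z_0)$ and $\widehat p_{1,j}(\cdot\mid \boldsymbol z_1)$. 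Writing
\[
T_j(\cdot\mid \boldsymbol z_0,\boldsymbol z_1)=\Phi\bigl(p_{0,j}(\cdot\mid \boldsymbol z_0),p_{1,j}(\cdot\mid \boldsymbol z_1)\bigr),
\qquad
\widehat T_j(\cdot\mid \boldsymbol z_0,\boldsymbol z_1)=\Phi\bigl(\widehat p_{0,j}(\cdot\mid \boldsymbol z_0),\widehat p_{1,j}(\cdot\mid \boldsymbol z_1)\bigr),
\]
the claim reduces to two facts. The inputs converge in probability, and $\Phi$ is continuous at the population inputs. The statement cites \ref{A1}--\ref{A4}, but the working ingredients are \ref{C1} and \ref{C2}, which I would invoke explicitly. \ref{A1} only guarantees that the conditional laws are genuine probability mass functions on $[K]$, so the vectors live in $\mathcal S_{K-1}$.

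\textbf{Step 1 (input convergence).} By \ref{C2}, $\sup_{k,\boldsymbol z}|\widehat p_{a,j}(k\mid \boldsymbol z)-p_{a,j}(k\mid \boldsymbol z)|\xrightarrow{p}0$ for $a=0,1$. Evaluating at the fixed points $\boldsymbol z_0$ and $\boldsymbol z_1$ gives convergence of the pair in $\mathbb R^{2K}$. Here $K$ is finite, so all norms are equivalent.

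If the estimated map is built from fitted probabilities on the samples, I would also note the following. The empirical source and target clouds $\{\widehat p_{a,j}(x_i)\}$, weighted as in Algorithm~\ref{alg:0}, converge weakly to the population push-forward laws on $\mathcal S_{K-1}$. This follows from the uniform convergence in \ref{C2} together with the law of large numbers for the kernel-weighted empirical measures. Since $\mathcal S_{K-1}$ is compact and the quadratic cost is bounded and continuous there, this is enough to pass to the OT problem.

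\textbf{Step 2 (continuity).} By \ref{C1}, the operator $\Phi$ is continuous at $\bigl(p_{0,j}(\cdot\mid \boldsymbol z_0),p_{1,j}(\cdot\mid \boldsymbol z_1)\bigr)$ for $(\boldsymbol z_0,\boldsymbol z_1)$ in the interior of $\mathrm{supp}(\boldsymbol Z_j)$. The continuous mapping theorem for convergence in probability then gives
\[
\Phi\bigl(\widehat p_{0,j}(\cdot\mid \boldsymbol z_0),\widehat p_{1,j}(\cdot\mid \boldsymbol z_1)\bigr)\xrightarrow{p}\Phi\bigl(p_{0,j}(\cdot\mid \boldsymbol z_0),p_{1,j}(\cdot\mid \boldsymbol z_1)\bigr).
\]
Evaluated at the argument $p_{0,j}(\cdot\mid \boldsymbol z_0)$, this yields the statement. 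If instead the estimator is evaluated at $\widehat p_{0,j}(\cdot\mid \boldsymbol z_0)$, the extra term is handled with the same continuity, viewing $\Phi$ jointly in the evaluation point and the marginals.

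\textbf{Main obstacle.} The delicate point is justifying continuity of the OT solution map, because linear-programming solutions can jump when the optimum is non-unique. I would first use \ref{C1} directly, supported by the LP perturbation theory cited after \ref{C1}. Under uniqueness, the argmin correspondence is upper hemicontinuous by Berge's maximum theorem, since the feasible transport polytope varies continuously with the marginals and the cost is fixed. A single-valued upper hemicontinuous correspondence is continuous. If uniqueness fails, I would fall back on the fixed deterministic continuous selection, or on the entropic regularization of Appendix~\ref{app:pen:match}, which yields a smooth dependence on the marginals.
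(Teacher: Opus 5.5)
Your proposal is correct and follows the same route as the paper's proof: pointwise convergence of $\widehat p_{a,j}(\cdot\mid \boldsymbol z)$ at the fixed values $\boldsymbol z_0,\boldsymbol z_1$ from \ref{C2}, then continuity of the simplex transport operator at the population inputs from \ref{C1}, then the continuous mapping theorem. The paper's proof also invokes \ref{C1}--\ref{C2} rather than \ref{A1}--\ref{A4} alone, as you do, and your Berge-type justification of continuity under uniqueness is an extra detail rather than a different approach.
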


\begin{cor}[Consistency along the sequential transport path]
\label{cor:cat:consistency:path}
Assume the conditions of Proposition~\ref{prop:cat:consistency:pointwise}. In the sequential procedure, set
$\boldsymbol z_1 = T_{\boldsymbol Z_j}(\boldsymbol z_0)$. Then for any $\boldsymbol z_0$ in the interior of
the support of $\boldsymbol Z_j$,
\[
\widehat{T}_j \bigl(p_{0,j}(\cdot \mid \boldsymbol{z}_0) \mid \boldsymbol z_0,\boldsymbol{z}_1 = T_{\boldsymbol{Z}_j}(\boldsymbol{z}_0) \bigr)
    \xrightarrow[]{p}
T_j \bigl(p_{0,j}(\cdot \mid \boldsymbol{z}_0) \mid \boldsymbol z_0,\boldsymbol{z}_1 = T_{\boldsymbol{Z}_j}(\boldsymbol{z}_0) \bigr).
\]
\end{cor}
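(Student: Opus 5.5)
Corollary~\ref{cor:cat:consistency:path} follows from Proposition~\ref{prop:cat:consistency:pointwise} by specialization. The key observation is that here $\boldsymbol z_1=T_{\boldsymbol Z_j}(\boldsymbol z_0)$ is the \emph{population} transported parent vector, not the estimated one $\widehat T_{\boldsymbol Z_j}(\boldsymbol z_0)$. It is therefore a deterministic point that depends only on $\boldsymbol z_0$ and the population maps of earlier nodes. No random plug-in is involved, so the corollary is the categorical analogue of Corollary~\ref{cor:cv:Tx-seq} rather than of Proposition~\ref{prop:5:2}.

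The steps, in order, are as follows.

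First, fix $\boldsymbol z_0$ in the interior of $\mathrm{supp}(\boldsymbol Z_j)$ and define the deterministic point $\boldsymbol z_1^\ast:=T_{\boldsymbol Z_j}(\boldsymbol z_0)$.

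Second, verify that $\boldsymbol z_1^\ast$ is an admissible second argument in Proposition~\ref{prop:cat:consistency:pointwise}, i.e.\ that it lies in the interior of the support of $\boldsymbol Z_j$ under $A=1$. Each component of $T_{\boldsymbol Z_j}$ is a (conditional) pushforward onto the corresponding $A=1$ conditional law, so $\boldsymbol z_1^\ast$ lies in the target support.
\begin{itemize}
\item For numerical parents, interiority follows from \ref{A2}: the maps $Q_{1,k}\circ F_{0,k}$ are strictly increasing and continuous, so they send interior points of the source support to interior points of the target support.
\item For categorical parents, the support is finite and "interior" is understood as membership in the support, which holds trivially.
\end{itemize}

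Third, apply Proposition~\ref{prop:cat:consistency:pointwise} at the fixed pair $(\boldsymbol z_0,\boldsymbol z_1^\ast)$. It gives
\[
\widehat T_j\bigl(p_{0,j}(\cdot\mid\boldsymbol z_0)\mid \boldsymbol z_0,\boldsymbol z_1^\ast\bigr)\xrightarrow{p}T_j\bigl(p_{0,j}(\cdot\mid\boldsymbol z_0)\mid \boldsymbol z_0,\boldsymbol z_1^\ast\bigr).
\]
This is exactly the claim.

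For completeness, I would recall the mechanism behind the proposition at this point. By \ref{C2}, $\widehat p_{0,j}(\cdot\mid\boldsymbol z_0)\to p_{0,j}(\cdot\mid\boldsymbol z_0)$ and $\widehat p_{1,j}(\cdot\mid\boldsymbol z_1^\ast)\to p_{1,j}(\cdot\mid\boldsymbol z_1^\ast)$ in probability. The sup over $\boldsymbol z$ in \ref{C2} makes this hold at any point, in particular at $\boldsymbol z_1^\ast$. By \ref{C1}, the simplex OT operator is continuous at the population marginals, so the continuous mapping theorem yields the convergence.

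The main obstacle is the second step, namely checking that the transported parent value stays in the region where \ref{C1} and \ref{C2} apply. If interiority cannot be guaranteed for boundary-adjacent $\boldsymbol z_0$, I would restrict the statement to $\boldsymbol z_0$ whose population transport lands in the interior. This is implicit in "any $\boldsymbol z_0$ in the interior," given the monotone, support-preserving structure of the earlier maps. Everything else is a direct invocation of the proposition.
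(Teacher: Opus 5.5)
Your proposal is correct and follows the paper's proof, which simply evaluates Proposition~\ref{prop:cat:consistency:pointwise} at the deterministic point $\boldsymbol z_1=T_{\boldsymbol Z_j}(\boldsymbol z_0)$. Your extra check, that this population-transported parent value lies in the region where the proposition applies, is a step the paper leaves implicit. It is a sensible supplement rather than a different route, though it is only informal when $\boldsymbol Z_j$ has several components, since componentwise interiority need not imply interiority in the joint support.
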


\begin{prop}[Plug-in consistency of  transport along estimated sequential transport]
\label{prop:cat:consistency:plugin}
Assume~\ref{A1}--\ref{A4}, \ref{C1}--\ref{C2} and \ref{A7}. Fix $\boldsymbol z_0$ in the interior of $\mathrm{supp}(\boldsymbol Z_j)$ and let
$\boldsymbol z_1 = T_{\boldsymbol Z_j}(\boldsymbol z_0)$ and
$\widehat{\boldsymbol z}_1=\widehat T_{\boldsymbol Z_j}(\boldsymbol z_0)$.
By consistency of the sequential transport estimator for $\boldsymbol Z_j$, 
then
\[
\widehat{T}_j \bigl(p_{0,j}(\cdot \mid \boldsymbol{z}_0) \mid \boldsymbol z_0,\boldsymbol{z}_1 = \widehat{T}_{\boldsymbol{Z}_j}(\boldsymbol{z}_0) \bigr)
    \xrightarrow[]{p}
    T_j \bigl(p_{0,j}(\cdot \mid \boldsymbol{z}_0) \mid \boldsymbol z_0,\boldsymbol{z}_1 = {T}_{\boldsymbol{Z}_j}(\boldsymbol{z}_0) \bigr).
\]
\end{prop}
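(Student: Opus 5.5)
We prove Proposition~\ref{prop:cat:consistency:plugin} with the same plug-in decomposition used for the numerical case in Proposition~\ref{prop:5:2}. Write $p_0:=p_{0,j}(\cdot\mid\boldsymbol z_0)$. The estimated map $\widehat T_j(p_0\mid \boldsymbol z_0,\boldsymbol z)$ is the discrete OT operator applied to the estimated source and target laws, $\widehat p_{0,j}(\cdot\mid\boldsymbol z_0)$ and $\widehat p_{1,j}(\cdot\mid\boldsymbol z)$. Accordingly, we regard $T_j$ as a deterministic function
\[
\Phi\bigl(q_0,q_1\bigr)(p_0),
\]
whose inputs are a pair of probability vectors in $\mathcal S_{K-1}\times\mathcal S_{K-1}$. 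In this notation,
\[
\widehat T_j(p_0\mid\boldsymbol z_0,\widehat{\boldsymbol z}_1)=\Phi\bigl(\widehat p_{0,j}(\cdot\mid\boldsymbol z_0),\widehat p_{1,j}(\cdot\mid\widehat{\boldsymbol z}_1)\bigr)(p_0),
\]
and the target is $\Phi\bigl(p_{0,j}(\cdot\mid\boldsymbol z_0),p_{1,j}(\cdot\mid\boldsymbol z_1)\bigr)(p_0)$. By Assumption~\ref{C1}, $\Phi$ is continuous at this limit pair. The continuous mapping theorem therefore reduces the claim to showing that the pair of estimated input marginals converges in probability to the pair of true marginals.

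\textbf{Step 1 (source marginal).} We have $\widehat p_{0,j}(\cdot\mid\boldsymbol z_0)\xrightarrow{p}p_{0,j}(\cdot\mid\boldsymbol z_0)$ directly from Assumption~\ref{C2}.

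\textbf{Step 2 (target marginal at a random argument).} We split
\[
\bigl|\widehat p_{1,j}(k\mid\widehat{\boldsymbol z}_1)-p_{1,j}(k\mid\boldsymbol z_1)\bigr|\le \sup_{\boldsymbol z}\bigl|\widehat p_{1,j}(k\mid\boldsymbol z)-p_{1,j}(k\mid\boldsymbol z)\bigr| + \bigl|p_{1,j}(k\mid\widehat{\boldsymbol z}_1)-p_{1,j}(k\mid\boldsymbol z_1)\bigr|.
\]
Because the first term is a supremum over $\boldsymbol z$, it does not matter that $\widehat{\boldsymbol z}_1$ is random, and Assumption~\ref{C2} sends this term to $0$ in probability. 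For the second term, \ref{A7} gives $\widehat{\boldsymbol z}_1\xrightarrow{p}\boldsymbol z_1$, so it vanishes by the continuous mapping theorem provided $\boldsymbol z\mapsto p_{1,j}(k\mid\boldsymbol z)$ is continuous at $\boldsymbol z_1$. We take $K$ finite, so uniformity over $k$ is automatic.

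\textbf{Main obstacle.} This continuity of the true conditional probabilities in the target parent value is not spelled out in \ref{C1}--\ref{C2}, yet it is the categorical analogue of \ref{A6}. We will obtain it in one of two ways. Either we read it as implicit in \ref{C1}, since continuity of the transport operator ``at'' $(\boldsymbol z_0,\boldsymbol z_1)$ in the interior is only meaningful if the marginals vary continuously there. Or we derive it in the parametric setting discussed after \ref{C2}, where $p_{1,j}(k\mid\boldsymbol z)=\pi_k(\boldsymbol z;\beta_1)$ with a smooth link. A second point to check is that $\boldsymbol z_1=T_{\boldsymbol Z_j}(\boldsymbol z_0)$ lies in the interior of the support, so that \ref{C1} applies there; with probability tending to one, $\widehat{\boldsymbol z}_1$ then also lies in a neighbourhood of $\boldsymbol z_1$ contained in the interior.

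\textbf{Conclusion.} Steps 1 and 2 give joint convergence in probability of the estimated marginal pair to the true pair. Combined with the continuity of $\Phi$ from \ref{C1} and the continuous mapping theorem, this yields the stated convergence, consistently with Corollary~\ref{cor:cat:consistency:path} as the case $\widehat{\boldsymbol z}_1=\boldsymbol z_1$.
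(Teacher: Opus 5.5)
Your argument is correct, but it takes a different and more economical route than the paper's proof.

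\textbf{The paper's route.} It splits the error into two terms:
\[
(I)=\bigl\|\widehat T_j(p_0\mid\boldsymbol z_0,\widehat{\boldsymbol z}_1)-T_j(p_0\mid\boldsymbol z_0,\widehat{\boldsymbol z}_1)\bigr\|_\infty,
\qquad
(II)=\bigl\|T_j(p_0\mid\boldsymbol z_0,\widehat{\boldsymbol z}_1)-T_j(p_0\mid\boldsymbol z_0,\boldsymbol z_1)\bigr\|_\infty,
\]
with $p_0:=p_{0,j}(\cdot\mid\boldsymbol z_0)$ as in your notation. Term $(I)$ is an estimation error at a random covariate value. The paper handles it only by appeal to a ``random index'' argument, which requires stochastic equicontinuity of $\widehat T_j$ in $\boldsymbol z_1$; this is not proved but deferred to \citet{machado2025categorical}. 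Term $(II)$ uses continuity of the population operator in the covariate.

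\textbf{Your route.} You push everything through the marginal pair instead. Because \ref{C2} is uniform in $\boldsymbol z$, the random argument $\widehat{\boldsymbol z}_1$ costs nothing at the level of $\widehat p_{1,j}$. Continuity of $\Phi$ at the single limit pair, which is exactly what \ref{C1} provides, plus the continuous mapping theorem then finishes the proof. This shows that no equicontinuity of $\widehat T_j$ is needed, so your proof is self-contained where the paper's is not. The price is that you rely on $\widehat T_j$ being literally the plug-in $\Phi(\widehat p_{0,j},\widehat p_{1,j})$. That is, however, how the paper's own proof of Proposition~\ref{prop:cat:consistency:pointwise} treats it.

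\textbf{The shared assumption.} Both proofs need continuity of $\boldsymbol z\mapsto p_{1,j}(\cdot\mid\boldsymbol z)$ at $\boldsymbol z_1$. The paper invokes it inside term $(II)$ without listing it, and you are right to flag it. Your first justification, reading it as implicit in \ref{C1}, does not work. Continuity of the operator in its marginal arguments says nothing about how the marginals depend on $\boldsymbol z$. It should be added as an explicit assumption or derived from the parametric model, as your second option does. The same goes for the requirement that $\boldsymbol z_1$ lie in the interior of the support, so that \ref{C1} applies there.
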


\subsection{Consistency of sequentially transported mediators}

Fix a topological order $\pi$ of the mediator DAG $\mathcal G$ and consider a unit $i$ in group $A=0$.
Let $\boldsymbol x^\star_{1,i}$ be the sequentially transported mediator vector defined by
\[
\begin{cases}
x^\star_{1,i,\pi(1)}=\widehat{T}_{\pi(1)}\!\left(x_{0,i,\pi(1)}\right),
\\[2mm]
x^\star_{1,i,\pi(j)}=
\widehat{T}_{\pi(j)}\!\left(
x_{0,i,\pi(j)} \,\middle|\,
\boldsymbol{x}_{0,i,\mathrm{parents}_{\mathcal{G}}(X_{\pi(j)})},\,
\boldsymbol{x}^\star_{1,i,\mathrm{parents}_{\mathcal{G}}(X_{\pi(j)})}
\right),~ j=2,\dots,d,
\end{cases}
\]
and define the population-level ST counterfactual $\boldsymbol x^\dagger_{1,i}$ by the same recursion with
$\{T_j\}_{j=1}^d$ in place of $\{\widehat T_j\}_{j=1}^d$.
In order to derive convergence results, add a final assumption \ref{A8} asking for uniform outcome regression consistency on relevant supports.

\begin{enumerate}[label=(A\arabic*)]
\setcounter{enumi}{7}
\item\label{A8}
For $a\in\{0,1\}$,
\[
\sup_{\boldsymbol x\in S_a}\bigl|\widehat\mu_a(\boldsymbol x)-\mu_a(\boldsymbol x)\bigr|\xrightarrow[]{p}0,
\]
and $\mu_a$ is continuous on $S_a$ (or at least at the relevant evaluation points).
\end{enumerate}

Assumption~\ref{A8} is a standard plug-in requirement for outcome regression: it asks that $\widehat\mu_a$ consistently estimates $\mu_a$ at the specific covariate values where it is evaluated, namely at the observed point $\boldsymbol x_{0,i}$ and at the transported point $\boldsymbol x^\dagger_{1,i}$ (whenever $\boldsymbol x^\dagger_{1,i}$ lies in the relevant support, as requested with Assumption~\ref{ass:overlap-outcome}). Such pointwise (or local) regression consistency is classical in nonparametric regression theory and underpins many plug-in arguments \citep{Stone1977}.
Compared to the double/debiased machine learning (DML) literature, Assumption~\ref{A8} is somewhat stronger: DML typically targets
average causal parameters and relies on Neyman-orthogonal scores and cross-fitting, so that nuisance functions only need to be estimated
consistently in an $L_2(\mathbb P)$ sense (often with rate conditions), rather than pointwise at particular covariate values \citep{ChernozhukovEtAl2018DML}.
We adopt the stronger formulation here to obtain a direct and transparent consistency statement for the {\em unit-level} effects $(\widehat\delta_i,\widehat\zeta_i,\widehat\tau_i)$ defined through point evaluations of $\widehat\mu_a$.

\begin{thm}[Consistency of sequential mediator counterfactuals (ST)]
\label{thm:seq:consistency}
Assume \ref{A1}--\ref{A4}. In addition, assume the following node-specific regularity conditions:
\begin{itemize}
\item for each numerical mediator $X_j$, assume~\ref{A6} and that
$F_{0,j}(x_{0,i,j}\mid \boldsymbol x_{0,i,\mathrm{parents}_{\mathcal{G}}(X_j)})\in[\varepsilon_j,1-\varepsilon_j]$ for some $\varepsilon_j\in(0,1/2)$;
\item for each categorical mediator $X_j$, assume~\ref{C1}--\ref{C2} 
\end{itemize}
Then, for the fixed unit $i$, 
\[
\boldsymbol x^\star_{1,i}\xrightarrow[]{p}\boldsymbol x^\dagger_{1,i},
\]
i.e., each coordinate $x^\star_{1,i,\pi(j)}\xrightarrow[]{p}x^\dagger_{1,i,\pi(j)}$ and hence the whole vector converges.
\end{thm}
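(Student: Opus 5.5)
The plan is a strong induction along the topological order $\pi$. For $j=1,\dots,d$ we show that $x^\star_{1,i,\pi(k)}\xrightarrow{p}x^\dagger_{1,i,\pi(k)}$ for all $k\le j$. This also establishes the hypothesis \ref{A7} needed at later steps. Convergence in probability of the whole vector then follows coordinatewise, since $\mathbb R^d$ carries the product topology and a finite union bound gives $\mathbb P(\|\boldsymbol x^\star_{1,i}-\boldsymbol x^\dagger_{1,i}\|>\epsilon)\le\sum_j\mathbb P(|x^\star_{1,i,\pi(j)}-x^\dagger_{1,i,\pi(j)}|>\epsilon/\sqrt d)\to0$.

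\emph{Base case.} By \ref{A4}, $X_{\pi(1)}$ has no mediator parents. If it is numerical, Lemma~\ref{lem:base-root} gives uniform consistency of $\widehat T_{\pi(1)}$ on compacts inside the interior of the support. We evaluate it at the fixed point $x_{0,i,\pi(1)}$. The assumption $F_{0,\pi(1)}(x_{0,i,\pi(1)})\in[\varepsilon,1-\varepsilon]$ lets us pick a compact neighbourhood of this point in the interior, where the quantile is continuous by \ref{A1}--\ref{A2}. If it is categorical, we invoke the categorical remark after Lemma~\ref{lem:base-root}, i.e.\ Proposition~\ref{prop:cat:consistency:pointwise} with empty parent set, together with \ref{C1}--\ref{C2}.

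\emph{Inductive step.} Suppose the claim holds for all indices before $j$, and let $\boldsymbol Z=\mathrm{parents}_{\mathcal G}(X_{\pi(j)})$. Every parent precedes $\pi(j)$ in the topological order, so the induction hypothesis gives $\widehat{\boldsymbol z}_1:=\boldsymbol x^\star_{1,i,\boldsymbol Z}\xrightarrow{p}\boldsymbol x^\dagger_{1,i,\boldsymbol Z}=:\boldsymbol z_1$; a finite vector of coordinatewise-convergent components converges jointly. This is exactly \ref{A7} at the fixed evaluation point $\boldsymbol z_0=\boldsymbol x_{0,i,\boldsymbol Z}$. The source point $\boldsymbol z_0$ and $x_{0,i,\pi(j)}$ are deterministic, so only the target-parent argument is random.
\begin{itemize}[label={}]
\item \emph{Numerical $X_{\pi(j)}$.} We apply Proposition~\ref{prop:5:2}, which uses \ref{A6} and the quantile-level condition $F_{0,\pi(j)}(x_{0,i,\pi(j)}\mid\boldsymbol z_0)\in[\varepsilon_j,1-\varepsilon_j]$. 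Concretely, the argument splits as
\[
|\widehat Q_1(\widehat F_0(x\mid\boldsymbol z_0)\mid\widehat{\boldsymbol z}_1)-Q_1(F_0(x\mid\boldsymbol z_0)\mid\boldsymbol z_1)|\le \sup|\widehat Q_1-Q_1| + |Q_1(\widehat u\mid\widehat{\boldsymbol z}_1)-Q_1(\widehat u\mid \boldsymbol z_1)| + |Q_1(\widehat u\mid\boldsymbol z_1)-Q_1(u\mid\boldsymbol z_1)|,
\]
with $\widehat u=\widehat F_0(x\mid\boldsymbol z_0)$ and $u=F_0(x\mid\boldsymbol z_0)$. The first term vanishes by \ref{A3}. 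The second vanishes by \ref{A6}, once $\widehat u\in[\varepsilon_j/2,1-\varepsilon_j/2]$ with probability tending to one and $\widehat{\boldsymbol z}_1$ lies in a compact neighbourhood of $\boldsymbol z_1$. The third vanishes by continuity of $Q_1(\cdot\mid\boldsymbol z_1)$, which follows from \ref{A1}--\ref{A2}, together with $\widehat u\xrightarrow{p}u$.
\item \emph{Categorical $X_{\pi(j)}$.} We apply Proposition~\ref{prop:cat:consistency:plugin}, which rests on \ref{C1}--\ref{C2} and \ref{A7}.
\end{itemize}
Either way, $x^\star_{1,i,\pi(j)}\xrightarrow{p}x^\dagger_{1,i,\pi(j)}$, which closes the induction.

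\emph{Main obstacle.} The delicate point is the interior/compactness requirement. Proposition~\ref{prop:5:2} and \ref{A6} need $\boldsymbol z_1=\boldsymbol x^\dagger_{1,i,\boldsymbol Z}$ to lie in the interior of the $A=1$ parent support, and $\widehat{\boldsymbol z}_1$ to stay in a fixed compact $\mathcal K$ around it with probability tending to one. I would handle this by noting that each numerical population step maps an interior level $u\in[\varepsilon,1-\varepsilon]$ to an interior quantile, so $\boldsymbol z_1$ is interior. Convergence in probability then places $\widehat{\boldsymbol z}_1$ in a small closed ball around $\boldsymbol z_1$ contained in the interior, on an event of probability tending to one. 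On the complement we simply bound the probability of a large deviation by the probability of that event's failure. For categorical parents the value space is finite, so convergence in probability means $\widehat{\boldsymbol z}_1=\boldsymbol z_1$ with probability tending to one, and the issue disappears on that event.
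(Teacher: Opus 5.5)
Your proposal is correct and follows the paper's proof. Both argue by induction along the topological order $\pi$: Lemma~\ref{lem:base-root} (or its simplex analogue) handles the root, and Propositions~\ref{prop:5:2} and~\ref{prop:cat:consistency:plugin} handle the inductive step once the parents' convergence supplies \ref{A7}. The differences are refinements rather than a new route. Your three-term split applies \ref{A6} at the random $\widehat{\boldsymbol z}_1$ and uses $u$-continuity at the fixed $\boldsymbol z_1$, whereas the paper relies on continuity of $u\mapsto Q_{1,j}(u\mid\widehat{\boldsymbol z}_1)$ at a random index; the price is that you need \ref{A6} on the slightly wider range $[\varepsilon_j/2,1-\varepsilon_j/2]$. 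You also make explicit the interior and compactness bookkeeping that the paper leaves implicit.
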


This theorem establishes that, for any fixed individual, the transported mediator vector converges to its ideal {\em mutatis mutandis} transformation. The result is a consequence of the stability of the topological ordering and the continuity of conditional transports. The proof relies on uniform convergence of conditional transport maps and on stability of their composition along the DAG; full proofs are given in Appendix~\ref{app:proofs}. Notation $\boldsymbol{x}^{\dagger}_1$ is inspired from \cite{torous2024otdid} (where it denotes the counterfactual outcome of treated units had they not received treatment). 

Our results are stated for categorical mediators encoded as compositional data (i.e., probability vectors on the simplex), but the same consistency conclusions extend to category-valued transports as well, provided one uses a deterministic tie-breaking rule (or a mild regularization to enforce uniqueness) when mapping transported probability vectors to discrete labels.

\begin{lme}[Consistency of induced categorical assignments]
\label{cor:cat:assignment}
Assume the conditions of Proposition~\ref{prop:cat:consistency:plugin}.
Let $g:\mathcal S_K\to\{1,\dots,K\}$ be a deterministic category prediction rule (e.g., $g(\boldsymbol p)=\arg\displaystyle\max_k p_k$ with a fixed tie-breaking convention, or the deterministic cell assignment induced by the simplex transport construction).
Define the (estimated) transported probability vector, as well as the induced categorical assignments, and its population counterpart
\[
\begin{cases}
    \widehat{\boldsymbol p}^{\,\star}_{1,i,j}
:=
\widehat{T}_j \bigl(p_{0,j}(\cdot \mid \boldsymbol{z}_0) \mid \boldsymbol z_0,\boldsymbol z_1 = \widehat{T}_{\boldsymbol{Z}_j}(\boldsymbol{z}_0) \bigr),~&\widehat x^\star_{1,i,j}:=g\!\left(\widehat{\boldsymbol p}^{\,\star}_{1,i,j}\right),\\
\boldsymbol p^{\dagger}_{1,i,j}
:=
{T}_j \bigl(p_{0,j}(\cdot \mid \boldsymbol{z}_0) \mid \boldsymbol z_0,\boldsymbol z_1 = {T}_{\boldsymbol{Z}_j}(\boldsymbol{z}_0) \bigr),~&x^\dagger_{1,i,j}:=g\!\left(\boldsymbol p^{\dagger}_{1,i,j}\right).
\end{cases}
\]
Assume moreover that $g$ is locally constant at $\boldsymbol p^{\dagger}_{1,i,j}$, i.e., there exists $\gamma>0$ such that
\[
\|\boldsymbol p-\boldsymbol p^{\dagger}_{1,i,j}\|_\infty<\gamma
\quad\Longrightarrow\quad
g(\boldsymbol p)=g(\boldsymbol p^{\dagger}_{1,i,j}),
\]
then
\[
x^\star_{1,i,j}\xrightarrow[]{p}x^\dagger_{1,i,j}.
\]
\end{lme}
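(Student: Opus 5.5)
The argument is a continuous-mapping step for a locally constant rule. It uses Proposition~\ref{prop:cat:consistency:plugin} as the only probabilistic input. Under the conditions of that proposition (Assumptions~\ref{A1}--\ref{A4}, \ref{C1}--\ref{C2} and \ref{A7}), we already know that
\[
\widehat{\boldsymbol p}^{\,\star}_{1,i,j}\xrightarrow[]{p}\boldsymbol p^{\dagger}_{1,i,j}
\]
in $\mathcal S_K$. Since $\mathcal S_K$ is finite-dimensional, convergence in probability holds in any norm, in particular $\|\cdot\|_\infty$. So for the $\gamma>0$ supplied by the local-constancy hypothesis on $g$, we get
\[
\mathbb P\bigl(\|\widehat{\boldsymbol p}^{\,\star}_{1,i,j}-\boldsymbol p^{\dagger}_{1,i,j}\|_\infty\ge\gamma\bigr)\to0 .
\]

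The key step is an event inclusion. On the complementary event $\{\|\widehat{\boldsymbol p}^{\,\star}_{1,i,j}-\boldsymbol p^{\dagger}_{1,i,j}\|_\infty<\gamma\}$, local constancy gives $g(\widehat{\boldsymbol p}^{\,\star}_{1,i,j})=g(\boldsymbol p^{\dagger}_{1,i,j})$, that is, $\widehat x^\star_{1,i,j}=x^\dagger_{1,i,j}$. Hence
\[
\mathbb P\bigl(\widehat x^\star_{1,i,j}\neq x^\dagger_{1,i,j}\bigr)\le \mathbb P\bigl(\|\widehat{\boldsymbol p}^{\,\star}_{1,i,j}-\boldsymbol p^{\dagger}_{1,i,j}\|_\infty\ge\gamma\bigr)\longrightarrow0 .
\]
The labels take values in the finite set $\{1,\dots,K\}$, so for every $\epsilon\in(0,1)$ the event $\{|\widehat x^\star_{1,i,j}-x^\dagger_{1,i,j}|>\epsilon\}$ coincides with $\{\widehat x^\star_{1,i,j}\neq x^\dagger_{1,i,j}\}$. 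This yields $x^\star_{1,i,j}\xrightarrow[]{p}x^\dagger_{1,i,j}$; in fact it gives the stronger statement that the two labels coincide with probability tending to one. Measurability of $\widehat x^\star_{1,i,j}$ follows because $g$ is deterministic and constant on the cells it induces, which are assumed Borel.

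The only point needing care is the role of the local-constancy hypothesis. Measurability aside, the argument is elementary once Proposition~\ref{prop:cat:consistency:plugin} is invoked. For argmax with fixed tie-breaking, local constancy at $\boldsymbol p^{\dagger}_{1,i,j}$ holds exactly when the maximum is strict there. One can take $\gamma$ to be half the gap between the largest and second-largest coordinates, since a perturbation smaller than that in $\|\cdot\|_\infty$ cannot change the ordering of the top coordinate. For the cell assignment induced by the simplex transport, local constancy holds when $\boldsymbol p^{\dagger}_{1,i,j}$ lies in the interior of its (Laguerre-type) cell. At boundary points the conclusion can genuinely fail, and this is why the hypothesis is imposed rather than derived. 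I would state this as a remark. I would also note that, when the result is combined with Theorem~\ref{thm:seq:consistency}, the same argument applied coordinatewise gives consistency of the full mixed-type transported vector.
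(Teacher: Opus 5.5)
Your proposal is correct and follows essentially the same argument as the paper. Both invoke Proposition~\ref{prop:cat:consistency:plugin} to get $\|\widehat{\boldsymbol p}^{\,\star}_{1,i,j}-\boldsymbol p^{\dagger}_{1,i,j}\|_\infty\xrightarrow{p}0$, then use local constancy of $g$ to bound $\mathbb P(\widehat x^\star_{1,i,j}\neq x^\dagger_{1,i,j})$ by $\mathbb P(\|\widehat{\boldsymbol p}^{\,\star}_{1,i,j}-\boldsymbol p^{\dagger}_{1,i,j}\|_\infty\ge\gamma)$, and your half-gap choice of $\gamma$ for the argmax rule matches the paper's margin remark (your comments on measurability and on when local constancy holds are accurate additions).
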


Note that a sufficient condition for the argmax rule is a positive margin:
$$\max_k p^{\dagger}_{1,i,j,k}-\max_{\ell\neq k} p^{\dagger}_{1,i,j,\ell} > 0.$$

\begin{remark}[Dependence on the topological order]
\label{rem:topological-order}
Sequential transport is defined by recursion along a {\em topological order} of the mediator DAG.
Any topological order is admissible, and different topological orders can only differ by permuting nodes that are {\em incomparable} in the partial order induced by $\mathcal G$ (i.e., neither is an ancestor of the other).
For such incomparable nodes $j$ and $k$, the ST update for $X_j$ does not involve $X_k$ (and vice versa), so permuting their updates leaves the population recursion unchanged; hence the population-level transported vector $\boldsymbol x^\dagger_{1,i}$ is invariant to the choice of topological order, and the estimated vector $\boldsymbol x^\star_{1,i}$ is asymptotically invariant under the consistency conditions of
Theorem~\ref{thm:seq:consistency}.
By contrast, using an order that is {\em not} topological would require transporting some mediator before one of its parents, in which case the recursion is not well-defined (or would implicitly condition on non-transported parents), and the resulting mapping would generally differ and lose the intended causal/structural interpretation.
\end{remark}

\subsection{Implications for effect decomposition}

Here, we provide convergence properties of the ST-based unit-level decomposition defined in Section~\ref{sec:st-effects-decomposition}, namely the individual indirect effects $\widehat{\delta}_i$, direct effects $\widehat{\zeta}_i$, and total effects $\widehat{\tau}_i = \widehat{\delta}_i + \widehat{\zeta}_i$.

\begin{thm}[Consistency of ST-based effect decompositions]
\label{cor:consistency:effects}
Fix a unit $i$ with $A_i=0$. Define the estimated ST-based unit-level effects by
\[
\widehat\delta_i := \widehat\mu_0(\boldsymbol x^\star_{1,i})-\widehat\mu_0(\boldsymbol x_{0,i}),\qquad
\widehat\zeta_i := \widehat\mu_1(\boldsymbol x^\star_{1,i})-\widehat\mu_0(\boldsymbol x^\star_{1,i}),\qquad
\widehat\tau_i := \widehat\delta_i+\widehat\zeta_i.
\]
Let the corresponding population-level {\em mutatis mutandis} effects be
\[
\delta_i^\dagger := \mu_0(\boldsymbol x^\dagger_{1,i})-\mu_0(\boldsymbol x_{0,i}),\qquad
\zeta_i^\dagger := \mu_1(\boldsymbol x^\dagger_{1,i})-\mu_0(\boldsymbol x^\dagger_{1,i}),\qquad
\tau_i^\dagger := \delta_i^\dagger+\zeta_i^\dagger,
\]
where $\boldsymbol x^\dagger_{1,i}$ is the population-level ST transported mediator vector.

Assume: 
(i) Assumption~\ref{ass:overlap-outcome} for some $\eta\geq 0$; 
(ii) the conditions of Theorem~\ref{thm:seq:consistency} so that $\boldsymbol x^\star_{1,i}\xrightarrow{p}\boldsymbol x^\dagger_{1,i}$; and 
(iii) Assumption~\ref{A8}. 
Then, for any $\epsilon>0$,
\[
\mathbb P\!\left(
\bigl\|(\widehat\delta_i,\widehat\zeta_i,\widehat\tau_i)-(\delta_i^\dagger,\zeta_i^\dagger,\tau_i^\dagger)\bigr\|>\epsilon
\;\middle|\; A_i=0
\right)
\le \eta + o(1).
\]
In particular, when $\eta=0$ (i.e., $\mathrm{supp}(\boldsymbol X^\dagger(1))\subseteq \mathcal{X}_0$),
\[
(\widehat\delta_i,\widehat\zeta_i,\widehat\tau_i)\xrightarrow{p}(\delta_i^\dagger,\zeta_i^\dagger,\tau_i^\dagger).
\]
\end{thm}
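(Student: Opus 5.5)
The argument is a continuous-mapping/plug-in argument layered on Theorem~\ref{thm:seq:consistency}, with Assumption~\ref{ass:overlap-outcome} handling the event where the transported point leaves the region on which $\widehat\mu_0$ is controlled. Since $\widehat\tau_i=\widehat\delta_i+\widehat\zeta_i$ and $\tau_i^\dagger=\delta_i^\dagger+\zeta_i^\dagger$, it suffices to control $|\widehat\delta_i-\delta_i^\dagger|$ and $|\widehat\zeta_i-\zeta_i^\dagger|$; the triangle inequality then bounds the whole vector norm (up to a constant depending on the chosen norm).

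For each $a\in\{0,1\}$ and each evaluation point $\boldsymbol x\in\{\boldsymbol x_{0,i},\boldsymbol x^\star_{1,i}\}$, decompose
\[
\widehat\mu_a(\boldsymbol x^\star_{1,i})-\mu_a(\boldsymbol x^\dagger_{1,i})
=\bigl[\widehat\mu_a(\boldsymbol x^\star_{1,i})-\mu_a(\boldsymbol x^\star_{1,i})\bigr]
+\bigl[\mu_a(\boldsymbol x^\star_{1,i})-\mu_a(\boldsymbol x^\dagger_{1,i})\bigr].
\]
\begin{itemize}
\item The second bracket tends to $0$ in probability by Theorem~\ref{thm:seq:consistency} ($\boldsymbol x^\star_{1,i}\xrightarrow{p}\boldsymbol x^\dagger_{1,i}$) and continuity of $\mu_a$ at $\boldsymbol x^\dagger_{1,i}$, using the continuous mapping theorem.
\item The first bracket is bounded by $\sup_{\boldsymbol x\in S_a}|\widehat\mu_a-\mu_a|$ on the event $\{\boldsymbol x^\star_{1,i}\in S_a\}$. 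By Assumption~\ref{A8} this supremum is $o_p(1)$.
\item The term at the factual point, $\widehat\mu_0(\boldsymbol x_{0,i})-\mu_0(\boldsymbol x_{0,i})$, is directly $o_p(1)$ by Assumption~\ref{A8}, since $\boldsymbol x_{0,i}\in\mathcal X_0\subseteq S_0$.
\end{itemize}

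The main obstacle is making the event $\{\boldsymbol x^\star_{1,i}\in S_a\}$ precise, because $\boldsymbol x^\star_{1,i}$ is random and only converges to $\boldsymbol x^\dagger_{1,i}$. I would take $S_0$ to be (a neighbourhood within) $\mathcal X_0$ and $S_1$ the $A=1$ support. Let $E$ be the event that $\boldsymbol x^\dagger_{1,i}$ lies in the interior of $\mathcal X_0$. By Assumption~\ref{ass:overlap-outcome}, since unit $i$ is a draw with $A_i=0$, we have $\mathbb P(E^c\mid A_i=0)\le\eta$, treating the boundary as null under \ref{A1}. On $E$, there is a $\rho>0$ with $B(\boldsymbol x^\dagger_{1,i},\rho)\subseteq\mathcal X_0$. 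Then $\mathbb P(\|\boldsymbol x^\star_{1,i}-\boldsymbol x^\dagger_{1,i}\|\ge\rho)\to0$, so $\boldsymbol x^\star_{1,i}\in S_0$ with probability tending to one. For $\mu_1$, membership in the $A=1$ support follows from the construction, since the transport maps push onto the $A=1$ conditional supports, together with the same neighbourhood argument. If boundary issues arise, I would instead use the weaker form of \ref{A8}, i.e.\ local uniform convergence near the evaluation points.

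To conclude, condition on $A_i=0$ and split the probability that the error exceeds $\epsilon$ as
\[
\mathbb P(\cdots)\le \mathbb P(E^c\mid A_i=0)+\mathbb P\bigl(\{\text{error}>\epsilon\}\cap E\mid A_i=0\bigr)\le\eta+o(1).
\]
The second term is $o(1)$ by the bracket bounds above, combined through a union bound over the finitely many terms, each set to $\epsilon/c$. When $\eta=0$ the first term vanishes, which yields convergence in probability of $(\widehat\delta_i,\widehat\zeta_i,\widehat\tau_i)$ to $(\delta_i^\dagger,\zeta_i^\dagger,\tau_i^\dagger)$.
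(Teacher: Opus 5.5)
Your proposal follows the paper's proof: you split on the overlap event $\{\boldsymbol x^\dagger_{1,i}\in\mathcal X_0\}$, whose complement has conditional probability at most $\eta$ by Assumption~\ref{ass:overlap-outcome}. On that event you decompose each $\widehat\mu_a(\boldsymbol x^\star_{1,i})-\mu_a(\boldsymbol x^\dagger_{1,i})$ into a sup-norm estimation error controlled by \ref{A8} and a continuity term controlled by Theorem~\ref{thm:seq:consistency}, then combine the terms. Your interior-plus-ball step for $\boldsymbol x^\star_{1,i}\in\mathcal X_0$ with probability tending to one is more careful than the paper's bare appeal to $\mathcal X_0$ being a support set. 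It does need the mild extra condition that $\partial\mathcal X_0$ is null for the law of $\boldsymbol X^\dagger(1)$, which \ref{A1} does not literally imply (and the $A=1$ step needs the analogous condition for $\partial\mathcal X_1$); you were right to flag this.
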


Thus, under mild regularity conditions, ST yields statistically sound individual-level decompositions of causal effects, without relying on a parametric structural model. The guarantees derive directly from the stability and convergence of the underlying conditional transport maps.

The next sections provide empirical illustrations in Gaussian settings and evaluate the method on real-world data under a fairness-motivated causal graph.

\section{Gaussian illustration and simulation studies}
\label{sec:6:gaussian-sims}

We now investigate the behaviour of ST in controlled settings.
We begin with Gaussian examples where the mediation decomposition is analytically tractable, allowing direct comparison between ST-based counterfactuals and classical linear SEMs. We then present Monte Carlo experiments examining the statistical performance of ST under varying sample sizes, mediator dimensions, and DAG structures.

\subsection{Gaussian mediation with a single mediator}
\label{sec:single:gaussian}

Consider the simple mediation model
\[
A \to X \to Y,
\qquad
A \to Y,
\]
with
\[
X = \alpha A + \varepsilon_X,
\qquad
Y = \beta X + \gamma A + \varepsilon_Y,
\]
where $(\varepsilon_X,\varepsilon_Y)$ are jointly Gaussian with zero mean and covariance matrix independent of~$A$. Under this model, the natural direct and indirect effects coincide with the algebraic SEM decomposition:
\[
\delta=\text{NIE} = \alpha \beta,
\qquad
\zeta=\text{NDE} = \gamma.
\]

Consider a unit $i$ with $a_i = 0$, and denote the realization of $X_i$ by $x_i = x_{0,i}$. Applying ST to this simple mediation model, we can compute the ST-based counterfactual value of unit $i$ under $A = 1$, denoted $x_i^\star = x^\star_{1,i}$, as
\[
x_{1,i}^\star = T(x_{0,i}) = Q_{1}\!\circ F_{0}(x_{0,i}),
\]
which corresponds to the classical univariate OT map.

$T$ coincides with the affine transformation that maps the $A=0$ distribution of $X$ to that of $A=1$, recovering the SEM counterfactual calculation $x_i(1) = \alpha + x_i(0)$. 

Given a regression estimator $\widehat{\mu}_a(x)$ consistent for $\mathbb{E}[Y \mid A=a,X=x]$, the ST-based decomposition
\[
\delta_i = \widehat{\mu}_0(x^\star_{1,i}) - \widehat{\mu}_0(x_{0,i}),
\qquad
\zeta_i = \widehat{\mu}_1(x^\star_{1,i}) - \widehat{\mu}_0(x^\star_{1,i}),
\]
satisfies
\[
\mathbb{E}[\delta_i] \approx \alpha \beta,
\qquad
\mathbb{E}[\zeta_i] \approx \gamma.
\]
Hence, ST recovers classical mediation effects in the linear-univariate-Gaussian case.
The agreement holds exactly in population and approximately in finite samples, with discrepancies attributable only to estimation error in the empirical transport map.

\subsection{Multivariate gaussian mediators with a DAG structure}

We now consider $d$ mediators $\boldsymbol{X} = (X_1,\dots,X_d)$ arranged in a DAG $\mathcal{G}$. Let
\[
\boldsymbol{X} = B \boldsymbol{X} + c A + \varepsilon,
\qquad
Y = \theta^\top \boldsymbol{X} + \gamma A + \eta,
\]
where $B$ is strictly upper-triangular in accordance with the DAG ordering, $c\in\mathbb{R}^d$, and all disturbances are Gaussian. In this setting, the mediator distribution under each treatment is multivariate Gaussian with distinct means but a common covariance matrix determined by $(B,\mathrm{Var}(\varepsilon))$.

Because ST applies univariate or conditional OT maps along the DAG, each transported coordinate takes the form
\[
x^\star_{1,j}
    = \mu_{1,j}(\boldsymbol{z}) + 
      \frac{\sigma_{1,j}(\boldsymbol{z})}{\sigma_{0,j}(\boldsymbol{z})}
      \bigl( x_{0,j} - \mu_{0,j}(\boldsymbol{z}) \bigr),
\]
where $\boldsymbol{z}=\mathrm{parents}_{\mathcal{G}}(x_j)$ and $(\mu_{a,j}(\boldsymbol{z}),\sigma_{a,j}(\boldsymbol{z}))$ are the Gaussian conditional mean and standard deviation. Thus ST reproduces the affine transformation induced by the structural equations, yielding perfect agreement (up to estimation error) with the SEM-based mediation decomposition.

Numerical illustrations for the two-dimensional case ($d=2$) are presented in Appendix~\ref{app:toy}.

\subsection{Monte Carlo simulations with three mediators}
\label{sec:sims}

We assess the finite-sample performance of ST in a mixed-type mediation setting with nonlinear outcome models and causally dependent mediators. The data-generating process is designed to reflect the structure studied in Section~\ref{sec:general:DAG}, with two numerical mediators, one categorical mediator, and a binary outcome.

\paragraph*{Data generating process.}
Each Monte Carlo replication, performed using the \texttt{R} software (version 4.5.2) consists of simulating i.i.d.\ observations of
$$
(A,\boldsymbol{X},Y)=
(A, X_1, X_2, X_3, Y),
$$
where $A\in\{0,1\}$ is a binary treatment, $(X_1,X_2)\in\mathbb{R}^2$ are continuous mediators, $X_3\in\{\mathrm{A},\mathrm{B},\mathrm{C}\}$ is a categorical mediator, and $Y\in\{0,1\}$ is a binary outcome.

Conditional on $A=a$, the continuous mediators $(X_1,X_2)$ follow a bivariate normal distribution with treatment-specific mean vector $\mu_a$ and covariance matrix $\Sigma_a$, inducing both location and dependence shifts across treatment groups. The categorical mediator $X_3$ is generated conditionally on $(X_1,X_2,A)$ via a multinomial logistic model, so that $X_3$ is causally downstream of both numerical mediators and of the treatment. The outcome $Y$ is then drawn from a Bernoulli distribution with success probability given by a logistic regression whose coefficients depend on the treatment level and include nonlinear and non-additive effects of the mediators. Full details of the DGP are reported in Appendix~\ref{app:toy-example-three-mediators}.

The resulting causal structure corresponds to the DAG shown in Figure~\ref{fig:dag-mediation-example}, with
$$
A \;\to\; (X_1,X_2,X_3) \;\to\; Y,
\qquad
(X_1,X_2) \;\to\; X_3,
$$
and direct effects of $A$ on $Y$. In each replication, we generate $n_0=400$ observations with $A=0$ and $n_1=200$ observations with $A=1$.

\begin{figure}[htb!]
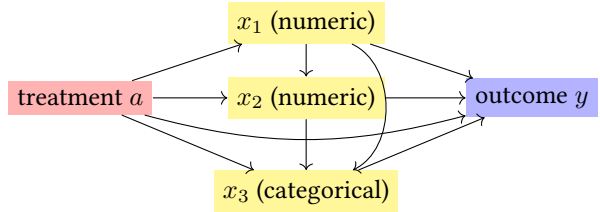

    \centering
    \tikzset{node distance=1.5cm and 2.5cm}
    \tikz{
        \node[fill=red!30]   (a)  at (0,1.5) {treatment $a$};
        \node[fill=yellow!50](x1) at (3,2.5)   {$x_1$ (numeric)};
        \node[fill=yellow!50](x2) at (3,1.5) {$x_2$ (numeric)};
        \node[fill=yellow!50](x3) at (3,.25)   {$x_3$ (categorical)};
        \node[fill=blue!30]  (y)  at (6,1.5) {outcome $y$};

        \path[->, black] (a) edge (x1);
        \path[->, black] (a) edge (x2);
        \path[->, black] (a) edge (x3);
        \path[->, black, bend right=15] (a) edge (y);

        \path[->, black] (x1) edge (x2);
        \path[->, black, bend left=65] (x1) edge (x3);
        \path[->, black] (x2) edge (x3);


        \path[->, black] (x1) edge (y);
        \path[->, black] (x2) edge (y);
        \path[->, black] (x3) edge (y);
    }
    \caption{DAG for the simulated data.}
    \label{fig:dag-mediation-example}
\end{figure}

\paragraph*{Counterfactual construction and benchmarks.}
For each simulated dataset, we construct mediator counterfactuals for individuals
in the untreated group ($A=0$) using three transport-based approaches that yield individual-level counterfactual mediators: multivariate optimal transport without conditioning, its entropic-regularized variant, and sequential transport. Sequential transport is applied along the topological order
$X_1 \rightarrow X_2 \rightarrow X_3$ induced by the mediator DAG.
For ST, numerical mediators are transported using (conditional) quantile mapping,
while the categorical mediator is handled via simplex-based deterministic transport
as described in Section~\ref{sec:general:DAG}.

As a reference, we additionally report classical causal mediation analysis based on linear structural equation models, implemented via the
\texttt{mediation} package \citep{JSSv059i05},  as well as the quantile-preservation approach using quantile regression proposed by \citet{plevcko2020fair} in the \texttt{fairadapt} package \citep{plevcko2021fairadapt}. For the latter, the average causal mediation effect is estimated directly and does not rely on constructing individual-level mediator counterfactuals.

\paragraph*{Outcome regression and effect estimation.}
For each replication, we fit two outcome models by random forests, separately by treatment group (assuming Assumption~\ref{ass:overlap-outcome}):
$$
\widehat\mu_0(\boldsymbol{x}) \approx \mathbb{E}[Y\mid A=0,\boldsymbol{X}=\boldsymbol{x}],
\qquad
\widehat\mu_1(\boldsymbol{x}) \approx \mathbb{E}[Y\mid A=1,\boldsymbol{X}=\boldsymbol{x}].
$$
Given a set of counterfactual mediators for untreated units, denoted $\boldsymbol{x}_{1,i}^\star$ for individuals with $A_i=0$, we compute the ST-based decomposition on the untreated as defined in Section~\ref{sec:st-effects-decomposition}. In particular, we evaluate the indirect effect $\delta_{i}$, the direct effect $\zeta_{i}$,
and the total effect $\tau_i=\delta_{i}+\zeta_{i}$ using the fitted outcome models $\widehat\mu_0$ and $\widehat\mu_1$. This plug-in evaluation appeals to Assumption~\ref{ass:overlap-outcome}, ensuring that $\hat\mu_0(\cdot)$ is supported on the transported region.
We report the corresponding averages
$$
\bar\delta=\frac{1}{n_0}\sum_{i:A_i=0}\delta_{i},\quad
\bar\zeta=\frac{1}{n_0}\sum_{i:A_i=0}\zeta_{i},\quad
\bar\tau=\bar\delta+\bar\zeta.
$$
When counterfactual mediators for treated units are also available, we analogously compute $\bar\delta$ and $\bar\zeta$ by swapping the roles of groups~0 and~1. As a robustness diagnostic, we additionally report versions of $\bar\delta$ obtained by replacing $\widehat\mu_a(\boldsymbol{x}_{a,i})$ with the observed outcome $Y_i$, which allows us to assess sensitivity to outcome-model estimation. 

\paragraph*{Monte Carlo protocol and outputs.}
The entire procedure (data generation, counterfactual construction, outcome-model fitting, and effect estimation) is repeated over $B=200$ independent Monte Carlo replications. For each method, we report the empirical distribution of $(\bar{\delta},\bar{\zeta},\bar{\tau})$ across replications, as well as the dispersion of individual-level effects for approaches that yield deterministic unit-level decompositions.

The objective of these simulations is to evaluate the stability, interpretability, and finite-sample behavior of sequential transport in a setting with mixed mediator types, nonlinear outcome models, and a nontrivial mediator DAG. Results are summarized in Figure~\ref{fig:xp-simul:delta:zeta:tau}.

\paragraph*{Evaluation criteria.}
The Monte Carlo study focuses on the empirical behavior of the estimated effect
decompositions across replications. For each method, we examine:
\begin{itemize}
    \item the Monte Carlo distribution of the averaged effects
    $\bar\delta$, $\bar\zeta$, and $\bar\tau$;
    \item the dispersion and skewness of these quantities across replications;
    \item the stability of individual-level effects $\delta_{0,i}$ and
    $\zeta_{1,i}$ for methods that produce deterministic unit-level counterfactuals;
    \item qualitative agreement with linear SEM-based mediation in settings where
    the data-generating process is approximately linear-Gaussian.
\end{itemize}

\paragraph*{Results.}
Figure~\ref{fig:xp-simul:delta:zeta:tau} displays the empirical distributions of
$\bar\delta$, $\bar\zeta$, and $\bar\tau$ across the $200$ Monte Carlo replications.
In this mixed-Gaussian simulation setting, all transport-based approaches yield broadly similar
distributions. Sequential transport (ST) shows a modest reduction in dispersion compared to
multivariate optimal transport (OT) and entropic optimal transport (SKH), an effect that is more clearly seen in numerical
summaries (e.g., SD/IQR) than from the violin plots alone.

When compared to classical causal mediation analysis based on linear structural equation models (CM), ST yields average effects of similar magnitude in this approximately linear-Gaussian setting, despite relying on a fully nonparametric, distributional construction of mediator counterfactuals. Differences between ST and CM become more pronounced for individual-level effects, which are only available for transport-based methods.

In the presence of mixed mediator types, including a categorical mediator handled via simplex transport, ST remains numerically stable and produces coherent counterfactual mediator values. While variability increases relative to purely continuous settings, this increase is comparable to that observed for other transport-based methods and does not lead to pathological behavior. Overall, these results suggest that sequential transport provides a stable and interpretable finite-sample decomposition of effects in mediation settings with heterogeneous mediators and nontrivial causal structure.

\begin{figure}[htb!]
    \centering
    \includegraphics[width=\linewidth]{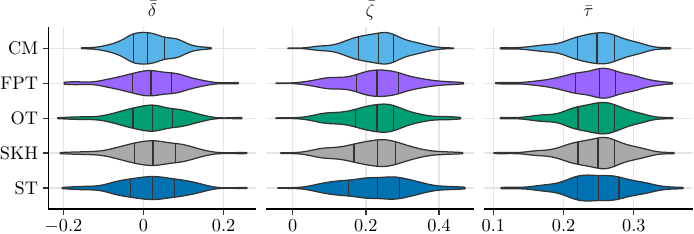}
    \caption{Estimation (200 replicated samples, $n=600$) of average indirect effect $\bar{\delta}$, average direct effect $\bar{\zeta}$ and average total causal effect $\bar{\tau}$.}
    \label{fig:xp-simul:delta:zeta:tau}
\end{figure}

\subsection{Discussion}

These experiments confirm that ST reproduces classical mediation decompositions in Gaussian settings while extending naturally beyond the linear regime and to mixed mediator types. The results provide empirical support for the theoretical guarantees of Section~\ref{sec:theory}, demonstrating that ST-based counterfactuals converge to their target {\em mutatis mutandis} transformations and yield stable effect decompositions in low dimension but with different mediator types. 

The next section applies ST to a real-world fairness problem, where mediators include both ordinal and categorical variables structured by a causal DAG.

\section{Application to the COMPAS recidivism dataset}
\label{sec:7:realdata}

We illustrate the sequential transport framework on a well-known benchmark in the algorithmic fairness literature, the COMPAS dataset \citep{Larson2016}. This dataset contains
individual-level information used to predict whether a criminal defendant is likely to reoffend within two years (recidivism within two years after parole start). Our goal is not to revisit the debate surrounding COMPAS risk scores, but to demonstrate how ST yields distributionally-grounded counterfactual mediators and produces individual-level decompositions of group disparities along the causal pathways specified by an interpretable DAG.

\subsection{Data and causal graph}

We consider the standard pretrial subset originally analyzed in \cite{Kusner17}, provided with the \texttt{R} package \texttt{fairadapt} \citep{plevcko2021fairadapt}, following the preprocessing conventions used in recent transport-based fairness studies \citep{charpentier2023optimal,
machado2025categorical,machado2025sequential}. The outcome $Y$ is the two-year general recidivism indicator, which takes the value $1$ if a criminal defendant re-offended within two years after parole start, and $0$ otherwise. The sensitive attribute is
\[
A = \texttt{race} \in \{0,1\},
\]
where $a=1$ denotes white defendants and $a=0$ non-white defendants. In this application, we treat $A$ as a root node by convention (we do not model its upstream causes), but we do not interpret the resulting decomposition as an identified causal effect of race; rather, it is a distributional pathway decomposition aligned with the mediator DAG.

The mediator set includes both numerical and categorical variables:
\begin{align*}
  \boldsymbol{X} = 
\{&\texttt{age},\,
  \texttt{juv\_fel\_count},\,
  \texttt{juv\_misd\_count},\,
  \texttt{juv\_other\_count},\,\\
&  \texttt{priors\_count},\,
  \texttt{charge\_degree},\,
  \texttt{charge\_degree\_bins} \}.  
\end{align*}
Following \cite{plevcko2021fairadapt},we use a simplified mediator set $$  \boldsymbol{X}=\{\texttt{age},\ \texttt{priors\_count},\ \texttt{charge\_degree}\}$$ and the corresponding mediator DAG shown in Figure~\ref{fig:dag-compas-example}, in which race influences \texttt{age}, \texttt{priors\_count}, and \texttt{charge\_degree}, while the remaining mediators follow the dependency structure encoded by a sparse DAG consistent with judicial and criminal-history processes. This graph offers a reasonable compromise between causal interpretability and empirical realism.


\begin{figure}[t!]
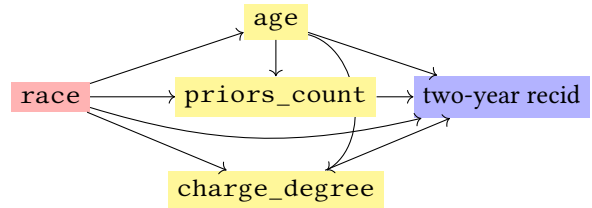

    \centering
    \tikzset{node distance=1.5cm and 2.5cm}
    \tikz{
        \node[fill=red!30]   (a)  at (0,1.5) {\texttt{race}};
        \node[fill=yellow!50](x1) at (3,2.5)   {\texttt{age}};
        \node[fill=yellow!50](x2) at (3,1.5) {\texttt{priors\_count}};
        \node[fill=yellow!50](x3) at (3,.25)   {\texttt{charge\_degree}};
        \node[fill=blue!30]  (y)  at (6,1.5) {two-year recid};

        \path[->, black] (a) edge (x1);
        \path[->, black] (a) edge (x2);
        \path[->, black] (a) edge (x3);
        \path[->, black, bend right=15] (a) edge (y);

        \path[->, black] (x1) edge (x2);
        \path[->, black, bend left=68] (x1) edge (x3);

        \path[->, black] (x1) edge (y);
        \path[->, black] (x2) edge (y);
        \path[->, black] (x3) edge (y);
        \node[fill=yellow!50](x2) at (3,1.5) {\texttt{priors\_count}};
    }
    \caption{Simplified DAG for the \texttt{COMPAS} dataset.}
    \label{fig:dag-compas-example}
\end{figure}

\subsection{Sequential transport in practice}

Using Algorithm~\ref{alg:0}, we estimate conditional transport maps for numerical mediators via kernel-based optimal transport, and apply simplex-based transport for categorical mediators \citep{machado2025categorical}. The maps are applied in a topological order of the DAG, yielding for each individual $i$ in the non-white group a transported counterfactual vector $\boldsymbol{x}_{1,i}^\star$ representing how their mediator profile would need to change to be aligned with the distribution of mediators among white defendants, {\em {\em mutatis mutandis}}.

For the outcome regression, we fit two random forest models $\widehat{\mu}_0(\boldsymbol{x})$ and $\widehat{\mu}_1(\boldsymbol{x})$, each trained on observations from the corresponding treatment group. These flexible regressors capture potential nonlinear effects of mediators on recidivism without imposing a parametric structure.

\subsection{Decomposition of group disparities}


For individuals with $A_i=0$, we compute the ST-based unit-level decomposition defined in Section~\ref{sec:st-effects-decomposition}, using the fitted outcome models $\widehat{\mu}_0$ and $\widehat{\mu}_1$ and the transported mediator values $\boldsymbol{x}_{1,i}^\star$.
This yields individual indirect effects $\delta_i$, direct effects $\zeta_i$, and total effects $\tau_i=\delta_i+\zeta_i$.

Under Assumption~3.1, the evaluation of $\widehat{\mu}_0$ at
$\boldsymbol{x}_{1,i}^\star$ remains within regions supported by $A=0$ data, which mitigates extrapolation.

Aggregating over the untreated group produces
$\bar{\delta}$ (ST-based indirect disparity),
$\bar{\zeta}$ (ST-based direct disparity),
and the total disparity $\widehat{\tau}=\bar{\delta}+\bar{\zeta}$.

Figure~\ref{fig:decomp:compas} shows the estimated individual effects, Table~\ref{tab:average-causal-effects-compacts} reports the aggregated effects. ST attributes approximately two-thirds of the disparity $\widehat{\tau}$ to differences in mediator distributions. This is consistent with previous findings \citep{Kusner17, charpentier2023optimal}, while providing a richer breakdown at the individual and mediator levels.

\begin{figure}[t!]
    \centering
    \includegraphics[width=\linewidth]{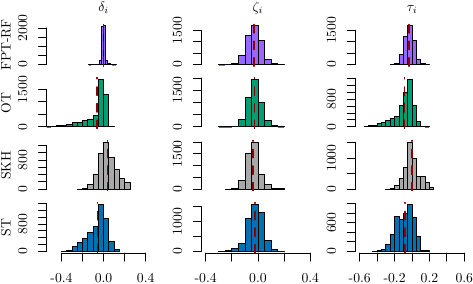}
    \caption{Individual {\em mutatis mutandis} effects (distributional decomposition) for the \texttt{COMPAS} dataset estimated with counterfactual-based methods using Fairadapt (FPT-RF), optimal transport (OT), penalized OT (SKH), and sequential transport (ST) according to the causal graph of Figure~\ref{fig:dag-compas-example}.}
    \label{fig:decomp:compas}
\end{figure}

\begin{table}[t!]
    \centering
    \begin{tabular}[t]{rrrrrrr}
\toprule
Metric & CM & FPT & NF & OT &  SKH & ST\\
\midrule
$\bar{\delta}$ & -0.09 & 0.00 & {-0.06} & -0.06 &  0.04 & -0.06\\
$\bar{\zeta}$  & 0.00  &-0.03 & 0.00 & -0.03 & -0.04 & -0.02\\
$\bar{\tau}$   & -0.09 &-0.03 & -0.06 & -0.09 &  0.00 & -0.08\\
\bottomrule
\end{tabular}
    \caption{Estimated average causal effects for \texttt{COMPAS}, based on causal mediation analysis with LSEM using \texttt{mediation} package (CM), with counterfactual-transport-based-methods: optimal transport (OT), penalized OT (SKH), sequential transport (ST) and fairadapt (FPT) according to the DAG of Figure~\ref{fig:dag-compas-example}, and using normalizing flows (NF).}
    \label{tab:average-causal-effects-compacts}
\end{table}

\subsection{Mediator-level attribution}

Since ST transports mediators sequentially, we may examine the marginal contribution of each mediator $x_j$ by evaluating the effect of transporting mediators only up to that node:
\[
\delta_i^{(j)} 
  = \widehat{\mu}_0\!\left(\boldsymbol{x}_{1,i}^{\star,(j)}\right)
    - \widehat{\mu}_0(\boldsymbol{x}_{0,i}),
\]
where $\boldsymbol{x}_{1,i}^{\star,(j)}$ denotes the partially transported vector obtained by applying the maps for $x_{\pi(1)},\dots,x_{\pi(j)}$ but leaving all later mediators at their factual values. The same overlap requirement applies to partially transported profiles; we again rely on Assumption~\ref{ass:overlap-outcome}.
This yields a deterministic and DAG-coherent decomposition of the indirect effect across mediators.

Empirically, the largest contributions arise from \texttt{priors\_count} and \texttt{charge\_degree}, both of which lie close to the root of the DAG and exhibit substantial distributional shifts across racial groups. Variable \texttt{age} contributes modestly, reflecting age differences between groups and its nonlinear effect on recidivism risk.

\subsection{Discussion}

As can be seen from the results, sequential transports yields a structurally coherent and distributionally grounded decomposition of group disparities. The method provides:
\begin{itemize}
    \item deterministic counterfactual mediator profiles consistent with the causal DAG;
    \item unit-level estimates of direct and indirect effects;
    \item a natural attribution of indirect effects across individual mediators;
    \item compatibility with complex mediator types (categorical, ordinal, numerical);
    \item robustness to nonlinear outcome models.
\end{itemize}

The approach remains subject to the usual identification challenges of observational studies. In the presence of unobserved confounding between race and the mediators, the estimated decompositions should be interpreted as
distributional rather than causal. Nevertheless, ST provides a principled construction of {\em mutatis mutandis} counterfactuals that remain in-distribution and respect the assumed mediator DAG. As such, sequential transport yields an interpretable decomposition of observed disparities along the mediator pathways encoded in the DAG.

\section{Discussion and extensions}
\label{sec:discussion}

This paper introduced a distributional framework for causal mediation analysis based on OT and its sequential extension along a mediator DAG. By replacing cross-world counterfactuals with {\em mutatis mutandis} transport-based mediator profiles, the method yields individual-level decompositions of treatment effects while avoiding the  need for a fully specified structural causal model with functional forms. Sequential transport produces counterfactual mediators that remain in-distribution and respect the conditional structure encoded in the DAG between the mediators, and accommodates both numerical and categorical mediators. Theoretical guarantees establish the consistency of the induced maps and the associated effect decompositions, and empirical results confirm similar performance in both Gaussian and non-Gaussian regimes, as well as in a real-world fairness application, compared to commonly used structural counterfactuals.

\subsection{Interpretation and limitations}

The framework is primarily {\em distributional}: it constructs mediator counterfactuals by interpolating between empirical conditional distributions rather than by assuming a structural mechanism. As such, ST inherits both the strengths and the limitations of distributional counterfactuals. When the treatment is exogenous, as in the mediation setting considered here, the {\em mutatis mutandis} interpretation aligns naturally with the graphical structure and yields interpretable decompositions. In observational settings with unobserved confounding between treatment and mediators, the counterfactuals remain well defined but do not necessarily correspond to identified causal effects. This is a feature shared with transport-based fairness methods \citep{charpentier2023optimal,machado2025sequential}.

A second limitation is the reliance on accurate estimation of conditional distributions.
Although kernel estimators and simplex-based transports perform well in moderate dimensions, high-dimensional conditioning sets may lead to slower convergence or greater instability. This is mitigated in practice by the sparsity of mediator DAGs, but remains a potential challenge for fully connected graphs or large parent sets.

Finally, while ST produces deterministic counterfactuals, the method is sensitive to the choice of DAG structure. Misspecification of mediator relationships may distort the conditional transports. In many applications, however, causal graphs are either known from domain knowledge or can be approximated by sparse structures capturing essential dependencies \citep{bulhmann2007, Toth_2022_neurips, cai2023on}.

\subsection{Extensions to longitudinal mediators}

Although our analysis focuses on static treatment and mediator variables, the framework extends naturally to time-varying settings. In longitudinal mediation, mediators, treatments, and confounders may evolve over time, and causal estimands such as path-specific or interventional effects become more complex. The semiparametric framework of \citet{diaz2023efficient} provides identification and estimation strategies for such settings. Combining their approach with transport-based methodology suggests that one may construct {\em dynamic {\em mutatis mutandis} counterfactuals} by transporting mediator trajectories sequentially across time.

A straightforward extension would posit a family of time-indexed maps
\[
T_t: (X_{0,t}, \boldsymbol{Z}_{0,t}) \mapsto (X_{1,t}, \boldsymbol{Z}_{1,t}),
\]
where $\boldsymbol{Z}_{t}$ denotes parents or past mediators, and apply these maps recursively along a longitudinal DAG. More ambitious developments involve continuous-time or stochastic interpolations between treatment regimes, using dynamic OT or Schr\"odinger bridge formulations \citep{benamou2000,leonard2014}. Such methods construct controlled stochastic processes whose trajectories interpolate between mediator distributions under different treatment histories, offering a principled analogue of dynamic structural counterfactuals \citep{robins1986,robins2000,murphy2003}. A rigorous integration of these ideas with longitudinal mediation estimands remains an open and promising direction.

\subsection{Beyond mediator DAGs}

ST naturally extends to richer causal structures. Possible extensions
include:
\begin{itemize}
    \item \textbf{Multiple treatments or multivalued interventions}, where maps are estimated for each pair of treatment levels and composed along treatment paths \citep{imai2010general}.
    \item \textbf{Interventional, stochastic, or policy-based effects}, where the target mediator distribution is not empirical but corresponds to a policy or intervention, enabling counterfactual predictions under hypothetical reforms \citep{didelez2006,vansteelandt2017}.
    \item \textbf{Path-specific effects}, where only a subset of mediator pathways is transported. This aligns ST with the framework of path-specific fairness constraints and mediated causal pathways explored in the fairness literature \citep{avin2005identifiability,shpitser2013counterfactual}.
    \item \textbf{Hybrid generative–transport models}, combining structural equations for some mediators with transport maps for others, useful when partial domain knowledge is available \citep{moselhy2012,parno2018}.
\end{itemize}

\subsection{Concluding remarks}

Sequential transport offers a flexible, interpretable, and distributionally robust approach to causal mediation analysis. By grounding counterfactual mediators in optimal transport and in the structure of a mediator DAG, the framework bridges classical mediation analysis, modern transport methods, and emerging applications such as algorithmic fairness. Extensions to longitudinal settings, dynamic interventions, and richer causal structures promise to deepen this connection, making transport-based counterfactuals a broadly applicable tool in causal inference.


\bibliographystyle{apalike}
\bibliography{biblio}
\appendix

\section{Additional definitions and mathematical background}
\label{app:math}

This section provides the mathematical foundations underlying the sequential transport (ST) framework: conditional optimal transport (OT), conditional quantile operators, simplex-based transport for categorical mediators, and formal definitions needed in the proofs.

\subsection{Conditional distributions and quantile operators}

Let $(X,\boldsymbol{Z},A)$ denote a mediator $X$, its parent set $\boldsymbol{Z}=\mathrm{parents}_{\mathcal{G}}(X)$, and the treatment variable $A\in\{0,1\}$. For $a\in\{0,1\}$, denote by
\[
F_{a}(x \mid \boldsymbol{z}) = \mathbb{P}(X \le x \mid \boldsymbol{Z}=\boldsymbol{z}, A=a)
\]
the conditional cumulative distribution function (CDF), assumed continuous and strictly increasing in $x$ on the support of $X$.

The conditional quantile operator is
\[
Q_{a}(u \mid \boldsymbol{z}) = \inf\{ x : F_{a}(x \mid \boldsymbol{z}) \ge u \},\qquad u\in(0,1).
\]
Under monotone transport, the unique optimal map pushing $X\mid A=0$ to $X\mid A=1$ given $\boldsymbol{Z}=\boldsymbol{z}$ is
\[
T(x\mid \boldsymbol{z}) = Q_{1}(F_{0}(x\mid \boldsymbol{z})\mid \boldsymbol{z}).
\]

\subsection{Kernel estimation of conditional CDFs and quantiles}

Let $\{(x_i,\boldsymbol{z}_i)\}_{i=1}^n$ denote samples under $A=a \in \{0,1\}$. The conditional CDF is estimated by
\[
\widehat{F}_{a}(x\mid \boldsymbol{z})
    = \frac{\sum_{i=1}^n K_h(\boldsymbol{z}-\boldsymbol{z}_i)\,
            \mathbb{I}\{x_i \le x\}}
           {\sum_{i=1}^n K_h(\boldsymbol{z}-\boldsymbol{z}_i)},
\]
with $K_h(\cdot)$ a smoothing kernel. Conditional quantiles under $A=a$ are estimated by solving
\[
\widehat{Q}_{a}(u\mid \boldsymbol{z})
    = \inf\{ x : \widehat{F}_{a}(x\mid \boldsymbol{z}) \ge u \}.
\]
Under standard assumptions (bounded kernels, $h\to 0$, $nh\to\infty$), the estimators $\widehat{F}_0$ and $\widehat{Q}_{1}$  converge uniformly.

\subsection{Continuous transport}
\label{app:OT:continuous}

Let $\nu_0$ and $\nu_1$ be probability measures on $\mathbb{R}^d$ with finite second moments. Given a measurable cost function $c:\mathbb{R}^d\times\mathbb{R}^d\to\mathbb{R}$, the Kantorovich OT problem seeks a coupling $\pi$ between $\nu_0$ and $\nu_1$ solving
\begin{equation}
\label{eq:kantorovich}
\inf_{\pi\in\Pi(\nu_0,\nu_1)}
\int_{\mathbb{R}^d\times\mathbb{R}^d}
c(x,y)\, \mathrm d\pi(x,y),
\end{equation}
where $\Pi(\nu_0,\nu_1)$ denotes the set of all joint probability measures on $\mathbb{R}^d\times\mathbb{R}^d$ with marginals $\nu_0$ and $\nu_1$. When an OT map $T$ exists, i.e.\ when there is a measurable function $T:\mathbb{R}^d\to\mathbb{R}^d$ such that $T_\#\nu_0=\nu_1$, the Monge formulation
\begin{equation}
\label{eq:monge}
\inf_{T:\,T_\#\nu_0=\nu_1}
\int_{\mathbb{R}^d}
c\bigl(x,T(x)\bigr)\, \mathrm d\nu_0(x)
\end{equation}
is equivalent to \eqref{eq:kantorovich}. In one dimension, and for convex costs such as the squared Euclidean cost, the optimal map is uniquely defined almost everywhere and admits a closed-form expression in terms of distribution and quantile functions.

In the univariate case, let $\nu_0$ and $\nu_1$ be Borel probability measures on $\mathbb{R}$ with continuous and strictly increasing CDFs $F_0$ and $F_1$.
Under squared Euclidean cost $c(x,y)=\|x-y\|^2$, the (unique a.e.) optimal Monge map $T:\mathbb{R}\to\mathbb{R}$ pushing $\nu_0$ to $\nu_1$ satisfies
\[
T_\# \nu_0 = \nu_1,
\qquad
T(x) = Q_1(F_0(x)),
\]
where $Q_1$ denotes the quantile function (generalized inverse) of $F_1$.
This monotone rearrangement is the population counterpart of the univariate transport used in our mediator counterfactual construction \cite{villani2003optimal, peyre2019computational}.

If $\nu_0=\mathcal N(m_0,\sigma_0^2)$ and $\nu_1=\mathcal N(m_1,\sigma_1^2)$ then
\[
T(x)=m_1+\frac{\sigma_1}{\sigma_0}(x-m_0),
\]
recovering the familiar affine transformation in the Gaussian case.

\subsection{Discrete transport and the Birkhoff polytope}
\label{app:OT:discrete}

Discrete OT provides a complementary view that is useful both computationally and
conceptually. Let $\{x_{0,i}\}_{i=1}^{n_0}$ and $\{x_{1,j}\}_{j=1}^{n_1}$ be empirical
samples with weights $w_0\in\mathbb R_+^{n_0}$ and $w_1\in\mathbb R_+^{n_1}$,
$\sum_i w_{0,i}=1$, $\sum_j w_{1,j}=1$. Define the cost matrix
$C\in\mathbb R^{n_0\times n_1}$ with entries $C_{ij}=c(x_{0,i},x_{1,j})$.
A transport plan $P$ is a nonnegative matrix with prescribed marginals:
\[
\mathcal U(w_0,w_1)
=
\Bigl\{P\in\mathbb R_+^{n_0\times n_1}:\;
P\mathbf 1_{n_1}=w_0,\;
P^\top \mathbf 1_{n_0}=w_1
\Bigr\}.
\]
The discrete Kantorovich problem is
\begin{equation}
\label{eq:discrete-ot}
P^\star \in \arg\min_{P\in\mathcal U(w_0,w_1)}\lbrace \langle P,C\rangle\rbrace,
\text{ where }
\bigl\langle P,C\bigr\rangle=\sum_{i,j} P_{ij}C_{ij}.
\end{equation}
When $w_0=\frac1{n_0}\mathbf 1$ and $w_1=\frac1{n_1}\mathbf 1$, $\mathcal U(w_0,w_1)$
reduces to a (rectangular) Birkhoff polytope. A plan can be turned into a proxy map
through the barycentric projection
\begin{equation}
\label{eq:barycentric-projection}
T_P(x_{0,i})=\frac{1}{w_{0,i}}\sum_{j=1}^{n_1} P_{ij}x_{1,j},
\end{equation}
which coincides with a hard matching when $P^\star$ is (approximately) sparse.

\subsection{Penalized and weighted optimal transport}\label{app:pen:weig}

For $\gamma>0$ the {\em entropic optimal transport problem} between two measures $\nu_0$ and $\nu_1$ is
\[
\pi^{(\gamma)}
  \;=\;
  \arg\min_{\pi\in\Pi(\nu_0,\nu_1)}
  \bigl\{\,\int{c}\mathrm{d}{\pi} 
          + \gamma\,\mathrm{KL}\!\bigl(\pi \;\big\|\; \nu_0\otimes\nu_1\bigr)
  \bigr\},
\]
where $\Pi(\nu_0,\nu_1)$ is the set of couplings with the prescribed marginals.

\paragraph*{Penalized Gaussian, univariate case}

Write
$
\nu_0=\mathcal{N}(m_0,\sigma_0^{2}),\;
\nu_1=\mathcal{N}(m_1,\sigma_1^{2})
$.
Then $\pi^{(\gamma)}$ is the bivariate normal
\[
\pi^{(\gamma)}
  =\mathcal{N}\!\Bigl(
      \begin{pmatrix}m_0\\m_1\end{pmatrix},
      \begin{pmatrix}
        \sigma_0^{2} & \rho^{(\gamma)}\sigma_0\sigma_1\\
        \rho^{(\gamma)}\sigma_0\sigma_1 & \sigma_1^{2}
      \end{pmatrix}
    \Bigr),
\]
with correlation

\[
\rho^{(\gamma)}
=\frac{1}{\sigma_0\sigma_1} \sqrt{(\frac{\gamma}{4})^2+\sigma_0^2\sigma_1^2}-\frac{1}{\sigma_0\sigma_1}\frac{\gamma}{4}.
\]
Hence, if $Y_0\sim\nu_0$ and $Y_1\sim\nu_1$,
$\mathrm{Cov}(Y_0,Y_1)=\rho^{(\gamma)}\sigma_0\sigma_1$.
When $\gamma\to0$, $\rho^{(\gamma)}\to 1$ and the plan collapses
onto the classical (comonotone) OT map $T(y)=m_1+\displaystyle\frac{\sigma_1}{\sigma_0}(y-m_0)$.

\paragraph*{Penalized Gaussian, multivariate case}

In the multivariate setting, the optimal coupling is a {Gaussian measure}
\[
\pi^{(\gamma)}
=\mathcal{N}\!\Bigl(
      \begin{pmatrix}m_0\\m_1\end{pmatrix},
      \begin{pmatrix}
        \Sigma_0 & \Gamma^{(\gamma)}\\[2pt]
        (\Gamma^{(\gamma)})^{\!\top} & \Sigma_1
      \end{pmatrix}
   \Bigr),
\]
with cross--covariance $\Gamma^{(\gamma)}$ is (see \citet{janati2020entropic})
\begin{eqnarray*}
    \Gamma^{(\gamma)}
 &=&
    \frac{1}{2}\Sigma_0^{1/2}
      \Bigl(
        4\Sigma_0^{1/2}\Sigma_1\Sigma_0^{1/2}
        +\frac{\gamma^2}{4}\,I
      \Bigr)^{1/2}\Sigma_0^{-1/2}-\frac{\gamma}{4}\,I.
\end{eqnarray*}
As $\gamma\to0$ one recovers the deterministic (un-regularised) OT 
map with cross-covariance
$
\Gamma^{(0)}
  =\Sigma_0^{1/2}
     \bigl(\Sigma_0^{1/2}\Sigma_1\Sigma_0^{1/2}\bigr)^{1/2}
     \Sigma_0^{-1/2}.
$

\paragraph*{Penalized discrete matching}\label{app:pen:match}

Here, the classical entropic OT is
\[
\min_{P\in\mathcal{U}(\boldsymbol{1}_{n_0},\boldsymbol{1}_{n_1})}\sum_{i=1}^{n_1}\sum_{j=1}^{n_0}P_{ij}\,C_{ij}\;+\;
\gamma
\sum_{i,j} P_{ij}
\Bigl(
    \log\bigl(P_{ij}\bigr)-1
\Bigr)
\]
or equivalently,
\[
\min_{P\in\mathcal{U}(\boldsymbol{1}_{n_0},\boldsymbol{1}_{n_1})}\sum_{i=1}^{n_1}\sum_{j=1}^{n_0}P_{ij}\,C_{ij}\;+\;
\gamma
\sum_{i,j} P_{ij}
\Bigl(
    \log\bigl(n_0n_1P_{ij}\bigr)-1
\Bigr)
\]
and where $\mathcal{U}(\boldsymbol{1}_{n_1},\boldsymbol{1}_{n_0})$ is the polytope described previously. 
Define the Gibbs kernel
$
K=\exp(-C/\gamma)
$, element-wise (in the sense that the exponential function is applied to each entry). We seek scaling vectors
$\boldsymbol{u}\in\mathbb{R}^{n_{0}}_{+},\,\boldsymbol{v}\in\mathbb{R}^{n_{1}}_{+}$ satisfying
\[
\begin{cases}
\boldsymbol{u}\odot\bigl(K\boldsymbol{v}\bigr)=\displaystyle\frac{1}{n_{0}}\mathbf{1}_{n_{0}},\\
\boldsymbol{v}\odot\bigl(K^{\mathsf T}\boldsymbol{u}\bigr)=\displaystyle\frac{1}{n_{1}}\mathbf{1}_{n_{1}} ,
\end{cases}
\]
where $\odot$ is Hadamard product (element pointwise), and then the {entropic OT plan} has the popular Sinkhorn form
\[
P^{(\gamma)}= \operatorname{diag}(\boldsymbol{u})\,K\,\operatorname{diag}(\boldsymbol{v}).
\]
Note that one could also consider a proxy map, given by
\[
T^{(\gamma)}(x_i)=n_{0}\,\sum_{j=1}^{n_{1}} P_{ij}^{(\gamma)}\,y_j,
\quad
i=1,\dots,n_{0}.
\]
As $\gamma\!\to\!0$, $P^{(\gamma)}\!\to\!P^{\star}$ and 
$T^{(\gamma)}$ converges to the deterministic Monge map.

\subsection{Penalized matching with weights}\label{app:pen:match:weight}

In the generalized version of this problem, over the set of matrices, $\mathcal{U}(\boldsymbol{w}_0,\boldsymbol{w}_1)$, defined as
\[
\bigl\{\,P\in\mathbb{R}_+^{n_0\times n_1}:
P\,\mathbf{1}_{n_1}=\boldsymbol{w}_0,\ 
P^\top\mathbf{1}_{n_0}=\boldsymbol{w}_1
\bigr\},
\]
and consider
\[
\min_{P\in\mathcal{U}(\boldsymbol{w}_0,\boldsymbol{w}_1)}\sum_{i=1}^{n_1}\sum_{j=1}^{n_0}P_{ij}\,C_{ij}\;+\;
\gamma
\sum_{i,j} P_{ij}\log\!\Bigl(\frac{P_{ij}}{w_{1,i}\,w_{0,j}}\Bigr)
\]
We seek here scaling vectors $\boldsymbol{u}\in\mathbb{R}^{n_{0}}_{+}$, $\boldsymbol{v}\in\mathbb{R}^{n_{1}}_{+}$ such that
\[
\begin{cases}
\boldsymbol{u}\odot (K\boldsymbol{v})=\boldsymbol{w}_0,\\
\boldsymbol{v}\odot (K^{\top}\boldsymbol{u})=\boldsymbol{w}_1,
\end{cases}
\]
solved via the iterative scheme
\[
\boldsymbol{u}^{(t+1)}=\frac{\boldsymbol{w}_0}{K\boldsymbol{v}^{(t)}},\text{ and }
\boldsymbol{v}^{(t+1)}=\frac{\boldsymbol{w}_1}{K^{\top}\boldsymbol{u}^{(t+1)}}.
\]
Then, we have again
\[
P^{(\gamma)}=\operatorname{diag}(\boldsymbol{u})\,K\,\operatorname{diag}(\boldsymbol{v})
\]
is the {entropic OT plan}.  
As $\gamma\!\to\!0$, $P^{(\gamma)}\!\to\!P^{\star}$ solving the unpenalized optimization problem, as discussed in \cite{peyre2019computational}.
Here again, a deterministic proxy mapping for $P^{(\gamma)}$ is obtained by
\[
T^{(\gamma)}(x_i)
=\frac{1}{w_{0,i}}\sum_{j=1}^{n_{1}} P^{(\gamma)}_{ij}\,y_j,
\qquad i=1,\dots,n_{0}.
\]
It converges to the true Monge map (when it exists) in the limit $\gamma\!\to\!0$.
\subsection{Univariate conditional transport (and the ST two-parent-value notation)}
\label{app:conditional}

This appendix recalls the univariate monotone OT map and explains how it extends to conditional transport,
using the same notation as in the main text.

\paragraph*{Unconditional (one-dimensional) transport.}
Let $X\mid A=a \sim \nu_a$ be a univariate random variable with CDF $F_a$ and quantile function $Q_a$.
The unique monotone OT map from $\nu_0$ to $\nu_1$ is
\[
T(x)=Q_1(F_0(x)).
\]
In the Gaussian case, if $X\mid A=a\sim \mathcal N(\mu_{a},\sigma_{a}^2)$, then
\[
T(x)=\mu_{1}+\frac{\sigma_{1}}{\sigma_{0}}(x-\mu_{0}).
\]

\paragraph*{Conditional transport and the $(\boldsymbol z_0,\boldsymbol z_1)$ notation.}
Let $X_j$ be a mediator with parent set $\boldsymbol Z_j=\mathrm{parents}_{\mathcal G}(X_j)$.
Write $F_{a,j}(\cdot\mid \boldsymbol z)$ and $Q_{a,j}(\cdot\mid \boldsymbol z)$ for the conditional CDF and quantile
of $X_j$ given $\boldsymbol Z_j=\boldsymbol z$ in group $A=a$.
The conditional monotone OT map from the law of $X_j\mid (A=0,\boldsymbol Z_j=\boldsymbol z_0)$
to the law of $X_j\mid (A=1,\boldsymbol Z_j=\boldsymbol z_1)$ is
\[
T_j(x\mid \boldsymbol z_0,\boldsymbol z_1)
\;=\;
Q_{1,j}\!\left(F_{0,j}(x\mid \boldsymbol z_0)\mid \boldsymbol z_1\right).
\]
In Section~5, we distinguish $\boldsymbol z_0$ (the parent values used on the {\em source} side for $F_{0,j}$)
from $\boldsymbol z_1$ (the parent values used on the {\em target} side for $Q_{1,j}$) to make explicit the sequential
nature of ST. The commonly used shorthand
\[
T_{j\mid \boldsymbol Z_j}(x\mid \boldsymbol z)
\;:=\;
Q_{1,j}\!\left(F_{0,j}(x\mid \boldsymbol z)\mid \boldsymbol z\right)
\]
corresponds to the special case $\boldsymbol z_0=\boldsymbol z_1=\boldsymbol z$.

\paragraph*{Bivariate illustration (a chain $X_1\to X_2$).}
Consider $(X_1,X_2)$ with $X_1$ a parent of $X_2$.
For a unit in group $A=0$ with observed values $(x_1,x_2)$, ST first transports $x_1$ via
\[
x_1^\dagger \;=\; T_1(x_1)=Q_{1,1}(F_{0,1}(x_1)),
\]
and then transports $x_2$ {\em conditionally}, using the observed parent value $x_1$ on the source side and the transported
parent value $x_1^\dagger$ on the target side:
\[
T_2(x_2\mid x_1,x_1^\dagger)
\;=\;
Q_{1,2}\!\left(F_{0,2}(x_2\mid x_1)\mid x_1^\dagger\right).
\]
This is precisely the ST logic: conditioning for the target quantile is performed at the already transported parent value.

\paragraph*{Gaussian closed form.}
Assume $\boldsymbol X=(X_1,X_2)^\top$ satisfies $\boldsymbol X\mid A=a\sim\mathcal N(\boldsymbol\mu_a,\boldsymbol\Sigma_a)$ with
\[
\boldsymbol\mu_a=
\begin{pmatrix}\mu_{1,a}\\ \mu_{2,a}\end{pmatrix},
\qquad
\boldsymbol\Sigma_a=
\begin{pmatrix}
\sigma_{1,a}^2 & \rho_a\sigma_{1,a}\sigma_{2,a}\\
\rho_a\sigma_{1,a}\sigma_{2,a} & \sigma_{2,a}^2
\end{pmatrix}.
\]
Then $X_2\mid (A=a,X_1=x)$ is Gaussian with
\[
\mu_{2\mid a,x}=\mu_{2,a}+\rho_a\frac{\sigma_{2,a}}{\sigma_{1,a}}(x-\mu_{1,a}),
\qquad
\sigma^2_{2\mid a,x}=\sigma_{2,a}^2(1-\rho_a^2),
\]
so the conditional monotone OT map from $X_2\mid (A=0,X_1=x_1)$ to $X_2\mid (A=1,X_1=x_1^\dagger)$ is
\[
T_2(x_2\mid x_1,x_1^\dagger)
=
\mu_{2\mid 1,x_1^\dagger}
+
\frac{\sigma_{2\mid 1,x_1^\dagger}}{\sigma_{2\mid 0,x_1}}
\bigl(x_2-\mu_{2\mid 0,x_1}\bigr),
\]
with $x_1^\dagger=T_1(x_1)=\mu_{1,1}+\frac{\sigma_{1,1}}{\sigma_{1,0}}(x_1-\mu_{1,0})$.

\subsection{Categorical transport via simplex OT}\label{app:cat:ot}
Let a categorical mediator $X\in\{1,\dots,K\}$ have conditional probability vector
$p_a(z)=(p_{a,1}(z),\dots,p_{a,K}(z))\in\mathcal S_K$, where $p_{a,k}(z)=\mathbb P(X=k\mid Z=z,A=a)$.
We estimate $p_a(z)$ using a multiclass classifier, yielding $\widehat p_a(z)\in\mathcal S_K$.

We then define a deterministic transport map $T_{\mathcal S}:\mathcal S_K\to\mathcal S_K$
as the optimal simplex transport under the squared Euclidean cost on $\mathcal S_K$
(as in \cite{machado2025categorical}). In practice, $T_{\mathcal S}$ may be computed
via an entropically regularized OT solver (Sinkhorn) for numerical stability; the entropic
term is a regularizer and the underlying cost remains quadratic.

Finally, we obtain a categorical counterfactual assignment by
\[
X^\star(z) = \arg\max_{k\in[K]} \bigl[T_{\mathcal S}(p_0(z))\bigr]_k,
\]
and we apply the same construction conditionally on transported parent values in the sequential
procedure.

\paragraph*{Semi-discrete allocation to enforce category marginals.}
The argmax rule above yields a deterministic label, but it does not guarantee that the
resulting empirical category frequencies match the target marginals under $A=1$.
When we wish to enforce these marginals (conditionally on $Z=z$), we replace the argmax
rounding by a semi-discrete OT {\em allocation} from transported simplex vectors to the
simplex vertices.
Concretely, let $\tilde p_i = T_{\mathcal S}(\hat p_0(z_i)) \in \mathcal S_K$ for the units
in the source group, and let $\pi_k(z) = \mathbb P(X=k \mid Z=z, A=1)$ denote the target
(category) proportions. Consider the semi-discrete OT problem between the empirical measure
$\widehat{\nu} = \frac{1}{n}\sum_{i=1}^n \delta_{\tilde p_i}$ on $\mathcal S_K$ and the discrete
measure $\nu^\pi = \sum_{k=1}^K \pi_k(z)\,\delta_{e_k}$ supported on the vertices
$\{e_1,\dots,e_K\}$, with cost $c(p,e_k)=\|p-e_k\|^2$.
The optimal transport induces a partition of $\mathcal S_K$ into (weighted) Voronoi/Laguerre
cells, yielding a deterministic map $C:\mathcal S_K\to\{1,\dots,K\}$ such that the pushforward
$C_\#\widehat{\nu}=\nu^\pi$.
We then set $X^\star(z_i)=C(\tilde p_i)$.
This is the allocation step summarized in Algorithm~\ref{alg:split} (Appendix~\ref{app:algorithms}).



\section{Proofs of theoretical results}
\label{app:proofs}

This appendix provides detailed proofs for the theoretical results in Section~\ref{sec:theory}. 
Throughout, $j$ is a mediator with parent set $\boldsymbol Z_j$, and we write
\[
\begin{cases}
    T_j(x\mid \boldsymbol z_0,\boldsymbol z_1)=Q_{1,j}\!\left(F_{0,j}(x\mid \boldsymbol z_0)\mid \boldsymbol z_1\right),\\
\widehat T_j(x\mid \boldsymbol z_0,\boldsymbol z_1)=\widehat Q_{1,j}\!\left(\widehat F_{0,j}(x\mid \boldsymbol z_0)\mid \boldsymbol z_1\right).
\end{cases}
\]
When $\boldsymbol Z_j=\emptyset$, we omit conditioning on $\boldsymbol z_0,\boldsymbol z_1$.
We use the shorthand $o_p(1)$ for a sequence converging to $0$ in probability.

\subsection{Proof of Lemma~\ref{lem:base-root}}

\begin{proof}Let $\mathcal X_j$ be a compact subset of the interior of $\mathrm{supp}(X_j\mid A=0)$.
By continuity and strict monotonicity of $F_{0,j}$ on the support (Assumptions~\ref{A1}--\ref{A2}),
the image $F_{0,j}(\mathcal X_j)$ is a compact subset of $(0,1)$.
Hence there exist $0<\underline u<\overline u<1$ such that
$F_{0,j}(x)\in[\underline u,\overline u]$ for all $x\in\mathcal X_j$.

For any $x\in\mathcal X_j$, decompose
\[
\bigl|\widehat T_j(x)-T_j(x)\bigr|
=
\bigl|\widehat Q_{1,j}(\widehat F_{0,j}(x)) - Q_{1,j}(F_{0,j}(x))\bigr|
\le (I)_x + (II)_x,
\]
where
\[
(I)_x := \bigl|\widehat Q_{1,j}(\widehat F_{0,j}(x)) - Q_{1,j}(\widehat F_{0,j}(x))\bigr|,
\qquad
(II)_x := \bigl|Q_{1,j}(\widehat F_{0,j}(x)) - Q_{1,j}(F_{0,j}(x))\bigr|.
\]
Taking suprema over $x\in\mathcal X_j$ yields
\[
\sup_{x\in\mathcal X_j}(I)_x \le \sup_{u\in(0,1)}|\widehat Q_{1,j}(u)-Q_{1,j}(u)| = o_p(1)
\]
by Assumption~\ref{A3}. Also, Assumption~\ref{A3} gives
$\Delta_n := \sup_{x\in\mathcal X_j}|\widehat F_{0,j}(x)-F_{0,j}(x)|=o_p(1)$.
Since $Q_{1,j}$ is monotone and continuous on $(0,1)$ (Assumptions~\ref{A1}--\ref{A2}),
it is uniformly continuous on $[\underline u/2,1-\underline u/2]$; thus
\[
\sup_{x\in\mathcal X_j}(II)_x
\le \omega_{Q_{1,j}}(\Delta_n) = o_p(1),
\]
where $\omega_{Q_{1,j}}(\cdot)$ is a modulus of continuity of $Q_{1,j}$ on that compact interval.
Combining the bounds proves $\displaystyle\sup_{x\in\mathcal X_j}|\widehat T_j(x)-T_j(x)|\xrightarrow{p}0$.
\end{proof}

\subsection{Proof of Proposition~\ref{prop:cv:Tx}}

\begin{proof}
Fix $(x,\boldsymbol z_0,\boldsymbol z_1)$ in the interior of the support.
By Assumption~\ref{A3}, $\widehat F_{0,j}(x\mid \boldsymbol z_0)\xrightarrow{p}F_{0,j}(x\mid \boldsymbol z_0)$
and $\displaystyle\sup_{u,\boldsymbol z}|\widehat Q_{1,j}(u\mid \boldsymbol z)-Q_{1,j}(u\mid \boldsymbol z)|\xrightarrow{p}0$.

Decompose
\begin{align*}
\left|\widehat T_j(x\mid \boldsymbol z_0,\boldsymbol z_1)-T_j(x\mid \boldsymbol z_0,\boldsymbol z_1)\right|
&=
\left|\widehat Q_{1,j}(\widehat F_{0,j}(x\mid \boldsymbol z_0)\mid \boldsymbol z_1)
-
Q_{1,j}(F_{0,j}(x\mid \boldsymbol z_0)\mid \boldsymbol z_1)\right|\\
&\le (I)+(II),
\end{align*}
with
\begin{align*}
(I)&:=\left|\widehat Q_{1,j}(\widehat F_{0,j}(x\mid \boldsymbol z_0)\mid \boldsymbol z_1)
-
Q_{1,j}(\widehat F_{0,j}(x\mid \boldsymbol z_0)\mid \boldsymbol z_1)\right|\\
&\le \sup_{u,\boldsymbol z}\left|\widehat Q_{1,j}(u\mid \boldsymbol z)-Q_{1,j}(u\mid \boldsymbol z)\right|=o_p(1),
\end{align*}
and
\[
(II):=\left|Q_{1,j}(\widehat F_{0,j}(x\mid \boldsymbol z_0)\mid \boldsymbol z_1)
-
Q_{1,j}(F_{0,j}(x\mid \boldsymbol z_0)\mid \boldsymbol z_1)\right|.
\]
By Assumptions~\ref{A1}--\ref{A2}, for fixed $\boldsymbol z_1$, the map $u\mapsto Q_{1,j}(u\mid \boldsymbol z_1)$
is continuous on $(0,1)$, hence $Q_{1,j}(\widehat F_{0,j}(x\mid \boldsymbol z_0)\mid \boldsymbol z_1)\xrightarrow{p}
Q_{1,j}(F_{0,j}(x\mid \boldsymbol z_0)\mid \boldsymbol z_1)$ by the continuous mapping theorem.
Thus $(II)=o_p(1)$, proving the claim.
\end{proof}

\subsection{Proof of Corollary~\ref{cor:cv:Tx-seq}}

\begin{proof}
This is an immediate consequence of Proposition~\ref{prop:cv:Tx} by evaluating it at the deterministic choice
$\boldsymbol z_1=T_{\boldsymbol Z_j}(\boldsymbol z_0)$.
\end{proof}

\subsection{Proof of Proposition~\ref{prop:5:2}}

\begin{proof}
Fix $\boldsymbol z_0$ and write $\boldsymbol z_1=T_{\boldsymbol Z_j}(\boldsymbol z_0)$ and
$\widehat{\boldsymbol z}_1=\widehat T_{\boldsymbol Z_j}(\boldsymbol z_0)$.
From the induction hypothesis \ref{A7}, $\widehat{\boldsymbol z}_1\xrightarrow{p}\boldsymbol z_1$, and let
$u_0:=F_{0,j}(x\mid \boldsymbol z_0)\in[\varepsilon,1-\varepsilon]$.

Decompose
\[
\bigl|\widehat T_j(x\mid \boldsymbol z_0,\widehat{\boldsymbol z}_1)-T_j(x\mid \boldsymbol z_0,\boldsymbol z_1)\bigr|
\le (I)+(II),
\]
where
\[
(I):=\bigl|\widehat T_j(x\mid \boldsymbol z_0,\widehat{\boldsymbol z}_1)-T_j(x\mid \boldsymbol z_0,\widehat{\boldsymbol z}_1)\bigr|,
\qquad
(II):=\bigl|T_j(x\mid \boldsymbol z_0,\widehat{\boldsymbol z}_1)-T_j(x\mid \boldsymbol z_0,\boldsymbol z_1)\bigr|.
\]

For $(I)$, write, for $\widehat u_0:=\widehat F_{0,j}(x\mid \boldsymbol z_0)$,
\[
(I)\le 
\underbrace{\bigl|\widehat Q_{1,j}(\widehat u_0\mid \widehat{\boldsymbol z}_1)-Q_{1,j}(\widehat u_0\mid \widehat{\boldsymbol z}_1)\bigr|}_{(I.a)}
+
\underbrace{\bigl|Q_{1,j}(\widehat u_0\mid \widehat{\boldsymbol z}_1)-Q_{1,j}(u_0\mid \widehat{\boldsymbol z}_1)\bigr|}_{(I.b)},
\]
By Assumption~\ref{A3}, $(I.a)\le \displaystyle\sup_{u,\boldsymbol z}|\widehat Q_{1,j}(u\mid \boldsymbol z)-Q_{1,j}(u\mid \boldsymbol z)|=o_p(1)$.
Also $\widehat u_0\xrightarrow{p}u_0$ by Assumption~\ref{A3}; moreover, since $u_0\in[\varepsilon,1-\varepsilon]$,
we have $\widehat u_0\in[\varepsilon/2,1-\varepsilon/2]$ with probability tending to $1$.
By continuity of $u\mapsto Q_{1,j}(u\mid \widehat{\boldsymbol z}_1)$ on $(0,1)$ (Assumptions~\ref{A1}--\ref{A2}),
this implies $(I.b)=o_p(1)$. Hence $(I)=o_p(1)$.

For $(II)$, note that
\[
T_j(x\mid \boldsymbol z_0,\boldsymbol z)=Q_{1,j}(u_0\mid \boldsymbol z)
\]
with $u_0\in[\varepsilon,1-\varepsilon]$. Therefore, by Assumption~\ref{A6} (uniform continuity in $\boldsymbol z$ uniformly over
$u\in[\varepsilon,1-\varepsilon]$),
\[
(II)=\bigl|Q_{1,j}(u_0\mid \widehat{\boldsymbol z}_1)-Q_{1,j}(u_0\mid \boldsymbol z_1)\bigr|\xrightarrow{p}0
\quad\text{since }\widehat{\boldsymbol z}_1\xrightarrow{p}\boldsymbol z_1.
\]
Combining $(I)=o_p(1)$ and $(II)=o_p(1)$ proves the claim.
\end{proof}

\subsection{Proof of Proposition~\ref{prop:cat:consistency:pointwise}}

\begin{proof}
Fix $(\boldsymbol z_0,\boldsymbol z_1)$ in the interior of $\mathrm{supp}(\boldsymbol Z_j)$.
In the categorical case, the conditional laws are probability vectors
$p_{a,j}(\cdot\mid \boldsymbol z)$ on a finite support.
Let $\widehat p_{a,j}(\cdot\mid \boldsymbol z)$ be their estimators.

By Assumption~\ref{A1} (finite support) together with the assumed consistency of the conditional probability estimators
(e.g.,~\ref{C2}), we have
$\widehat p_{a,j}(\cdot\mid \boldsymbol z)\xrightarrow{p} p_{a,j}(\cdot\mid \boldsymbol z)$ (componentwise) at the fixed covariate values
$\boldsymbol z_0,\boldsymbol z_1$.
By stability/continuity of the simplex transport operator at interior points (e.g.,~\ref{C1} or the corresponding result in
\citet{machado2025categorical}), the plug-in simplex transport estimator satisfies
\[
\widehat T_j\bigl(p_{0,j}(\cdot\mid \boldsymbol z_0)\mid \boldsymbol z_0,\boldsymbol z_1\bigr)
\xrightarrow{p}
T_j\bigl(p_{0,j}(\cdot\mid \boldsymbol z_0)\mid \boldsymbol z_0,\boldsymbol z_1\bigr),
\]
which is the desired pointwise convergence.
\end{proof}

\subsection{Proof of Corollary~\ref{cor:cat:consistency:path}}

\begin{proof}
Immediate from Proposition~\ref{prop:cat:consistency:pointwise} by evaluating it at the deterministic choice
$\boldsymbol z_1=T_{\boldsymbol Z_j}(\boldsymbol z_0)$.
\end{proof}

\subsection{Proof of Proposition~\ref{prop:cat:consistency:plugin}}

\begin{proof}
Fix $\boldsymbol z_0$ and write $\boldsymbol z_1=T_{\boldsymbol Z_j}(\boldsymbol z_0)$ and
$\widehat{\boldsymbol z}_1=\widehat T_{\boldsymbol Z_j}(\boldsymbol z_0)$.
Assume $\widehat{\boldsymbol z}_1\xrightarrow{p}\boldsymbol z_1$ (induction hypothesis).

Decompose
\[
\left\|
\widehat T_j\bigl(p_{0,j}(\cdot\mid \boldsymbol z_0)\mid \boldsymbol z_0,\widehat{\boldsymbol z}_1\bigr)
-
T_j\bigl(p_{0,j}(\cdot\mid \boldsymbol z_0)\mid \boldsymbol z_0,\boldsymbol z_1\bigr)
\right\|_\infty
\le (I)+(II),
\]
where
\[
(I):=
\left\|
\widehat T_j\bigl(p_{0,j}(\cdot\mid \boldsymbol z_0)\mid \boldsymbol z_0,\widehat{\boldsymbol z}_1\bigr)
-
T_j\bigl(p_{0,j}(\cdot\mid \boldsymbol z_0)\mid \boldsymbol z_0,\widehat{\boldsymbol z}_1\bigr)
\right\|_\infty,
\]
\[
(II):=
\left\|
T_j\bigl(p_{0,j}(\cdot\mid \boldsymbol z_0)\mid \boldsymbol z_0,\widehat{\boldsymbol z}_1\bigr)
-
T_j\bigl(p_{0,j}(\cdot\mid \boldsymbol z_0)\mid \boldsymbol z_0,\boldsymbol z_1\bigr)
\right\|_\infty.
\]

Term $(I)$ vanishes by the pointwise consistency of the simplex transport estimator at fixed covariates
(Proposition~\ref{prop:cat:consistency:pointwise}) applied at $(\boldsymbol z_0,\widehat{\boldsymbol z}_1)$
together with a standard “random index” argument; a sufficient condition is local uniformity/stochastic equicontinuity
of $\widehat T_j(\cdot\mid \boldsymbol z_0,\boldsymbol z_1)$ in $\boldsymbol z_1$ on compacts (which holds under the stability results for
discrete OT on the simplex; see \citet{machado2025categorical}).

For $(II)$, by continuity of the population simplex transport operator in the covariate argument
(e.g., induced by continuity of $\boldsymbol z\mapsto p_{1,j}(\cdot\mid \boldsymbol z)$ and stability of the OT solution on the simplex,
cf.~\ref{C1}), and since $\widehat{\boldsymbol z}_1\xrightarrow{p}\boldsymbol z_1$, we obtain $(II)=o_p(1)$.

Hence $(I)+(II)=o_p(1)$, proving the claim.
\end{proof}

\subsection{Proof of Theorem~\ref{thm:seq:consistency}}

\begin{proof}
We argue by induction along the chosen topological order $\pi$. 
Recall \ref{A7} from Section~\ref{sec:5:2} (applied here with $j=\pi(m-1)$), i.e.,
$x^\star_{1,i,\pi(\ell)}\to x^\dagger_{1,i,\pi(\ell)}$ for all $\ell\le m-1$.

{\em Base step.}
For the first node $j=\pi(1)$, which is necessarily a root of the DAG, the transported value is
$x^\star_{1,i,j}=\widehat T_j(x_{0,i,j})$.
Consistency follows from Lemma~\ref{lem:base-root} when $X_j$ is numerical, and from the corresponding
root-level plug-in consistency result for the simplex transport when $X_j$ is categorical.

{\em Induction step.}
Fix $m\in\{2,\ldots,d\}$ and set $j=\pi(m)$. Let
\[
\boldsymbol z_{0,i,j}:=\boldsymbol x_{0,i,\mathrm{parents}_{\mathcal{G}}(X_j)},
\qquad
\boldsymbol z_{1,i,j}:=\boldsymbol x^\dagger_{1,i,\mathrm{parents}_{\mathcal{G}}(X_j)},
\qquad
\widehat{\boldsymbol z}_{1,i,j}:=\boldsymbol x^\star_{1,i,\mathrm{parents}_{\mathcal{G}}(X_j)} .
\]
By the induction hypothesis, for every parent $\ell\in \mathrm{parents}_{\mathcal{G}}(X_j)$ we have
$x^\star_{1,i,\ell}\xrightarrow{p}x^\dagger_{1,i,\ell}$, and therefore
\[
\widehat{\boldsymbol z}_{1,i,j}=\boldsymbol x^\star_{1,i,\mathrm{parents}_{\mathcal{G}}(X_j)}
\xrightarrow{p}
\boldsymbol x^\dagger_{1,i,\mathrm{parents}_{\mathcal{G}}(X_j)}=\boldsymbol z_{1,i,j}.
\]
We now apply the appropriate plug-in consistency result at node $j$.
If $X_j$ is numerical, Proposition~\ref{prop:5:2} yields
\[
x^\star_{1,i,j}
=
\widehat T_j\!\left(x_{0,i,j}\mid \boldsymbol z_{0,i,j},\widehat{\boldsymbol z}_{1,i,j}\right)
\xrightarrow{p}
T_j\!\left(x_{0,i,j}\mid \boldsymbol z_{0,i,j},\boldsymbol z_{1,i,j}\right)
=
x^\dagger_{1,i,j}.
\]
If $X_j$ is categorical, Proposition~\ref{prop:cat:consistency:plugin} yields the same conclusion.
Hence $x^\star_{1,i,\pi(m)}\xrightarrow{p}x^\dagger_{1,i,\pi(m)}$.

Since $m$ was arbitrary, the above holds for every coordinate along $\pi$, and thus
$\boldsymbol x^\star_{1,i}\xrightarrow{p}\boldsymbol x^\dagger_{1,i}$.
\end{proof}

\subsection{Proof of Lemma~\ref{cor:cat:assignment}}

\begin{proof}
Let $\widehat{\boldsymbol p}_n\in\mathcal S_K$ be the estimated transported probability vector and
$\boldsymbol p\in\mathcal S_K$ its population counterpart, with
$\|\widehat{\boldsymbol p}_n-\boldsymbol p\|_\infty\xrightarrow{p}0$
(as given by Proposition~\ref{prop:cat:consistency:plugin}).
Let $g:\mathcal S_K\to\{1,\dots,K\}$ be a deterministic label mapping.

Assume $g$ is locally constant at $\boldsymbol p$, i.e., there exists $\gamma>0$ such that
\[
\|\boldsymbol q-\boldsymbol p\|_\infty<\gamma \ \Longrightarrow\ g(\boldsymbol q)=g(\boldsymbol p).
\]
Then
\[
\mathbb P\bigl(g(\widehat{\boldsymbol p}_n)\neq g(\boldsymbol p)\bigr)
\le
\mathbb P\bigl(\|\widehat{\boldsymbol p}_n-\boldsymbol p\|_\infty\ge \gamma\bigr)\ \longrightarrow\ 0,
\]
since $\|\widehat{\boldsymbol p}_n-\boldsymbol p\|_\infty\xrightarrow{p}0$.
Hence $\mathbb P(g(\widehat{\boldsymbol p}_n)=g(\boldsymbol p))\to 1$.
\end{proof}

In the case of the argmax rule with a positive margin, let $k^\star=\arg\displaystyle\max_k p_k$ and assume the margin
\[
m:= p_{k^\star}-\max_{\ell\neq k^\star}p_\ell>0.
\]
On the event $\|\widehat{\boldsymbol p}_n-\boldsymbol p\|_\infty < m/2$, we have
$\widehat p_{n,k^\star} > \displaystyle\max_{\ell\neq k^\star}\widehat p_{n,\ell}$, hence
$\arg\displaystyle\max_k \widehat p_{n,k}=k^\star$ (and any fixed deterministic tie-break is irrelevant on this event).
Therefore
\[
\mathbb P\bigl(\arg\max_k \widehat p_{n,k}=k^\star\bigr)
\ge
\mathbb P\bigl(\|\widehat{\boldsymbol p}_n-\boldsymbol p\|_\infty < m/2\bigr)\to 1.
\]

\subsection{Proof of Theorem \ref{cor:consistency:effects} }

\begin{proof}
Fix a unit $i$ with $A_i=0$. Let
\[
E_i := \{\boldsymbol x^\dagger_{1,i}\in \mathcal{X}_0\}.
\]
By Assumption~\ref{ass:overlap-outcome}, $\mathbb P(E_i^c\mid A_i=0)\le \eta$.

We prove that, for any $\epsilon>0$,
\[
\mathbb P\!\left(
\bigl\|(\widehat\delta_i,\widehat\zeta_i,\widehat\tau_i)-(\delta_i^\dagger,\zeta_i^\dagger,\tau_i^\dagger)\bigr\|>\epsilon
\ \middle|\ A_i=0\right)
\le \eta+o(1).
\]
Decompose
\begin{align*}
\mathbb P(\|\cdot\|>\epsilon\mid A_i=0)
&\le
\mathbb P(\|\cdot\|>\epsilon, E_i\mid A_i=0) + \mathbb P(E_i^c\mid A_i=0)\\
&\le
\mathbb P(\|\cdot\|>\epsilon, E_i\mid A_i=0) + \eta.
\end{align*}
It remains to show $\mathbb P(\|\cdot\|>\epsilon, E_i\mid A_i=0)=o(1)$.

Assume (as stated in the corollary) that $\boldsymbol x^\star_{1,i}\xrightarrow{p}\boldsymbol x^\dagger_{1,i}$
(Theorem~\ref{thm:seq:consistency}), and that the outcome regressions satisfy the following standard plug-in condition:
\begin{equation}
\label{eq:mu-unif}
\sup_{\boldsymbol x\in S_a}\bigl|\widehat\mu_a(\boldsymbol x)-\mu_a(\boldsymbol x)\bigr|\xrightarrow{p}0,
\qquad a\in\{0,1\},
\end{equation}
and $\mu_a$ is continuous on $S_a$ (or at least continuous at the relevant evaluation points).
(Condition~\eqref{eq:mu-unif} can be replaced by stochastic equicontinuity plus pointwise consistency.)

On the event $E_i$, both $\boldsymbol x^\dagger_{1,i}$ and, with probability tending to $1$, $\boldsymbol x^\star_{1,i}$
belong to $\mathcal{X}_0$ (since $\boldsymbol x^\star_{1,i}\to \boldsymbol x^\dagger_{1,i}$ and $\mathcal{X}_0$ is a support set).
Then
\[
\widehat\mu_0(\boldsymbol x^\star_{1,i})-\mu_0(\boldsymbol x^\dagger_{1,i})
=
\underbrace{\bigl[\widehat\mu_0(\boldsymbol x^\star_{1,i})-\mu_0(\boldsymbol x^\star_{1,i})\bigr]}_{(A)}
+
\underbrace{\bigl[\mu_0(\boldsymbol x^\star_{1,i})-\mu_0(\boldsymbol x^\dagger_{1,i})\bigr]}_{(B)}.
\]
By \eqref{eq:mu-unif}, $(A)\le \sup_{\boldsymbol x\in \mathcal{X}_0}|\widehat\mu_0(\boldsymbol x)-\mu_0(\boldsymbol x)|=o_p(1)$.
By continuity of $\mu_0$ and $\boldsymbol x^\star_{1,i}\xrightarrow{p}\boldsymbol x^\dagger_{1,i}$, $(B)=o_p(1)$.
Hence
\[
\widehat\mu_0(\boldsymbol x^\star_{1,i})\xrightarrow{p}\mu_0(\boldsymbol x^\dagger_{1,i})
\quad\text{on }E_i.
\]
Similarly, $\widehat\mu_0(\boldsymbol x_{0,i})\xrightarrow{p}\mu_0(\boldsymbol x_{0,i})$ by consistency of $\widehat\mu_0$
(at a fixed point, this follows from \eqref{eq:mu-unif} or from pointwise consistency).
Therefore,
\[
\widehat\delta_i
=
\widehat\mu_0(\boldsymbol x^\star_{1,i})-\widehat\mu_0(\boldsymbol x_{0,i})
\xrightarrow{p}
\mu_0(\boldsymbol x^\dagger_{1,i})-\mu_0(\boldsymbol x_{0,i})
=\delta_i^\dagger
\quad\text{on }E_i.
\]
The same argument gives $\widehat\mu_1(\boldsymbol x^\star_{1,i})\xrightarrow{p}\mu_1(\boldsymbol x^\dagger_{1,i})$
(on the appropriate support $\mathcal{X}_1$; for ST transports, $\boldsymbol x^\dagger_{1,i}$ lies in the target support by construction),
and thus
\[
\widehat\zeta_i
=
\widehat\mu_1(\boldsymbol x^\star_{1,i})-\widehat\mu_0(\boldsymbol x^\star_{1,i})
\xrightarrow{p}
\mu_1(\boldsymbol x^\dagger_{1,i})-\mu_0(\boldsymbol x^\dagger_{1,i})
=\zeta_i^\dagger
\quad\text{on }E_i.
\]
Finally, $\widehat\tau_i=\widehat\delta_i+\widehat\zeta_i\xrightarrow{p}\delta_i^\dagger+\zeta_i^\dagger=\tau_i^\dagger$
on $E_i$ by Slutsky's theorem, hence
\[
\mathbb P(\|(\widehat\delta_i,\widehat\zeta_i,\widehat\tau_i)-(\delta_i^\dagger,\zeta_i^\dagger,\tau_i^\dagger)\|>\epsilon,\ E_i\mid A_i=0)=o(1).
\]
Plugging into the initial decomposition yields the claimed bound $\le \eta+o(1)$.
When $\eta=0$, the bound implies the stated convergence in probability.
\end{proof}

\section{Algorithms}
\label{app:algorithms}

We summarize the computational components of the proposed framework through three algorithms. Algorithm~\ref{alg:1} describes the {\em (conditional) sequential transport} (ST) procedure, which constructs mediator counterfactuals by transporting each mediator along a topological ordering of the mediator DAG. At each step, the algorithm applies either an unconditional or a conditional transport map, depending on the presence of parent mediators, and produces a deterministic counterfactual mediator vector that respects both the observed distributions and the causal structure.
Algorithm~\ref{alg:2} details the estimation of the transport maps used within ST. It covers both numerical mediators, for which transport is implemented via conditional quantile mapping, and categorical mediators, for which transport is performed on the probability simplex. These maps constitute the core building blocks of the sequential procedure and are estimated once from the observed data under each treatment level.
Finally, Algorithm~\ref{alg:split} combines sequential transport with outcome regression to compute individual-level and aggregate causal decompositions. Given fitted outcome models, it evaluates direct and indirect effects by contrasting factual and transported mediator profiles. Together, the three algorithms form a modular pipeline: Algorithm~\ref{alg:2} estimates the transport primitives, Algorithm~\ref{alg:1} constructs counterfactual mediators, and Algorithm~\ref{alg:split} translates these counterfactuals into effect decompositions.

\begin{algorithm}[t!]
\caption{Conditional transport for a numeric feature for $x_{0,j}$, from \cite{machado2025sequential}.}\label{alg:1}
\begin{algorithmic}
\Require observation $x_{0,j}\in\mathbb{R}$ (numeric)
\Require sample $\chi_{0,j}=\{x_{0,j,1},\cdots,x_{0,j,n_0}\}\in\mathbb{R}^{n_0}$ 
\Require sample $\chi_{1,j}=\{x_{1,j,1},\cdots,x_{1,j,n_1}\}\in\mathbb{R}^{n_1}$ 
\Require weights $\boldsymbol{w}_0\in\mathbb{R}_+^{n_0}$ and  $\boldsymbol{w}_1\in\mathbb{R}_+^{n_1}$
\State $F(\cdot)\gets$ smooth cdf from $(\chi_{0,j},\boldsymbol{w}_0)$
\State $Q(\cdot)\gets$ smooth quantile function from $(\chi_{1,j},\boldsymbol{w}_1)$\\
\Return $x^\star_{1,j}\gets Q\circ F(x_{0,j})$
\end{algorithmic}
\end{algorithm}

\begin{algorithm}[t!]
\caption{Conditional transport on causal graph for a categorical feature, from \cite{machado2025categorical}.}\label{alg:2}
\begin{algorithmic}
\Require observation $x_{0,j}\in\{A_1,\cdots,A_{d_j}\}$ (categorical)
\Require  $\chi_{0,j}=\{\widehat{p}_j(x_{0,j,1}),\cdots,\widehat{p}_j(x_{0,j,n_0)}\}\in\mathcal{S}_{d_j-1}^{n_0}$ 
\Require  $\chi_{1,j}=\{\widehat{p}_j(x_{1,j,1}),\cdots,\widehat{p}_j(x_{1,j,n_1)}\}\in\mathcal{S}_{d_j-1}^{n_1}$ 
\Require weights $\boldsymbol{w}_0\in\mathbb{R}_+^{n_0}$ and  $\boldsymbol{w}_1\in\mathbb{R}_+^{n_1}$
\Require categorical allocation $C_j$ (Algorithm \ref{alg:split})
\State $T^\star:\mathcal{S}_{d_j-1}\to\mathcal{S}_{d_j-1}$ optimal weighted transport 
\State $p^\star_{1,j}\gets T^\star(\widehat{p}_j(x_{0,j}))$\\
\Return $x^\star_{1,j}\gets C_j(p^\star_{1,j})\in\{1,\cdots,d_j\}$
\end{algorithmic}
\end{algorithm}

\begin{algorithm}[t]
\caption{From compositional to categorical feature, with marginal constraints.}\label{alg:split}
\begin{algorithmic}
\Require $n$ observations on $\mathcal{S}_{d-1}$, $\widehat{\boldsymbol{p}}_1,\cdots,\widehat{\boldsymbol{p}}_n$
\Require target proportions $\pi_1,\cdots,\pi_d$
\State Optimal mapping  $T^\star:\mathcal{S}_{d-1}\to\{1,\cdots,d\}$\\
(from $\widehat{\boldsymbol{p}}\in\mathcal{S}_{d-1}$ to vertices of $\mathcal{S}_{d-1}$, with masses $\pi$)\\
\Return $T^\star:\mathcal{S}_{d-1}\to\{1,\cdots,d\}$ (Voronoi tesselation)
\end{algorithmic}
\end{algorithm}

\section{Toy examples with Gaussian data}

\subsection{Toy example with 2 mediators}\label{app:toy}

Consider the following simple example: treatment $A$ is a binary variable, taking values in $\{0,1\}$ (untreated, treated) with probabilities $p_0$ and $p_1$. Then, conditional on $A$, $\boldsymbol{X}=(X_1,X_2)$ is a bivariate Gaussian vector, $\boldsymbol{X}\mid A=a\sim \mathcal{N}(\boldsymbol{\mu}_a,\boldsymbol{\Sigma}_a)$
\[
\boldsymbol{X}\mid A=a\sim \mathcal{N}\left(
\begin{pmatrix}
    \mu_a\\
    \mu_a
\end{pmatrix},
\begin{pmatrix}
    1 & r_a\\
    r_a & 1
\end{pmatrix}
\right)
\]

For numerical applications, we set $\mu_0=-1$ and $\mu_1=1$, and $r_a\in\{-0.5,+0.7\}$.
Then, $Y$ is generated based on a simple linear model, $$Y = 2 X_1 -1.5 X_2+3 A+\varepsilon$$ where $\varepsilon$ is $\mathcal{N}(0,1)$.
The two ``potential outcomes'' given by the SEM are:
$$
\begin{cases}
    Y(1) = 2 X_1 -1.5 X_2+3 +\varepsilon\\
    Y(0) = 2 X_1 -1.5 X_2+\varepsilon.
\end{cases}
$$
Under this setup, the expected values of the indirect effect $\bar{\delta}$, the direct effect $\bar{\zeta}$, and the total effect $\bar{\tau}$ are:
$$
\begin{cases}
    \bar{\delta} &= 0.5(\mu_1 - \mu_0) = 1\\
    \bar{\zeta} &= 3\\
    \bar{\tau} &= 0.5(\mu_1 - \mu_0) + 3=4.
\end{cases}
$$
We generate a sample of size $n = 500$ from this data-generating process (DGP) and construct counterfactuals for the mediators $\boldsymbol{X}$ using OT, entropy regularized OT, and ST. For the latter, we consider two factorizations: (i) transport $X_1$ first, then $X_1 \to X_2$ (denoted ST(1)); and (ii) transport $X_2$ first, then $X_2 \to X_1$ (denoted ST(2)).

Figure~\ref{fig:gaussian-transport-0to1} displays the original untreated units (green), treated units (yellow), and the transported units from $A = 0$ to $A = 1$ (triangles). The transport mappings for each method are indicated by line segments.

\begin{figure}[htb!]
    \centering
    \includegraphics[width=\linewidth]{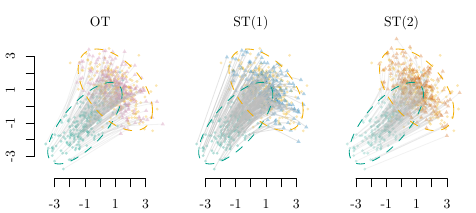}
    \caption{Untreated units (green dots), treated Units (yellow dots), and transported units (blue triangles) under optimal transport (OT), entropic regularization with Sinkhorn algorithm (SKH) and sequential transport: ST(1) (transporting $X_1$ first and then $X_1\to X_2$) and ST(2) (transporting $X_2$ first and then $X_2\to X_1$). Lines indicate the transport mappings.}
    \label{fig:gaussian-transport-0to1}
\end{figure}

\begin{figure}[t!]
    \centering
    \includegraphics[width=\linewidth]{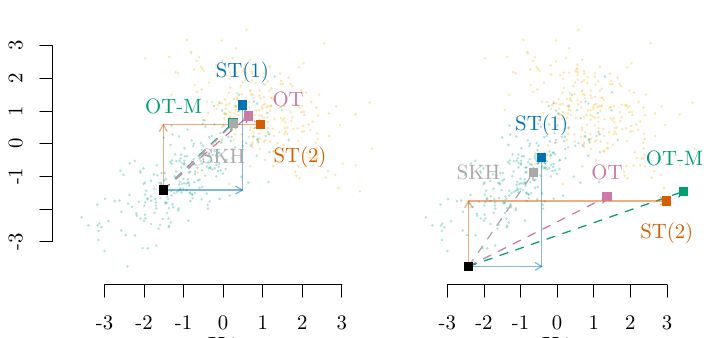}
    \caption{Estimated $\tau_i$ for the two untreated units (black squares), $\boldsymbol{x}_{i_1} = (-1.5, -1.4)$ (left) and $\boldsymbol{x}_{i_2} = (-2.4, -3.8)$ (right), using different transport methods. Green dots: untreated; yellow dots: treated.}
    \label{fig:gaussian-tau-two-indiv}
\end{figure}

For individual effects estimations, we focus on two individuals, $\boldsymbol{x}_{i_1} = (-1.5, -1.4)$ and $\boldsymbol{x}_{i_2} = (-2.4, -3.8)$ (Figure~\ref{fig:gaussian-tau-two-indiv}). For each of them, we build counterfactuals using transport-based methods. Using these counterfactuals, we compute the individual direct, indirect and total causal effects. The resulting values are reported in Table~\ref{tab:individual-causal-effects-toy}.

\begin{table}[tb!]
    \centering
     \caption{Estimated individual causal effects for two units, $\boldsymbol{x}_{i_1} = (-1.5, -1.4)$ and $\boldsymbol{x}_{i_2} = (-2.4, -3.8)$, based on counterfactuals build using optimal transport (OT), OT-Based matching (OT-M), penalized OT (SKH), and two sequential appoaches $X_1 \rightarrow X_2$ (ST(1)) and $X_2 \rightarrow X_1$ (ST(2)).}
    \label{tab:individual-causal-effects-toy}
    \begin{tabular}[t]{rrrrrrr}
\toprule
Metric & i & OT & OT-M & SKH & ST(1) & ST(2)\\
\midrule
\multirow{2}*{$\delta_i$} & 1 & 1.3 & 0.4 & 0.5 & -0.4 & 1.4\\
& 2 & 4.2 & 4.1 & 1.0 & 1.1 & 3.9\\
\cmidrule(lr){1-7}
\multirow{2}*{$\zeta_i$} & 1 & 2.7 & 3.2 & 3.2 & 3.7 & 3.5\\
 & 2 & 4.5 & 7.7 & 3.4 & 3.3 & 7.6\\
 \cmidrule(lr){1-7}
\multirow{2}*{$\tau_i$} & 1 & 4.0 & 3.5 & 3.7 & 3.3 & 4.9\\
 & 2 & 8.8 & 11.8 & 4.4 & 4.4 & 11.9\\
\bottomrule
\end{tabular}
\end{table}


We repeat the procedure over 200 Monte Carlo replications to evaluate performance. In each replication, we generate a new sample of size $n = 500$, estimate counterfactuals using OT and ST, and compute both individual and average causal effects. The results of the estimations are presented in Section~\ref{sec:6:gaussian-sims}.

In addition to the example with the two untreated units showcased in the main analysis, 
we can complement the results by having a look at a generalization for the entire set of untreated units. To that end, the distribution of individual causal effects obtained using each of the counterfactual-based methods is plotted in Figure~\ref{fig:gaussian-indiv-effects}. As shown, the distributions of estimated individual effects are centered around their corresponding theoretical expectations.

\begin{figure}[!tb]
    \centering
    \includegraphics[width=\linewidth]{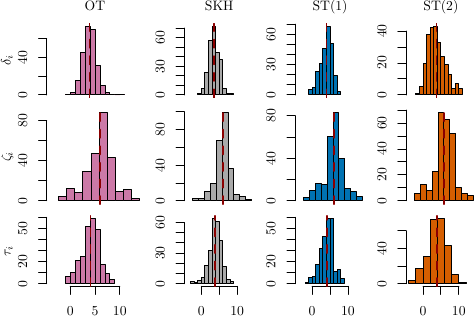}
    \caption{Individual causal effects for the Gaussian toy example estimated with counterfactual-based methods using optimal transport (OT), penalized OT (SKH), and two sequential approaches $X_1 \rightarrow X_2$ (ST(1)) and $X_2 \rightarrow X_1$ (ST(2)). The red dashed line depicts theoretical expectations.}
    \label{fig:gaussian-indiv-effects}
\end{figure}

\subsubsection*{Varying the means}\label{app:toy-alpha}

We change the distance between the means of the Gaussian distributions in groups~0 and~1. As this distance increases, the overlap between the two groups decreases. In the initial setup, with $\boldsymbol{\mu}_0 = -1$ and $\boldsymbol{\mu}_1 = +1$, the Euclidean distance is equal to $2 \sqrt{2}$. We apply a scalar coefficient $\alpha\geq 0$ to the means to increase that distance: $\alpha\boldsymbol{\mu}_0$ and $\alpha\boldsymbol{\mu}_1$. We make $\alpha$ vary from 0 to 2 by steps of 0.2. When $\alpha=0$, the distance is equal to $0$, when $\alpha=2$, the distance is equal to $4\sqrt{2}$. A visual representation of the samples that can be drawn from the DGP with $\alpha \in\{0,1,2\}$ is shown in Figure~\ref{fig:gaussian-example-alpha}.


\begin{figure}[htb!]
    \centering
    \includegraphics[width=\linewidth]{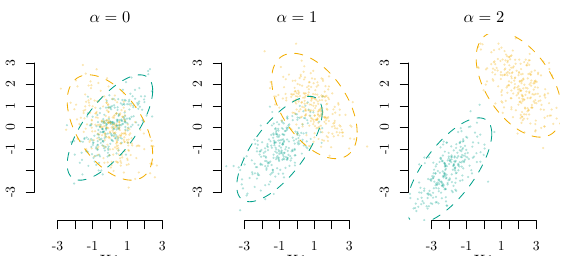}
    \caption{Example of simulated data for the toy example of section~\ref{app:toy} for various distances between the means of the Gaussian distributions in groups 0 and 1.}
    \label{fig:gaussian-example-alpha}
\end{figure}

We perform 200 Monte Carlo replications for each value of $\alpha$. For each replication, we compute the estimated average causal effects ($\hat{\bar\delta}$, $\hat{\bar\zeta}$, and $\hat{\bar\tau}$) and evaluate their estimation errors with respect to their theoretical values:
$$
\begin{aligned}
    \hat{\bar\delta} - \bar{\delta} &= \hat{\bar\delta} - \alpha, \\
    \hat{\bar\zeta} - \bar{\zeta} &= \hat{\bar\zeta} - 3, \\
    \hat{\bar\tau} - \bar\tau &= \hat{\bar\tau} - (\alpha + 3).
\end{aligned}
$$

The distributions of these estimation errors, obtained across the replications, are visualized as violin plots in Figure~\ref{fig:gaussian-mc-alpha} for the indirect effect (top), the direct effect (middle), and the total effect (bottom). The methods to compute these quantities are shown in columns. For small values of $\alpha$, corresponding to minimal separation between the two treatment groups, the indirect effect $\hat{\bar\delta}$ is slightly overestimated by transport-based methods. As $\alpha$ increases, the variance of the estimation error grows for the causal mediation (CM) method. Methods based on transport-based counterfactuals remain approximately unbiased for all values of $\alpha$, though their estimation error variance also increases with the distance between group means. A similar pattern is observed for the direct effect $\hat{\bar\zeta}$, where transport-based estimates exhibit increasing bias (under-estimation) and variance with larger values of $\alpha$. Interestingly, the behavior of the total effect $\hat{\bar\tau}$ departs from this pattern. With the exception of the penalized transport approach (SKH), the variance of the estimation error for the total effect remains stable as $\alpha$ increases. This suggests that the opposing biases observed in the estimation of direct and indirect effects tend to cancel out, yielding robust total effect estimates.


\begin{figure}[htb!]
    \centering
    \includegraphics[width=\linewidth]{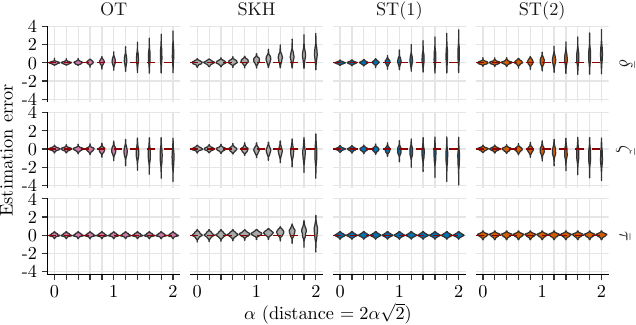}
    \caption{Estimation error of the average causal effects ($\bar{\delta}$,  $\bar{\zeta}$, $\bar{\tau}$) across 200 replicated samples ($n = 500$), in the Gaussian case with two mediators (toy example of section~\ref{app:toy}), for increasing values Of $\alpha$.}
    \label{fig:gaussian-mc-alpha}
\end{figure}

\subsubsection*{Varying the proportion of untreated}

We now turn to the baseline parameter setting ($\alpha=1$) and explore the impact of sample imbalance by varying the proportion of untreated individuals. Specifically, we vary the proportion $n_0/n$ from 5\% to 95\% in increments of 5 percentage points. As in Section~\ref{app:toy-alpha}, we conduct Monte Carlo simulations at each level of imbalance, computing the estimated average indirect, direct, and total causal effects ($\hat{\bar\delta}$, $\hat{\bar\zeta}$, and $\hat{\bar\tau}$, respectively). It is important to note that changing the proportion of untreated individuals does not affect the theoretical values of these quantities, which remain the same as those defined in the balanced (baseline) setting.

The distributions of estimation errors are shown in Figure~\ref{fig:gaussian-mc-prop} for the average indirect effect (top), the average direct effect (middle), and the total effect (bottom).

Overall, across all methods and causal effects, the variances of the causal effect estimators tend to be higher when the proportion of $n_0$ is either very low or very high (around 5\% or 95\%). This stems from the limited reliability of the outcome models $\hat{\mu}_0$ and $\hat{\mu}_1$ when one of the treatment groups is underrepresented. 


In the OT approach, the transport plan relies on theoretical means and covariances within the treated and untreated groups. Therefore, the observed changes in variance and bias of the estimation errors for the three causal effects stem solely from the estimation of the RF models in groups 0 and 1. For $\hat{\bar\delta}$, only $\hat{\mu}_0$ is used, so increasing $n_0$ reduces both variance and bias. For $\hat{\bar\zeta}$, both $\hat{\mu}_0$ and $\hat{\mu}_1$ are required; as $n_0$ increases, $n_1$ decreases, leading to higher variance at extreme proportions (e.g., 5\% or 95\%). The same pattern holds for the total effect $\hat{\bar\tau}$, which also depends on both models. However, the bias of $\hat{\bar\tau}$ remains close to zero across all proportions of $n_0$ because the overestimation of $\hat{\bar\delta}$ and the underestimation of $\hat{\bar\zeta}$ cancel each other out.

The entropic-regularized optimal transport approach (SKH) yield results similar to OT in terms of bias and variance of estimation errors across all three causal effects, with one exception: for $\hat{\bar\tau}$, SKH exhibit the highest variance when the proportion of $n_0$ is 95\%, whereas OT shows the highest variance at 5\%. Similar patterns are observed for the ST approaches (both ST(1) and ST(2)), where the behavior of biases and variances closely matches that of OT across all three causal effects.

\begin{figure}[htb!]
    \centering
    \includegraphics[width=\linewidth]{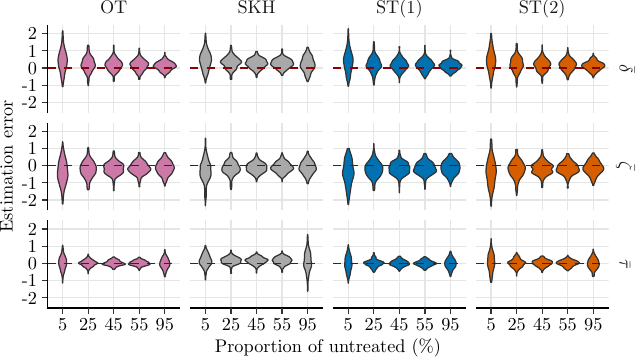}
    \caption{Estimation error of the average causal effects ($\bar{\delta}$, $\bar{\zeta}$, $\bar{\tau}$) across 200 replicated samples ($n = 500$), in the Gaussian case with two mediators (toy example of section~\ref{app:toy}), for increasing proportion of untreated.}
    \label{fig:gaussian-mc-prop}
\end{figure}

\section{Toy example with 3 mediators}\label{app:toy-example-three-mediators}

We simulate a dataset consisting of a binary treatment indicator $A \in \{0, 1\}$, a binary outcome $Y \in \{0, 1\}$, and three covariates: two continuous variables $X_1, X_2 \in \mathbb{R}$, and one categorical variable $X_3 \in \{\text{A}, \text{B}, \text{C}\}$. For individuals in group 0 ($A = 0$), the continuous covariates $(X_1, X_2)$ are drawn from a bivariate normal distribution with mean $\mu_0 = (-1,-1)$ and covariance $\Sigma_0 = 1.2^2 \begin{pmatrix} 1 & 0.5 \\ 0.5 & 1 \end{pmatrix}$. For individuals in group 1 ($A = 1$), we use $\mu_1 = (1.5, 1.5)$ and $\Sigma_1 = 0.9^2 \begin{pmatrix} 1 & -0.4 \\ -0.4 & 1 \end{pmatrix}$, inducing differences in both location and correlation structure across groups.

The categorical covariate $X_3$ is generated conditionally on $(A,X_1,X_2)$, following a multinomial logistic model. Letting $l_{\text{A}}, l_{\text{B}}, l_{\text{C}}$ denote the unnormalized logits for each level of $X_3$, we define:%
$$
\begin{cases}
l_{\text{A}} &= 0.5 + 0.3 X_1 - 0.4 X_2 + 0.2 A, \\
l_{\text{B}} &= -0.3 + 0.5 X_2 - 0.2 X_1 - 0.1 A, \\
l_{\text{C}} &= 0,
\end{cases}
$$
with probabilities obtained via softmax normalization, for $k$ in $\{\text{A}, \text{B}, \text{C}\}$:%
$$
\mathbb{P}(X_3 = k) = \frac{\exp(l_k)}{\exp(l_{\text{A}}) + \exp(l_{\text{B}}) + \exp(l_{\text{C}})}.
$$

The outcome $Y$ is simulated using logistic regression, with functional forms differing by treatment group. We define the logit model:
$$
\begin{cases}
\eta_0 = -0.2 + 0.6 X_1 - 0.6 X_2 + \gamma(X_3), & \text{for } A=0,\\
\eta_1 = 0.1 - 0.2 X_1 + 0.8 X_2 + \gamma(X_3), & \text{for } A=1,
\end{cases}
$$
where the contribution of the categorical covariate is given by:
$$
\gamma(X_3, A) = 
\begin{cases}
0.2 \cdot \mathrm{1}_{\{X_3 = \text{B}\}} - 0.3 \cdot \mathrm{1}_{\{X_3 = \text{C}\}}, & \text{for } A=0,\\
-0.2 \cdot \mathrm{1}_{\{X_3 = \text{B}\}} - 0.1 \cdot \mathrm{1}_{\{X_3 = \text{C}\}}& \text{for } A=1.
\end{cases}
$$

The outcome is drawn from a Bernoulli distribution with success probability $\mathbb{P}[Y = 1] = \text{logit}^{-1}(\eta_A)$.

We generate $n_0 = 400$ observations in group~0 and $n_1 = 200$ observations in group~1.
\end{document}